\DeclareSymbolFont{AMSb}{U}{msb}{m}{n}
\DeclareMathSymbol{\N}{\mathbin}{AMSb}{"4E}
\DeclareMathSymbol{\Z}{\mathbin}{AMSb}{"5A}
\DeclareMathSymbol{\R}{\mathbin}{AMSb}{"52}
\DeclareMathSymbol{\Q}{\mathbin}{AMSb}{"51}
\DeclareMathSymbol{\erert}{\mathbin}{AMSb}{"50}
\DeclareMathSymbol{\I}{\mathbin}{AMSb}{"49}
\DeclareMathSymbol{\C}{\mathbin}{AMSb}{"43}
\definecolor{gray}{gray}{0.4}
\newcommand{\remove}[1]{}
\newtheorem{theorem}{Theorem}[section]
\newtheorem{lemma}[theorem]{Lemma}
\newtheorem{definition}[theorem]{Definition}
\newtheorem{claim}[theorem]{Claim}
\newtheorem{observation}[theorem]{Observation}
\newtheorem{example}[theorem]{Example}
\newcommand{\1}{\mathbbm{1}}
\newcommand{\AAA}{\mathcal A}
\newcommand{\BBB}{\mathcal B}
\newcommand{\DDD}{\mathcal D}
\newcommand{\FFF}{\mathcal F}
\newcommand{\eps}{\varepsilon}
\newcommand{\e}{\mathrm{e}}
\def\Q{\operatorname*{\mathbb{Q}}}
\def\OPT{\mathop{\rm{OPT}}\nolimits}
\newcommand{\thickhline}{
    \noalign {\ifnum 0=`}\fi \hrule height 1pt
    \futurelet \reserved@a \@xhline
}
\newcolumntype{"}{@{\hskip\tabcolsep\vrule width 1pt\hskip\tabcolsep}}
\newlength{\fboxhsep}
\newlength{\fboxvsep}
\newlength{\fboxtoprule}
\newlength{\fboxbottomrule}
\newlength{\fboxleftrule}
\newlength{\fboxrightrule}
\def\@frameb@xother#1{%
  \@tempdima\fboxtoprule
  \advance\@tempdima\fboxvsep
  \advance\@tempdima\dp\@tempboxa
  \hbox{%
    \lower\@tempdima\hbox{%
      \vbox{%
        \hrule\@height\fboxtoprule
        \hbox{%
          \vrule\@width\fboxleftrule
          #1%
          \vbox{%
            \vskip\fboxvsep
            \box\@tempboxa
            \vskip\fboxvsep}%
          #1%
          \vrule\@width\fboxrightrule}%
        \hrule\@height\fboxbottomrule}%
    }%
  }%
}
\long\def\fboxother#1{%
  \leavevmode
  \setbox\@tempboxa\hbox{%
    \color@begingroup
    \kern\fboxhsep{#1}\kern\fboxhsep
    \color@endgroup}%
  \@frameb@xother\relax}
\begin{document}

\begin{titlepage}

\title{Privately Learning Thresholds: Closing the Exponential Gap}

\author{
Haim Kaplan\thanks{Tel Aviv University and Google Research. \tt haimk@post.tau.ac.il}
\and
Katrina Ligett\thanks{School of Computer Science and Engineering, Hebrew University of Jerusalem, Jerusalem 91904, Israel. Email:
  \texttt{katrina@cs.huji.ac.il}. NSF grants CNS-1254169 and CNS-1518941, US-Israel Binational Science Foundation grant 2012348, Israel Science Foundation (ISF) grant \#1044/16, United States Air Force and DARPA under contract FA8750-16-C-0022, and the Federmann Cyber Security Center in conjunction with the Israel national cyber directorate. Any opinions,
findings and conclusions or recommendations expressed in this material
are those of the author(s) and do not necessarily reflect the views of
the United States Air Force and DARPA.
}
\and
Yishay Mansour\thanks{Tel Aviv University and Google Research. \tt mansour.yishay@gmail.com}
\and
Moni Naor\thanks{Department of Computer Science and Applied Mathematics,
  Weizmann Institute of Science,  Rehovot 76100, Israel. Email:
  \texttt{moni.naor@weizmann.ac.il}. Supported in part by grant  from the Israel
  Science Foundation (no.\ 950/16) and the US-Israel Binational Science Foundation grant 2012348. Incumbent of the Judith Kleeman Professorial
  Chair.}
\and
Uri Stemmer\thanks{Ben-Gurion University and Google Research. \texttt{u@uri.co.il}. Partially supported by the Israel Science Foundation (grant No.\ 1871/19).}
}

\date{\today}
\maketitle
\setcounter{page}{0} \thispagestyle{empty}

\begin{abstract}
We study the sample complexity of learning threshold functions under the constraint of {\em differential privacy}. It is assumed that each labeled example in the training data is the information of one individual and we would like to come up with a generalizing hypothesis $h$ while guaranteeing differential privacy for the individuals. Intuitively, this means that any single labeled example in the training data should not have a significant effect on the choice of the hypothesis. This problem has received much attention recently; unlike the non-private case, where the sample complexity is {\em independent of the domain size} and just depends on the desired accuracy and confidence, for private learning the sample complexity must depend on the domain size $X$ (even for approximate differential privacy). Alon et al.\ (STOC 2019) showed a lower bound of $\Omega(\log^*|X|)$  on the sample complexity and
Bun et al.\ (FOCS 2015)  presented an approximate-private learner with sample complexity $\tilde{O}\left(2^{\log^*|X|}\right)$.
In this work we reduce this gap significantly,
almost settling the sample complexity. We first present a new upper bound (algorithm) of $\tilde{O}\left(\left(\log^*|X|\right)^2\right)$ on the sample complexity  and then present an improved version with sample complexity  $\tilde{O}\left(\left(\log^*|X|\right)^{1.5}\right)$.

Our algorithm is constructed for the related {\em interior point problem}, where the goal is to find a point between the largest and smallest input elements. It is based on selecting an input-dependent hash function and using it to embed the database into a  domain whose  size is reduced  logarithmically; this results in a new database, an interior point of which can be used to generate an interior point in the original database in a differentially private manner.
\end{abstract}

\end{titlepage}


\section{Introduction}

One of the most fundamental tasks in machine learning is that of learning 1-dimensional {\em threshold functions}. In this task,
we are given a collection of %
examples, called {\em training data},
where every example is taken from a finite domain $X\subseteq\R$ and is labeled by a fixed (but unknown) {\em threshold function}. (A threshold function is a binary function that evaluates to $1$ on some prefix of the domain.\footnote{Let $X\subseteq\R$. A threshold function $f$ over $X$ is specified by an element $u \in X$ so that $f(x)=1$ if $x \leq u$ and $f(x)=0$ for $x > u$.})
The goal is to generalize the training data into a hypothesis $h$ that predicts the labels of unseen examples.
In this paper we study this problem under the constraint of {\em differential privacy}. It is assumed that each labeled example in the training data is the information of one individual (e.g., every example might represent blood sugar level, and the label might indicate whether the individual has diabetes). We would like to come up with a generalizing hypothesis $h$ while guaranteeing differential privacy for the individuals. Intuitively, this means that any single labeled example in the training data should not have a significant effect on the choice of $h$
and in particular given the hypothesis it should be hard to distinguish whether an individual's data was used or not.

An important and natural measure for the efficiency of learning algorithms is the amount of data needed to produce a good hypothesis, a.k.a.\ the {\em sample complexity}. Without privacy constraints, learning thresholds is easy, and has a `constant' sample complexity that
depends only on the desired level of accuracy and confidence, but not on the domain size (i.e., one can simply set $X=\R$). With differential privacy, however, understanding the sample complexity of this supposedly simple problem has turned out to be quite challenging. Already in the first work on differentially private learning, Kasiviswanathan et al.~\cite{KLNRS08} presented a generic construction (obtained as a private variant of Occam's Razor~\cite{BlumerEHW87}) stating that the sample complexity of privately learning threshold functions over a domain $X$ is at most $O(\log|X|)$. That is, unlike the non-private sample complexity, %
the upper bound of~\cite{KLNRS08} grows logarithmically with the size of the domain.

This gap between the sample complexity of learning thresholds with or without privacy has received much attention since then.
For the case of {\em pure}-differential privacy (a strong variant of differential privacy), Feldman and Xiao~\cite{FX14} showed that this gap is unavoidable, and that every pure-private learner for thresholds over a domain $X$ must have sample complexity $\Omega(\log|X|)$. Beimel et al.~\cite{BNS13b} showed that the lower bound of~\cite{FX14} can be circumvented by relaxing the privacy requirement from pure to approximate-differential privacy. Specifically, they presented an approximate-private learner for threshold functions over a domain $X$ with sample complexity $\tilde{O}\left(8^{\log^*|X|}\right)$, a dramatic improvement in asymptotic terms over $\Theta(\log|X|)$. Bun et al.~\cite{BNSV15} then presented a different approximate-private learner with improved sample complexity of $\tilde{O}\left(2^{\log^*|X|}\right)$, and another different algorithm with similar sample complexity was presented by~\cite{BunDRS18}. 
Furthermore, Bun et al.~\cite{BNSV15} showed a lower bound of $\Omega(\log^*|X|)$ on the sample complexity of every approximate-private learner for thresholds that outputs a hypothesis that is itself a threshold function (such a learner is called {\em proper}). 
Recently, Alon et al.~\cite{AlonLMM19} showed that a lower bound of $\Omega(\log^*|X|)$ holds even for {\em improper} learners, i.e., for learners whose output hypothesis is not restricted to being a threshold function. To summarize, our current understanding of the task of privately learning thresholds places its sample complexity somewhere between $\Omega(\log^*|X|)$ and $\tilde{O}\left(2^{\log^*|X|}\right)$, a gap which is exponential in $\log^*|X|$, where at least three different algorithms are known with sample complexity $2^{O(\log^*|X|)}$.

In this work we reduce this gap significantly,
almost settling the sample complexity,  and present a new upper bound of $\tilde{O}\left(\left(\log^*|X|\right)^2\right)$. We then present an improvement of the algorithm with sample complexity  $\tilde{O}\left(\left(\log^*|X|\right)^{1.5}\right)$.
While the main goal of the work is to resolve the asymptotic sample complexity of the %
problem, our algorithm is computationally efficient and may be useful more generally as a technique for shrinking the data domain.

\paragraph{The interior point problem:} In order to obtain their results, Bun et al.~\cite{BNSV15} presented reductions in both directions between (properly) learning thresholds and the \emph{interior point problem}. Given a database $S$ containing (unlabeled) elements from $X$, the interior point problem asks for an element of $X$ between the smallest and largest elements in $S$.
\begin{definition}
An algorithm $\AAA$ solves the {\em interior point problem} over a domain $X$ with sample complexity $n$ and failure probability $\beta$ if for every database $S\in X^n$,
$$\Pr[\min S \leq \AAA(S)\leq \max S]\geq1-\beta,$$
where the probability is taken over the coins of $\AAA$. %
We call a solution $x$ with $\min S\leq x\leq\max S$ an {\em interior point} of $S$. Note that $x$ need not be a member of the database $S$.
\end{definition}
To see the equivalence between the interior point problem and (properly) learning thresholds, let $S\subseteq X$ be an input for the interior point problem, and construct a labeled database $D$ containing all elements of $S$ where the smallest $|S|/2$ elements are labeled by 1 and the largest $|S|/2$ elements are labeled by 0. Now consider applying a learner for thresholds on the database $D$, that returns a hypothesis $h$ that is itself a threshold function. The point at which $h$ switches from 1 to 0 must be an interior point of the database $S$. For the other direction, let $D$ be a labeled database, and suppose that $D$ contains both ``a lot'' of elements with the label 0 and ``a lot'' of elements with the label 1 (otherwise either $h\equiv0$ or $h\equiv1$ is a good output). Now construct an unlabeled database $S$ containing (say) the largest $|D|/10$ elements in $D$ which are labeled as 1 and the smallest $|D|/10$ elements in $D$ which are labeled as 0. Now consider applying an algorithm for the interior point problem on $S$ to obtain an outcome $y$, and define the hypothesis $f(x)=\1\{x\leq y\}$. It can be shown that this hypothesis has small error on $D$, completing the equivalence. %
In fact, Bun at al.~\cite{BNSV15} showed that the following four problems are equivalent (up to constant factors) under differential privacy:
\begin{enumerate}
\item The interior point problem.
\item Learning of threshold functions (properly).
\item Distribution learning (with respect to Kolmogorov distance).\footnote{Distribution learning is a fundamental problem in statistics. Given $n$ i.i.d.\ samples from an unknown distribution $\DDD$, the goal is to produce a distribution $\DDD'$ with small distance to $\DDD$.}
\item Query release for threshold functions.\footnote{Given a set $Q$ of queries $q:X^n\rightarrow\R$ the query release problem for $Q$ is to output accurate answers to all queries in Q. In the case of threshold functions, each query is specified by a domain element $x\in X$, and asks for the number of input element that are smaller than $x$.}
\end{enumerate}
Hence, showing new upper or lower bounds on the sample complexity of privately solving the interior point problem immediately results in new upper or lower bounds on the sample complexity of privately learning thresholds (properly), private distribution learning (w.r.t.\ Kolmogorov distance), and private query release for threshold functions.

\subsection{Overview of Our (First) Construction}
We design a new algorithm for privately solving the interior point problem with sample complexity $\tilde{O}\left(\left(\log^*|X|\right)^2\right)$. At a high level, the algorithm works by embedding the input elements from the domain $X$ in a smaller domain of size $\log|X|$, in such a way that guarantees that every interior point of the embedded elements can be (privately) translated into an interior point of the input elements. The algorithm is then applied recursively to identify an interior point of the embedded elements.

We now give an informal overview of the construction.
Let $X$ be a totally ordered domain, %
and consider a binary tree $T$ with $|X|$ leaves, where every leaf is identified with an element of $X$. We abuse notation and use $u_\ell$ (for $1\leq\ell\leq|X|$) to denote both the $\ell$th smallest element in $X$ and to denote the $\ell$th left-most leaf of the tree $T$.
Given an input database $S\in X^n$ containing $n$ elements from $X$, we assign {\em weights} to the nodes of $T$ in a bottom up manner. First, the weight $w(u_{\ell})$ of every leaf $u_{\ell}$ is its multiplicity in $S$. That is, $w(u_{\ell})=|\{x\in S: x=u_{\ell}\}|$. Now define the weight of every node $v$ in $T$ as the sum of the weights of its two children. For simplicity, we will assume throughout the introduction that every leaf has weight either 0 or 1 (i.e., we assume that no element appears twice in $S$; this is not true as we recurse).

We now explain how to generate the input database to the next recursive call. That is, how we embed the input elements in a domain of size $\log|X|$. To that end, suppose that we have generated  a {\em path} $\pi$ from the root to a leaf $u_{\pi}$ with positive weight (by our simplifying assumption this leaf has weight $1$).
Had we been able to compute such a path {\em in a private manner} then our task would be completed, since if $w(u_{\pi})>0$ then $u_{\pi}\in S$ and hence $u_{\pi}$ is an interior point of $S$, and there was no need to apply the recursion. Therefore we do not argue that we can release the path while maintaining differential privacy. Nevertheless, assume (for now) that such a path were given to us ``for free'', and instead of ending the computation by returning the leaf at the end of this path, we use it in order to construct the input database to the next recursive call.

Let $v$ be a node in the path $\pi$, let $v_{\scriptscriptstyle\rm next}$ be the next node in $\pi$, and let $v_{\scriptscriptstyle\rm other}$ denote the other child of $v$ in the tree $T$. We say that $m$ input points {\em fall off} the path $\pi$ at the node $v$ if $w(v_{\scriptscriptstyle\rm other})=m$. Input points that {\em fall off} the path $\pi$ at the node $v$ are input points that belong to the subtree rooted at $v$, but not to the subtree rooted at the next node in $\pi$.
As $\pi$ starts at a node with weight $n$ (the root of the tree) and ends at a leaf with weight 1 (the leaf $u_{\pi}$), we get that overall $n-1$ input points fall off the path $\pi$.

One can think of the path as acting as a hash function of the domain into a range that is logarithmic in its size, where a point is mapped to the level where it falls off the path. The key property we need is that adjacent databases will be embedded to adjacent databases. Initialize an empty database $D$ (which will be the input database to the next recursive call).
We add elements to $D$ by following the path $\pi$ from the root down. If $m$ points fall off the path at level $\ell$ of the tree, then we add $m$ copies of $\ell$ to the database $D$. Observe that $D$ contains elements from $[\log|X|]$, i.e., the domain size is reduced logarithmically.

Now suppose that (by recursion) we obtained an interior point $\ell^*$ of the database $D$, and let $v_{\pi}^{\ell^*}$ denote the node at level $\ell^*$ in the path $\pi$. Furthermore, let us assume for simplicity that there are in fact ``many'' points in $D$ that are bigger than $\ell^*$ and ``many'' points which are smaller than $\ell^*$, and so $\ell^*$ is a ``deep'' interior point of $D$.\footnote{Here ``many'' should be thought of as $\approx\frac{1}{\eps}\log\frac{1}{\delta}$; in the actual algorithm we ensure that these conditions indeed hold by `trimming' these elements from the database before the recursive call.} This means that ``many'' elements from $S$ fall off the path $\pi$ before level $\ell^*$ and that ``many'' elements fall off after level $\ell^*$ of the tree. Note that since ``many'' elements from $S$ fall off $\pi$ after level $\ell^*$, we get that the weight of $v_{\pi}^{\ell^*}$ (the node at level $\ell^*$ of the path $\pi$) is ``large''. %

Now let $v^*$ be a node at level $\ell^*$ of the tree that is chosen according to its weight in a differentially private manner (such a ``heavy'' node exists since $v_{\pi}^{\ell^*}$  is ``heavy'', and can be privately identified using standard {\em stability based} techniques). Now, since ``many'' elements from $S$ fall of the path $\pi$ before level $\ell^*$, it cannot be the case that all of the points in $S$ belong to the subtree rooted at $v^*$. Indeed, if that were the case then the path $\pi$ must go through $v^*$, i.e., $v^*=v_{\pi}^{\ell^*}$, but then all the elements of $S$ that fall off before level $\ell^*$ cannot belong to the subtree rooted at $v^*$. (For simplicity we assumed here that these elements fall {\em strictly} before level $\ell^*$; in the actual algorithm we handle such a case differently.)

The final output is then chosen from  one of two descendants of  $v^*$:  the left-most and right-most leaves of the sub-tree rooted at $v^*$, denoted as $v_{\scriptscriptstyle\rm left}$ and $v_{\scriptscriptstyle\rm right}$, respectively.\footnote{In the actual algorithm we also consider two additional descendants of  $v^*$.}
To see that at least one of these leafs is a good output, observe that since  $v^*$ has positive weight, then some of the elements of $S$ must belong to the subtree rooted at $v^*$. That is, the database $S$ contains elements between $v_{\scriptscriptstyle\rm left}$ and $v_{\scriptscriptstyle\rm right}$. In addition, since the subtree rooted at $v^*$ does not contain {\em all} elements of $S$, then either $S$ contains elements which are bigger than $v_{\scriptscriptstyle\rm right}$, in which case $v_{\scriptscriptstyle\rm right}$ is an interior point of $S$, or $S$ contains elements which are smaller than $v_{\scriptscriptstyle\rm left}$, in which case $v_{\scriptscriptstyle\rm left}$ is an interior point of $S$.

\paragraph{Selecting the path.} In the above description we assumed that the path $\pi$ is given to us ``for free''. While the path $\pi$ contains sensitive information and cannot be released privately, we show that it is possible to select a path $\pi$ randomly %
in such a way that guarantees the following condition. Let $S$ and $S'$ be two neighboring databases, and let $\DDD$ denote the distribution on databases that results from sampling a path $\pi$ in the execution on $S$ and embedding the points from $S$ using $\pi$. Similarly let $\DDD'$ denote this distribution w.r.t.\ $S'$. Then we show that these two distributions are ``close'' in the sense that for any database $D$ in the support of $\DDD$ (with non-negligible probability mass) there exists a {\em neighboring} database $D'$ with roughly the same probability mass under $\DDD'$. So, intuitively, while the paths $\pi$ that appear in the executions on $S$ and on $S'$ might be very different, we show that the resulting distributions on the databases for the next recursive call are ``similar''. This can be formalized to show that the algorithm satisfies differential privacy.

We obtain the following theorem, which we present in Section~\ref{sec:TreeLog}.

\begin{theorem}\label{thm:introPowerTwo}
Let $X$ be a totally ordered domain. There exists an $(\eps,\delta)$-differentially private algorithm that solves the interior point problem on $X$ with success probability 9/10 and sample complexity $n=\tilde{O}\left(\frac{1}{\eps}\cdot\log(\frac{1}{\delta})\cdot(\log^*|X|)^2\right).$
\end{theorem}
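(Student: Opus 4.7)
The plan is to prove Theorem~\ref{thm:introPowerTwo} by instantiating the recursive construction sketched in the overview, with the recursion unwinding $\log^*|X|$ times and each level contributing an additive sample-complexity overhead of $\tilde{O}(\frac{1}{\eps'}\log\frac{1}{\delta'})$ for appropriately shrunken per-level parameters $\eps',\delta'$. Concretely, I would define a single subroutine $\AAA$ that, on input $S\in X^n$, (i) builds a balanced binary tree $T$ over $X$ and computes weights from the multiplicities of $S$; (ii) samples a root-to-leaf path $\pi$ according to a randomized rule described below; (iii) constructs the embedded database $D$ over the domain $[\log|X|]$ by placing, for each element of $S$, a copy of the level at which it falls off $\pi$; (iv) trims the $\Theta(\frac{1}{\eps}\log\frac{1}{\delta})$ largest and smallest elements of $D$ to guarantee that any interior point found recursively is ``deep''; (v) recursively invokes $\AAA$ on $D$ to obtain a level $\ell^*$; (vi) uses a stability-based selection mechanism to identify a heavy node $v^*$ at level $\ell^*$; and (vii) outputs one of the two extremal leaves of the subtree rooted at $v^*$, chosen by a second stability-based step.

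The main obstacle, and the step I would tackle first, is the privacy analysis of the path-plus-embedding construction in (ii)--(iii). The point $\pi$ cannot be released privately because its leaf is literally an element of $S$, so I would not try; instead I would design the distribution over $\pi$ so that when $S$ and $S'$ differ in a single element, the induced distributions on the embedded databases $D$ and $D'$ admit a coupling under which $D$ and $D'$ themselves differ in a single element. To achieve this I would sample $\pi$ by a top-down random walk where at each internal node $v$ the walk descends to child $u$ with a probability that is a smoothly-increasing function of $w(u)$ and is insensitive, up to a multiplicative $e^{\eps_0}$ factor, to a change of $1$ in any child weight. A natural candidate is taking a noisy plurality at each step, or equivalently sampling proportional to $w(u)+\mathrm{Lap}(1/\eps_0)$-style scores, calibrated so that both (a) the per-step ratio of transition probabilities between the $S$-process and the $S'$-process is bounded by $e^{\eps_0}$ except on a $\delta_0$-fraction of walks, and (b) under the coupling the path-dependent multisets $D,D'$ agree everywhere except on the single level at which the swapped element falls off. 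Composing the distortion across the $\log|X|$ walk-steps (via advanced composition for the privacy accounting, and a union bound for the coupling failure) yields $(\eps_0\log|X|, \delta_0\log|X|)$-closeness of $D$ and $D'$, which I would then absorb into the per-level privacy budget.

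Given the privacy of the embedding, the remainder of the privacy argument is modular: the trimming step in (iv) is a deterministic post-processing; the recursive call on $D$ is private by induction on $|X|$; and the two selection steps in (vi)--(vii) are standard stability-based ``heavy node'' selections, private with sample cost $O(\frac{1}{\eps'}\log\frac{1}{\beta'\delta'})$. For utility, I would argue that since the path $\pi$ accumulates weight $n-1$ falling off across its $\log|X|$ levels, the trimmed $D$ still has $\Omega(n/\log|X|)$ elements on each side of any deep interior point, which is enough to make $\ell^*$ useful; and that once $\ell^*$ is deep, the node $v_\pi^{\ell^*}$ is a heavy level-$\ell^*$ node, so a heavy $v^*$ is selectable, and by the fall-off accounting $v^*$'s subtree cannot contain all of $S$, so one of its extremal leaves is an interior point of $S$. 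A final inexpensive private ``choose between the two leaves'' step (e.g., test which one has points of $S$ strictly above/below by comparing noisy counts) produces the output.

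Finally, for the sample complexity, let $n(k)$ denote the number of samples required on domains of size $k$. Each level contributes additive overhead $c\cdot\tilde{O}(\frac{1}{\eps'}\log\frac{1}{\delta'})$ from the stability-based selections and trimming, so I obtain the recursion $n(|X|) \le n(\log|X|) + c\cdot\tilde{O}(\frac{1}{\eps'}\log\frac{1}{\delta'})$. Unrolling this $\log^*|X|$ times and picking $\eps'=\eps/\log^*|X|$, $\delta'=\delta/\log^*|X|$ so that advanced composition recovers an overall $(\eps,\delta)$-guarantee, gives total sample complexity $\tilde{O}(\frac{1}{\eps}(\log^*|X|)^2\log\frac{1}{\delta})$, matching the theorem. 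The bookkeeping that I expect to require the most care is tracking the failure probability and the coupling-loss $\delta_0\log|X|$ across the $\log^*|X|$ levels, because the per-level $\delta_0$ needs to be polynomially small in $\log^*|X|$ and in the final $\delta$, which is where the $\tilde{O}(\cdot)$ hides logarithmic factors.
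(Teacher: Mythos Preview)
Your high-level architecture matches the paper's, but the privacy analysis of the path-sampling step has a real gap that would keep the argument from closing.

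You account for the walk's privacy by composing an $e^{\eps_0}$ distortion across all $\log|X|$ steps, obtaining per-level closeness $(\eps_0\log|X|,\delta_0\log|X|)$ to be absorbed into a budget of $\eps'=\eps/\log^*|X|$. But at the top level $\log|X|$ is not bounded by any function of $n$ (take $|X|=2^{2^{100}}$ with $n$ in the thousands). Forcing $\eps_0\log|X|\le\eps'$ makes $\eps_0$ so small that the walk is essentially unbiased and need not track the data at all, so utility collapses. Relatedly, your coupling claim (b) presumes the sampled paths coincide; once they diverge at some node, the multisets $D,D'$ can differ in many coordinates, not one, so ``neighboring $D,D'$'' does not follow from a per-step $e^{\eps_0}$ bound alone.

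The missing idea is that the privacy loss of the walk should scale with $\log n$, not $\log|X|$. The paper samples the child $v_b$ with probability proportional to $\exp(\eps\cdot w(v_b))$ and \emph{halts the path once the current weight drops below $t=\Theta(\tfrac{1}{\eps}\log\tfrac{1}{\delta})$}. Along any path of non-negligible mass there are then at most $O(\log n)$ nodes where the two children's weights are within $\tfrac{1}{\eps}\log\tfrac{1}{\delta}$ of each other (each such step cuts the remaining weight by a constant factor, and the walk stops at weight $t$); at every other node the heavier child is taken with probability $\geq 1-\delta$ and costs nothing. This gives per-level loss $O(\eps\log n)$, independent of $|X|$. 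Instead of coupling paths to be equal, the paper shows that every strong path $\pi$ for $S$ has a set $\Pi'$ of extensions in the $S'$-process with comparable total mass, and that for each such extension the induced $D,D'$ are neighbors. Two smaller fixes you would also need: trim $S$ before building the tree (not $D$ afterward), so the final answer is a deep interior point of the original input; and use four candidate leaves rather than two, since if the selected $v^*$ happens to contain all of $\hat S$, neither extremal leaf is an interior point but one of the two inner leaves at the split between $v^*$'s children is.
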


In Section~\ref{sec:decompose} we present an improved analysis of (a variant of) this algorithm, which results in a sample complexity of $n=\tilde{O}\left(\frac{1}{\eps}\cdot\log^{1.5}(\frac{1}{\delta})\cdot(\log^*|X|)^{1.5}\right)$.
Using the equivalence of Bun et al.~\cite{BNSV15}, our new algorithm for the interior point problem results in new algorithms for privately learning threshold functions, for private distribution learning (w.r.t.\ Kolmogorov distance), and for private query release of threshold functions, all with sample complexity proportional to $\tilde{O}\left(\left(\log^*|X|\right)^{1.5}\right)$. Previous algorithms for these tasks had sample complexity proportional to $\tilde{O}\left(2^{\log^*|X|}\right)$.


\section{Preliminaries}

An algorithm operating on databases is said to preserve differential privacy if changing a single record of its input database does not significantly change the output
distribution of the algorithm. Intuitively, this means that whatever is learned about an individual could also
be learned with her data arbitrarily modified (or without her data). Formally:

\begin{definition}[\cite{DMNS06,DKMMN06}]\label{def:dp}
Two databases are called {\em neighboring} if they differ in a single entry. A randomized algorithm $\AAA$ is $(\eps,\delta)$-{\em differentially private} if for every two neighboring databases $S,S'$ and for any event $T$,
$$\Pr[\AAA(S)\in T]\leq e^{\eps}\cdot \Pr[\AAA(S')\in T]+\delta.$$ 
If $\delta==0$ the type of privacy guarantee is  {\em pure} and if $0<\delta<1$ it is {\em approximate}. 
\end{definition}

One of the key reasons differential privacy is such a powerful notion is the composition properties it enjoys. In particular, in ``advanced composition" the adaptive application of $k$ mechanisms, each of which is $(\eps,\delta)$-differentially private, satisfies $\approx(\eps\sqrt{k},\delta k)$-differential privacy~\cite{DRV10}.
For background on differential privacy see Dwork and Roth~\cite{DR14} or Vadhan~\cite{Vadhan2016}.
\subsection{The Exponential and Choosing Mechanisms}
Let $X^*$ denote the set of all finite databases over a domain $X$. A quality function  $q:X^*\times \FFF \rightarrow \N$ defines an {\em optimization problem} over the domain $X$ and a finite solution set $\FFF$:  Given a database $S \in X^*$, choose $f\in\FFF$ that (approximately) maximizes $q(S,f)$. We say that the function $q$ has {\em sensitivity} $\Delta$ if for all neighboring databases $S$ and $S'$ and for all $f\in\FFF$ we have $|q(S,f)-q(S',f)|\leq \Delta$.

The Exponential Mechanism of McSherry and Talwar~\cite{McSherryTa07} solves such optimization problems by choosing a random solution where the probability of outputting any solution $f$ increases exponentially with its quality $q(S,f)$. Specifically, it outputs each $f \in \FFF$ with probability proportional to $\exp\left(\eps \cdot q(S,f) /(2 \Delta)\right)$. The privacy and utility of the mechanism are expressed as:

\begin{lemma}[\cite{McSherryTa07}] \label{prop:exp_mech}
The Exponential Mechanism is $(\eps,0)$-differentially private.
Let $q$ be a quality function with sensitivity at most $1$. Fix a database $S \in X^n$ and let $\OPT = \max_{f\in \FFF}\{q(S,f)\}$.
With probability at least $(1-\beta)$, the Exponential Mechanism outputs a solution $f$ with quality
$q(S,f)\geq\OPT-\frac{2}{\eps}\ln\left(\frac{|\FFF|}{\beta}\right)$.
\end{lemma}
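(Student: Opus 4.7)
The plan is to prove the privacy guarantee and the utility guarantee separately, in each case working directly from the mechanism's explicit output distribution $\Pr[\AAA(S) = f] = \exp(\eps q(S,f)/2)/Z(S)$, where $Z(S) = \sum_{f' \in \FFF} \exp(\eps q(S,f')/2)$ is the normalizing partition function. (Since the sensitivity is at most $1$, I take $\Delta = 1$ in the formula from the body of the text.)

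For the privacy claim, I would fix neighboring databases $S, S'$ and establish the pointwise bound $\Pr[\AAA(S) = f] \leq e^{\eps}\Pr[\AAA(S') = f]$ for every $f \in \FFF$, from which the event-level guarantee follows by summing over the outcomes in the event $T$. Writing the ratio as a product, the factor coming from the unnormalized weights is $\exp(\eps(q(S,f) - q(S',f))/2) \leq e^{\eps/2}$ by the sensitivity hypothesis, and the factor $Z(S')/Z(S)$ is likewise at most $e^{\eps/2}$ because the same term-by-term comparison applies inside every summand of the partition function. Multiplying the two factors gives the desired $e^{\eps}$.

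For the utility claim, I would let $B = \{f \in \FFF : q(S,f) < \OPT - \frac{2}{\eps}\ln(|\FFF|/\beta)\}$ denote the set of ``bad'' outputs and bound $\Pr[\AAA(S) \in B]$. The numerator $\sum_{f \in B} \exp(\eps q(S,f)/2)$ is at most $|B|\cdot \exp(\eps \OPT / 2)\cdot \beta/|\FFF|$ by the defining inequality of $B$, while the denominator satisfies $Z(S) \geq \exp(\eps \OPT/2)$ because any optimal $f^* \in \FFF$ already contributes a term of this size. The $\exp(\eps\OPT/2)$ factors cancel, leaving $\Pr[\AAA(S) \in B] \leq |B|\beta/|\FFF| \leq \beta$, exactly as claimed.

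The whole argument reduces to elementary algebra on the exponential weights, so I do not anticipate a genuine obstacle. The only delicate ingredient is the calibration of the factor $2$ in the exponent: in the privacy step it lets the sensitivity shift and the partition-function shift each cost $e^{\eps/2}$ so that they combine to $e^{\eps}$, and in the utility step the matching $-\frac{2}{\eps}\ln(|\FFF|/\beta)$ gap provides precisely the slack needed to absorb the union-bound-style loss of $|B|/|\FFF|$.
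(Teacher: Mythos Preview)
Your argument is correct and is the standard proof of the Exponential Mechanism's guarantees. Note, however, that the paper does not actually prove this lemma: it is stated as a citation to McSherry and Talwar~\cite{McSherryTa07} and used as a black box, so there is no in-paper proof to compare against. Your write-up matches the original derivation.
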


Note that the quality of the solution guaranteed by the Exponential Mechanism decreases with $\log|\FFF|$. For a sub family of low-sensitivity functions, called {\em bounded-growth} functions, this can be avoided by using an $(\eps, \delta)$-differentially private variant of the Exponential Mechanism called the \emph{Choosing Mechanism}, introduced by Beimel et al.~\cite{BNS13b}. A quality function with sensitivity at most $1$ is of {\em $k$-bounded-growth} if adding an element to a database can increase (by 1) the score of at most $k$ solutions, without changing the scores of other solutions. Specifically, it holds that
\begin{enumerate}
\item $q(\emptyset, f) = 0$ for all $f \in \FFF$,
\item If $S_2 = S_1 \cup \{x\}$, then $q(S_1, f) + 1 \ge q(S_2, f) \ge q(S_1, f)$ for all $f \in \FFF$, and
\item There are at most $k$ values of $f$ for which $q(S_2, f) = q(S_1, f) + 1$.
\end{enumerate}

\begin{example}
Consider a case where $\FFF=X$ and $q(S,f)=|\{x\in S : x=f\}|$. That is, the quality of a solution (=domain element) $f$ is its multiplicity in the database $S$. Note that with this quality function we are aiming to identify an element $f\in X$ with large multiplicity in $S$. Observe that this function is 1-bounded growth.
\end{example}

The Choosing Mechanism is a differentially private algorithm for approximately solving bounded-growth choice problems. The following lemma specifies its privacy and utility guarantees.

\begin{lemma}[\cite{BNS13b,BNSV15}]\label{lem:CM}
Let $\delta > 0$, and $0 < \eps\leq2$.
The Choosing Mechanism is $(\eps,\delta)$-differentially private.
Let the Choosing Mechanism be executed on a $k$-bounded-growth quality function, and on a database $S$ containing $n$ elements. Denote $\OPT = \max_{f\in \FFF}\{q(S,f)\}$. With probability at least $(1-\beta)$, the Choosing Mechanism outputs a solution $f$ with quality
$q(S,f)\geq\OPT-\frac{16}{\eps}\ln(\frac{4kn}{\beta\eps\delta})$.
\end{lemma}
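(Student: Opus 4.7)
The plan is to view the Choosing Mechanism as a two-stage procedure: first drastically shrink the candidate set using the bounded-growth structure, then run the exponential mechanism on the smaller set, with a stability-based gating step that pays the $\delta$ loss.

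Concretely, define the data-dependent candidate set $F_S=\{f\in\FFF : q(S,f)\geq 1\}$. The three bounded-growth properties (start at zero, unit-increment, at most $k$ scores increment per new element) force $|F_S|\leq kn$, and for neighboring $S,S'$ we have $|F_S \triangle F_{S'}|\leq k$ with every $f$ in the symmetric difference having $q\leq 1$ on either side. The mechanism I would analyze is: (1) release $\tilde q = \OPT(S)+\Lap(2/\eps)$ and output $\perp$ if $\tilde q < T$ for a threshold $T=\Theta\bigl(\tfrac{1}{\eps}\log(kn/(\beta\eps\delta))\bigr)$; (2) otherwise apply the exponential mechanism (Lemma~\ref{prop:exp_mech}) to the restricted set $F_S$ with privacy parameter $\eps/2$ and return its output.

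For privacy, stage (1) is $(\eps/2,0)$-DP by the Laplace mechanism. Stage (2) would be $(\eps/2,0)$-DP for a fixed candidate set; the main obstacle is that $F_S$ is itself data-dependent, so I would compare the induced Gibbs distributions on $S$ and $S'$ directly. For any fixed $f$ the numerator $\exp(\eps\, q(S,f)/4)$ shifts by a factor in $[e^{-\eps/4}, e^{\eps/4}]$. The denominator is a sum over $F_S$; each surviving term shifts by the same factor, and at most $k$ terms appear or disappear, each of magnitude at most $e^{\eps/4}$. The stability test guarantees $\OPT \geq T - O(\tfrac{1}{\eps}\log(1/\delta))$ except with probability $\leq \delta$, so the maximizer contributes $\exp(\eps\OPT/4)$ to the denominator, dwarfing the $k$ spurious terms by a ratio of $k\cdot e^{\eps/4}/e^{\eps\OPT/4}$. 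Choosing $T$ as above makes this perturbation at most $e^{\eps/4}$, giving $(\eps/2,\delta)$-DP for stage (2) and $(\eps,\delta)$-DP overall by basic composition.

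For utility, if $\OPT \geq T + \tfrac{2}{\eps}\log(2/\beta)$, then with probability $\geq 1-\beta/2$ the Laplace noise is small enough that $\tilde q \geq T$ and the mechanism proceeds. Conditional on this, applying Lemma~\ref{prop:exp_mech} to the exponential mechanism on $F_S$ (of size at most $kn$) with parameter $\eps/2$ yields a solution with $q(S,f)\geq\OPT - \tfrac{4}{\eps}\log(2kn/\beta)$ with probability $\geq 1-\beta/2$. A union bound and collection of constants, absorbing the $\delta,\eps$-dependence of $T$ into the logarithm, gives $q(S,f)\geq\OPT-\tfrac{16}{\eps}\log(4kn/(\beta\eps\delta))$. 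The conceptually hard part is the privacy of the exponential mechanism on a data-dependent subset; bounded growth plus the stability-based threshold is precisely what lets one replace the prohibitive $\log|\FFF|$ utility loss with $\log(kn)$ while preserving approximate differential privacy.
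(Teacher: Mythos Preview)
The paper does not prove this lemma; it is quoted as a black-box result from \cite{BNS13b,BNSV15}, so there is no in-paper proof to compare against. Your proposal is essentially the construction and analysis given in those references: restrict to the data-dependent candidate set $F_S=\{f:q(S,f)\geq 1\}$, use a noisy threshold on $\OPT$ to gate access, and then run the exponential mechanism on $F_S$; bounded growth gives $|F_S|\leq kn$ and bounds the symmetric difference $F_S\triangle F_{S'}$ by $k$ low-weight terms, which are swamped by the $e^{\eps\OPT/4}$ term in the denominator once the gate certifies that $\OPT$ is large. This is the right picture and the right place where $\delta$ enters.

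One point to tighten in the privacy argument: you phrase the guarantee as ``the stability test guarantees $\OPT\geq T-O(\tfrac{1}{\eps}\log(1/\delta))$ except with probability $\leq\delta$,'' but for differential privacy you must control, for \emph{every} database $S$, the probability that the gate passes while $\OPT(S)$ is too small for the denominator argument to go through. That is, the bad event is defined per-database (Laplace noise exceeding $\Theta(\tfrac{1}{\eps}\log(1/\delta))$), and you then compare the two Gibbs distributions only on the complementary good event; the $\delta$ is charged to the bad event on whichever of $S,S'$ you are upper-bounding. Also note that $\OPT$ itself has sensitivity $1$ (an easy consequence of property~(2) in the bounded-growth definition), which is what justifies the Laplace scale $2/\eps$ in stage~(1). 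With these clarifications your sketch matches the original proof.
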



\section{Algorithm \texttt{TreeLog} and its Analysis}\label{sec:TreeLog}
In this section we present algorithm \texttt{TreeLog}, which privately solves the interior point problem with sample complexity $\tilde{O}\left(\left(\log^*|X|\right)^2\right)$. Consider algorithm  \texttt{TreeLog}, described in Algorithm~\ref{alg:TreeLog}. We analyze its privacy properties in Section~\ref{sec:privacy_analysis} and analyze its utility properties in Section~\ref{sec:utility_analysis}. For an illustration of the algorithm see Figure~\ref{fig:tree}.

\begin{algorithm*}[!htp]

\caption{\bf \texttt{TreeLog}}\label{alg:TreeLog}

{\bf Input:} Parameters $\eps,\delta$ and a database $S\in X^n$ where $X$ is a totally ordered domain. We assume that $|X|$ is a power of 2 (otherwise extend $X$ to the closest power of 2).
Set trimming parameter  $t = \Theta\left(\frac{1}{\eps}\log\frac{1}{\delta}\right)$.

\begin{enumerate}[leftmargin=15pt,rightmargin=10pt,itemsep=1pt,topsep=1.5pt]

\item\label{step:baseCase} If $|X|=O(1)$ 
then use the Exponential Mechanism to return $y\in X$ with large quality
$$q(S,y)=\min\left\{\;|\{x\in S: x\leq y\}|,\; |\{x\in S: x\geq y\}|\;\right\}.$$

\item\label{step:takeMiddle}
Sort $S$ and let $\hat{S}\in X^{n-2t}$ be a database containing all elements of $S$ except for the $t$ largest and $t$ smallest elements.

\item\label{step:buildTree} Let $T$ be a complete binary tree with $|X|$ leaves that correspond to elements of $X$. A leaf $u$ has weight $w(u)=|\{x\in \hat{S}: x=u\}|$. A node $v$ has weight $w(v)$ that equals the sum of the weights of its children.

\item\label{step:randomPath} Sample a path $\pi$ in $T$, starting from the root and constructed by the following process:
\begin{enumerate}[topsep=0pt]
	\item Let $v$ be the current node in the path.
	\item If $v$ is a leaf, or if $w(v)\leq t$, then $v$ is the last node in $\pi$.
	\item Else, let $v_0,v_1$ be the two children of $v$ in $T$. If one of them has weight 0 then proceed to the other child. Otherwise, proceed to $v_b$ with probability proportional to $\exp(\eps\cdot w(v_b))$.
\end{enumerate}

\item\label{step:buildD} Initialize $D=\emptyset$. Add elements to $D$ by following the path $\pi$ starting from the root:
\begin{enumerate}[topsep=0pt]
	\item\label{step:buildDa} If $|D|\geq n-3t$ then goto Step~\ref{step:recursiveCall}.
	\item Let $v$ be the current node in the path, and let $\ell$ denote its level in $T$ (the root is in level 0 and the leaves are in level $\log|X|$).
	\item If $v$ is the last node in $\pi$ then add $(n-3t-|D|)$ copies of $\ell$ to $D$ and goto Step~\ref{step:recursiveCall}.
	\item Else, let $v_{\scriptscriptstyle\rm next}$ be the next node in $\pi$, and let $v_{\scriptscriptstyle\rm other}$ be the other child of $v$ in $T$. Add $\min\left\{w(v_{\scriptscriptstyle\rm other}),n-3t-|D|\right\}$ copies of $\ell$ to $D$, and goto Step~\ref{step:buildDa} with $v_{\scriptscriptstyle\rm next}$ as the current node.
\end{enumerate}

\item\label{step:recursiveCall} Execute \texttt{TreeLog} recursively on $D$ and let $\ell^*$ denote the returned outcome.

\item\label{step:choosingMech} Use the Choosing Mechanism with privacy parameters $\eps,\delta$ to choose a node $v^*$ at level $\ell^*$ of $T$ with large weight $w(v^*)$.

\item\label{step:identifyLeaves} Let $v_{\scriptscriptstyle\rm left}$ and $v_{\scriptscriptstyle\rm right}$ be the left-most and right-most leaves, respectively, of the sub-tree rooted at $v^*$. Also let $v_{\scriptscriptstyle\rm inner\text{-}left}$ be the right-most leaf of the sub-tree rooted at the left child of $v^*$, and let $v_{\scriptscriptstyle\rm inner\text{-}right}$ be the left-most leaf of the sub-tree rooted at the right child of $v^*$.

\item\label{step:expMech} Use the Exponential Mechanism with privacy parameter $\eps$ to return $y\in\left\{v_{\scriptscriptstyle\rm left}, v_{\scriptscriptstyle\rm right}, v_{\scriptscriptstyle\rm inner\text{-}left}, v_{\scriptscriptstyle\rm inner\text{-}right} \right\}$ with large quality
$$q(S,y)=\min\left\{\;|\{x\in S: x\leq y\}|,\; |\{x\in S: x\geq y\}|\;\right\}.$$

\end{enumerate}
\end{algorithm*}

\tikzset{every picture/.style={line width=0.75pt}} 

\begin{figure}[!b]
\label{fig:tree}

\begin{tikzpicture}[x=0.75pt,y=0.75pt,yscale=-1,xscale=1]

\draw  [color={rgb, 255:red, 65; green, 117; blue, 5 }  ,draw opacity=1 ][line width=2.25]  (299,25.5) .. controls (299,15.28) and (307.28,7) .. (317.5,7) .. controls (327.72,7) and (336,15.28) .. (336,25.5) .. controls (336,35.72) and (327.72,44) .. (317.5,44) .. controls (307.28,44) and (299,35.72) .. (299,25.5) -- cycle ;
\draw  [color={rgb, 255:red, 65; green, 117; blue, 5 }  ,draw opacity=1 ][line width=2.25]  (157,89.5) .. controls (157,79.28) and (165.28,71) .. (175.5,71) .. controls (185.72,71) and (194,79.28) .. (194,89.5) .. controls (194,99.72) and (185.72,108) .. (175.5,108) .. controls (165.28,108) and (157,99.72) .. (157,89.5) -- cycle ;
\draw  [color={rgb, 255:red, 65; green, 117; blue, 5 }  ,draw opacity=1 ][line width=2.25]  (63,150.5) .. controls (63,140.28) and (71.28,132) .. (81.5,132) .. controls (91.72,132) and (100,140.28) .. (100,150.5) .. controls (100,160.72) and (91.72,169) .. (81.5,169) .. controls (71.28,169) and (63,160.72) .. (63,150.5) -- cycle ;
\draw  [color={rgb, 255:red, 65; green, 117; blue, 5 }  ,draw opacity=1 ][line width=2.25]  (141,208.5) .. controls (141,198.28) and (149.28,190) .. (159.5,190) .. controls (169.72,190) and (178,198.28) .. (178,208.5) .. controls (178,218.72) and (169.72,227) .. (159.5,227) .. controls (149.28,227) and (141,218.72) .. (141,208.5) -- cycle ;
\draw  [color={rgb, 255:red, 65; green, 117; blue, 5 }  ,draw opacity=1 ][line width=2.25]  (98,262.5) .. controls (98,252.28) and (106.28,244) .. (116.5,244) .. controls (126.72,244) and (135,252.28) .. (135,262.5) .. controls (135,272.72) and (126.72,281) .. (116.5,281) .. controls (106.28,281) and (98,272.72) .. (98,262.5) -- cycle ;
\draw   (458,89.5) .. controls (458,79.28) and (466.28,71) .. (476.5,71) .. controls (486.72,71) and (495,79.28) .. (495,89.5) .. controls (495,99.72) and (486.72,108) .. (476.5,108) .. controls (466.28,108) and (458,99.72) .. (458,89.5) -- cycle ;
\draw   (397,150.5) .. controls (397,140.28) and (405.28,132) .. (415.5,132) .. controls (425.72,132) and (434,140.28) .. (434,150.5) .. controls (434,160.72) and (425.72,169) .. (415.5,169) .. controls (405.28,169) and (397,160.72) .. (397,150.5) -- cycle ;
\draw   (357,208.5) .. controls (357,198.28) and (365.28,190) .. (375.5,190) .. controls (385.72,190) and (394,198.28) .. (394,208.5) .. controls (394,218.72) and (385.72,227) .. (375.5,227) .. controls (365.28,227) and (357,218.72) .. (357,208.5) -- cycle ;
\draw   (516,148.5) .. controls (516,138.28) and (524.28,130) .. (534.5,130) .. controls (544.72,130) and (553,138.28) .. (553,148.5) .. controls (553,158.72) and (544.72,167) .. (534.5,167) .. controls (524.28,167) and (516,158.72) .. (516,148.5) -- cycle ;
\draw   (321,263.5) .. controls (321,253.28) and (329.28,245) .. (339.5,245) .. controls (349.72,245) and (358,253.28) .. (358,263.5) .. controls (358,273.72) and (349.72,282) .. (339.5,282) .. controls (329.28,282) and (321,273.72) .. (321,263.5) -- cycle ;
\draw   (494.5,263.75) .. controls (494.5,252.84) and (503.34,244) .. (514.25,244) .. controls (525.16,244) and (534,252.84) .. (534,263.75) .. controls (534,274.66) and (525.16,283.5) .. (514.25,283.5) .. controls (503.34,283.5) and (494.5,274.66) .. (494.5,263.75) -- cycle ;
\draw   (430,265.5) .. controls (430,255.28) and (438.28,247) .. (448.5,247) .. controls (458.72,247) and (467,255.28) .. (467,265.5) .. controls (467,275.72) and (458.72,284) .. (448.5,284) .. controls (438.28,284) and (430,275.72) .. (430,265.5) -- cycle ;
\draw   (604,262.5) .. controls (604,252.28) and (612.28,244) .. (622.5,244) .. controls (632.72,244) and (641,252.28) .. (641,262.5) .. controls (641,272.72) and (632.72,281) .. (622.5,281) .. controls (612.28,281) and (604,272.72) .. (604,262.5) -- cycle ;
\draw [color={rgb, 255:red, 65; green, 117; blue, 5 }  ,draw opacity=1 ][line width=2.25]    (95,136.5) -- (158,100.5) ;

\draw [color={rgb, 255:red, 65; green, 117; blue, 5 }  ,draw opacity=1 ][line width=2.25]    (193,79.5) -- (298,34.5) ;

\draw [color={rgb, 255:red, 65; green, 117; blue, 5 }  ,draw opacity=1 ][line width=2.25]    (93,166.5) -- (143,199.5) ;

\draw [color={rgb, 255:red, 65; green, 117; blue, 5 }  ,draw opacity=1 ][line width=2.25]    (129,250.5) -- (152,225.5) ;

\draw [color={rgb, 255:red, 0; green, 0; blue, 0 }  ,draw opacity=1 ][line width=2.25]    (335,34.5) -- (459,80.5) ;

\draw [color={rgb, 255:red, 0; green, 0; blue, 0 }  ,draw opacity=1 ][line width=2.25]    (431,136.5) -- (463,106) ;

\draw [color={rgb, 255:red, 0; green, 0; blue, 0 }  ,draw opacity=1 ][line width=2.25]    (382,191.5) -- (404,165.5) ;

\draw [color={rgb, 255:red, 0; green, 0; blue, 0 }  ,draw opacity=1 ][line width=2.25]    (349,246.5) -- (366,225.5) ;

\draw [color={rgb, 255:red, 0; green, 0; blue, 0 }  ,draw opacity=1 ][line width=2.25]    (524,133) -- (489,103.5) ;

\draw (67,159.5) node  [align=left] {};
\draw (37,84.5) node  [align=left] {Level $\displaystyle \ell^{*}$};
\draw (529,85.5) node  [align=left] {Node $\displaystyle v^{*}$};
\draw (285,159.5) node  [align=left] {$ $};
\draw (312,234.5) node  [align=left] { $\displaystyle v_{\rm left}$};
\draw (440,232.5) node  [align=left] { $\displaystyle v_{\rm inner\text{-}left}$};
\draw (220,36) node  [align=left] {Path $\displaystyle \pi $};
\draw (633,231.5) node  [align=left] { $\displaystyle v_{\rm right}$};
\draw (547,231.5) node  [align=left] { $\displaystyle v_{\rm inner\text{-}right}$};

\end{tikzpicture}
\caption{An illustration of the objects used by Algorithm \texttt{TreeLog}.}
\label{fig:tree}
\end{figure}
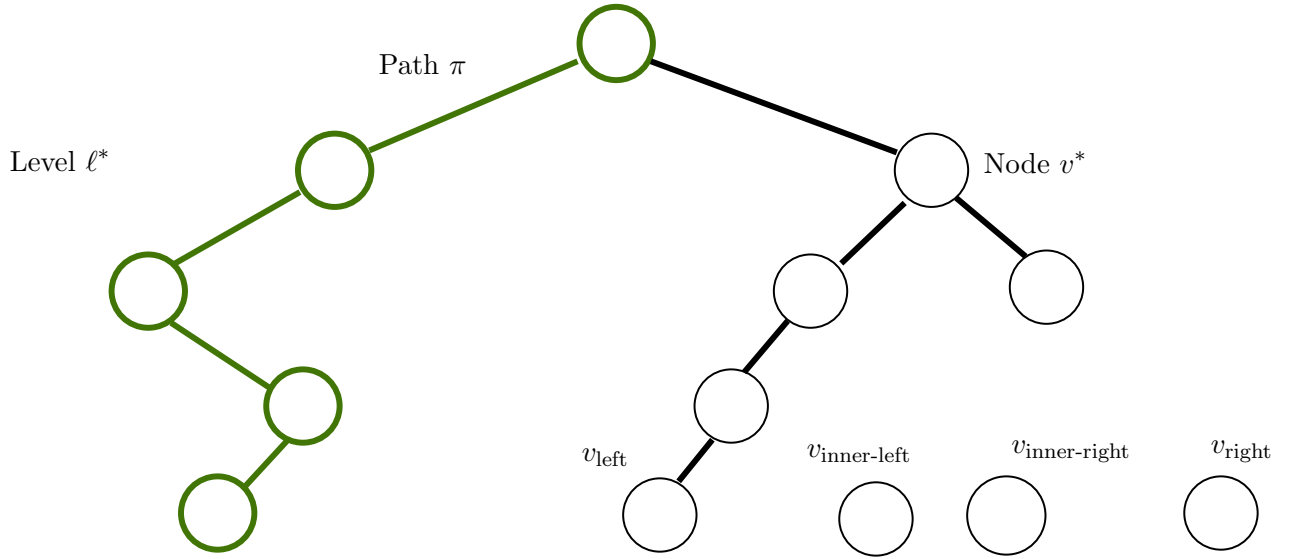

\subsection{Privacy Analysis}
\label{sec:privacy_analysis}

First, observe that it suffices to analyze the algorithm without Step~\ref{step:takeMiddle}. The reason is that for any two neighboring databases $S,S'$ we have that the multisets $\hat{S},\hat{S'}$ containing the middle elements of $S,S'$, respectively, are neighboring multisets (this technique was also used in~\cite{BNSV15}). So, for the privacy analysis, we will assume that $\hat{S}$ contains all elements of $S$.

Fix two neighboring databases $S$ and $S'=S\cup\{x'\}$, and consider the (top-level call of the) execution of \texttt{TreeLog} on $S$ and on $S'$. We use $w_S(v)$ and $w_{S'}(v)$ to denote the weight of a node $v$ during the two executions, respectively. Also, let $P(S)$ and $P(S')$ denote the set of possible paths that can be obtained in Step~\ref{step:randomPath} of the execution on $S$ and on $S'$, respectively. That is, $P(S)$ denotes the support of the distribution on paths defined in Step~\ref{step:randomPath} of the (top-level call of the) execution on $S$.
For a path $\pi$ we write $\Pr_S[\pi]$ and $\Pr_{S'}[\pi]$ to denote the probability of obtaining this path in the executions on $S$ and on $S'$, respectively. We make the following observations.

\begin{observation}\label{obs:atmostNpaths}
$|P(S)|\leq |S|$ and $|P(S')|\leq|S'|$.
\end{observation}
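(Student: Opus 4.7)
The plan is to identify $P(S)$ with the set of leaves of a canonical subtree $R\subseteq T$ and count them by an elementary identity for rooted binary trees in which every internal node has $1$ or $2$ children.

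First I would verify that the sampling process of Step~\ref{step:randomPath} descends only through nodes of strictly positive weight: the root has weight $|\hat S|\geq 1$, and whenever the process is at a node $v$ with $w(v)>t$ it moves to a child of positive weight (either the unique such child, when the other has weight $0$, or one of the two via a distribution proportional to $\exp(\eps\cdot w(\cdot))$, which assigns zero probability to a zero-weight child). Let $T_+\subseteq T$ denote the subtree of nodes with $w(v)\geq 1$, and let $R\subseteq T_+$ be the sub-subtree of nodes reached by the process with positive probability. By the two stopping conditions in Step~\ref{step:randomPath}, the process terminates exactly at the leaves of $R$, so $P(S)$ is in bijection with the set of leaves of $R$ and $|P(S)|=\#\mathrm{leaves}(R)$.

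Next I would use the identity that in any rooted tree in which every internal node has $1$ or $2$ children, $\#\mathrm{leaves}=1+\#\{\text{nodes with two children}\}$. Since $w(v)=w(v_0)+w(v_1)$, every internal node of $T$ that lies in $T_+$ has at least one child in $T_+$; hence every internal node of $T_+$ (and therefore of $R$) has $1$ or $2$ children. Writing \emph{branching} for nodes of $R$ with two children in $R$, the identity gives $|P(S)|=1+\#\mathrm{branching}(R)$, and any branching node of $R$ is also branching in $T_+$, so $\#\mathrm{branching}(R)\leq \#\mathrm{branching}(T_+)$.

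Finally, the leaves of $T_+$ are precisely those leaves of $T$ with positive weight, namely the distinct elements appearing in $\hat S$, so $\#\mathrm{leaves}(T_+)\leq |\hat S|\leq |S|$ (an internal node of $T$ cannot be a leaf of $T_+$ because both of its children having weight $0$ would force its own weight to be $0$). Applying the identity to $T_+$ yields $\#\mathrm{branching}(T_+)=\#\mathrm{leaves}(T_+)-1\leq |S|-1$, and therefore $|P(S)|\leq |S|$. The argument for $|P(S')|\leq |S'|$ is identical. I do not expect any real obstacle: the only care needed is to confirm that $R$ has no internal nodes with zero children in $R$, which is immediate from the stopping rule and the fact that an internal node of $T$ of positive weight must have a positive-weight child.
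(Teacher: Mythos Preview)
Your argument is correct, but it takes a different (and longer) route than the paper's. The paper observes just two facts: (i) every path in $P(S)$ ends at a node of positive weight, and (ii) no path in $P(S)$ is a prefix of another. From these it follows that the endpoints of the paths form an antichain in $T$, so the subtrees rooted at them are pairwise disjoint; since each such subtree contains at least one element of $\hat S$, the number of endpoints (hence paths) is at most $|\hat S|\le |S|$. Your approach instead builds the reachable subtree $R$, identifies $P(S)$ with its leaves, and bounds them via the identity $\#\mathrm{leaves}=1+\#\mathrm{branching}$ applied first to $R$ and then to $T_+$. Both arguments are elementary; the paper's antichain argument is shorter and more direct, while yours makes the tree structure of the sampling process explicit and would generalize more readily if one ever needed finer control over $|P(S)|$.

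One minor slip: you write that the exponential distribution ``assigns zero probability to a zero-weight child,'' but $\exp(\eps\cdot 0)=1\neq 0$. The reason the process never moves to a zero-weight child is that Step~\ref{step:randomPath} explicitly sends the walk to the other child when one child has weight~$0$, \emph{before} invoking the exponential step. Your conclusion that the walk stays inside $T_+$ is still correct; only the parenthetical justification needs fixing.
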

This is because paths only end in nodes with positive weight, and because if $\pi_1\neq\pi_2$ are both in $P(S)$ then $\pi_1$ is not a prefix of $\pi_2$ and vice verse.

\begin{observation}
Let $\pi\in P(S)$, and let $v_{\pi}$ denote the last node in $\pi$. There exists either one or two paths in $P(S')$ such that $\pi$ is their prefix.
\end{observation}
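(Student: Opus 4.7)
The plan is to fix neighboring databases $S$ and $S' = S \cup \{x'\}$, fix a path $\pi \in P(S)$, and count the paths of $P(S')$ that have $\pi$ as a prefix by directly tracing the sampling process under $S'$. The key structural observation is that, letting $u_{x'}$ denote the leaf of $T$ corresponding to $x'$, the weight function $w_{S'}$ differs from $w_S$ only at ancestors of $u_{x'}$, where $w_{S'}(v) = w_S(v) + 1$.

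The first step is to verify that $\pi$ is itself a valid prefix in $P(S')$. For every non-terminal $v$ of $\pi$, $w_{S'}(v) \ge w_S(v) > t$, so $v$ does not terminate in $S'$, and the next transition in $\pi$ still has positive probability under $S'$: positive weights remain positive, so random transitions in $S$ stay random in $S'$, and forced transitions to a non-zero child remain possible in $S'$ (even if the once-zero sibling now has positive weight and turns the step random). Consequently, every $\pi' \in P(S')$ extending $\pi$ agrees with $\pi$ through $v_\pi$, and the question reduces to counting continuations from $v_\pi$ under $S'$.

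The second step is a case analysis at $v_\pi$. If $v_\pi$ is a leaf or $w_{S'}(v_\pi) \le t$, then $v_\pi$ terminates in $S'$ and the unique extension is $\pi$ itself. Otherwise, $w_S(v_\pi) \le t$ together with $w_{S'}(v_\pi) > t$ forces $w_S(v_\pi) = t$ and $u_{x'}$ in the subtree of $v_\pi$, so $w_{S'}(v_\pi) = t+1$. Here a spine argument kicks in: walk down from $v_\pi$ toward $u_{x'}$, maintaining the invariant $w_S(v) = t$ and $w_{S'}(v) = t+1$ on each visited node $v$. If the non-$x'$ child $v_{1-b}$ of $v$ has weight $0$, the $S'$-transition is forced to the $x'$-child $v_b$ and the invariant carries over. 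If instead $v_{1-b}$ has positive weight, the step branches in $S'$; but then $w_{S'}(v_{1-b}) = w_S(v_{1-b}) \le t$ and $w_{S'}(v_b) = w_S(v_b) + 1 \le (t-1)+1 = t$ (using $w_S(v_{1-b}) \ge 1$), so both resulting paths terminate at the child. Thus the spine is either deterministic all the way to $u_{x'}$ (one extension) or deterministic until a single branching step (two extensions).

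I expect the spine step to be the main obstacle. The crux is arguing that once a genuine branch occurs, the tight budget $w_S(v) = t$ at the branching parent forces both of its children to have $S'$-weight at most $t$, so each branch terminates at the immediate next node and no further branching can arise. The deterministic pre-branch portion, together with this one-step termination, then yields exactly the one-or-two dichotomy.
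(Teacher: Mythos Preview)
Your argument is correct and follows the same line as the paper's. The paper's own justification is a two-sentence sketch: $v_\pi$ remains reachable under $S'$, and from $v_\pi$ the continuation either stops (if $w_{S'}(v_\pi)\le t$) or ``can split at most once into two different paths.'' Your spine argument is precisely what is needed to substantiate that last clause, which the paper leaves implicit; the invariant $w_S(v)=t$, $w_{S'}(v)=t+1$ along the $x'$-spine and the budget computation at the first genuine branch are the natural way to make the paper's claim rigorous.
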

This is because if $v_{\pi}$ is reachable from the root during the execution on $S$ then it is also reachable during the execution on $S'$. When reaching $v_{\pi}$ in the execution on $S'$, the path ends in $v_{\pi}$ if $w_{S'}(v_{\pi})\leq t$, and otherwise the path continues and can split at most once into two different paths.

\begin{definition}
A path $\pi\in P(S)$ is {\em weak} for $S$ if $\Pr_S[\pi]\leq\delta$. Otherwise, the path is {\em strong} for $S$. We use similar definitions w.r.t.\ $S'$.
\end{definition}
\begin{observation}
Let $\pi'\in P(S')$. Either there is a path $\pi\in P(S)$ such that $\pi$ is a prefix of $\pi'$, or else $\pi'$ is weak for $S'$.
\end{observation}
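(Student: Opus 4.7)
The plan is to prove the contrapositive: if $\pi' = v_0, \ldots, v_k$ is strong for $S'$, then some prefix of $\pi'$ belongs to $P(S)$. Write $\rho$ for the root-to-leaf path in $T$ of the extra element $x'$ with $S' = S \cup \{x'\}$. The weights then satisfy $w_{S'}(v) = w_S(v) + 1$ when $v \in \rho$ and $w_{S'}(v) = w_S(v)$ otherwise, so in particular $w_S \leq w_{S'}$ everywhere.

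First I would trace the $S$-execution along the nodes of $\pi'$ and let $i$ be the smallest index at which this trace fails to advance strictly past $v_i$. Such an index must fall into one of two types: (a) $v_i$ is a leaf or $w_S(v_i) \leq t$, so the $S$-execution halts at $v_i$; or (b) $w_S(v_{i+1}) = 0$, which forces the $S$-execution to step into the sibling of $v_{i+1}$ instead. Such an $i \leq k$ exists: since $\pi' \in P(S')$ terminates at $v_k$, either $v_k$ is a leaf or $w_{S'}(v_k) \leq t$, and then $w_S(v_k) \leq w_{S'}(v_k) \leq t$ triggers case (a) at $v_k$ if no earlier index triggers (a) or (b). In case (a), minimality of $i$ guarantees that for every $j < i$ the $S$-walk is neither halted nor pushed off $\pi'$ (because $v_j$ is not a leaf, $w_S(v_j) > t$, and $w_S(v_{j+1}) > 0$), so it reaches $v_i$ with positive probability and halts there. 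Hence $v_0, \ldots, v_i \in P(S)$ is the desired prefix of $\pi'$.

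The main obstacle, and the step that dictates the choice of $t$, is case (b). Here $w_S(v_{i+1}) = 0$ while $\pi' \in P(S')$ steps into $v_{i+1}$, so $w_{S'}(v_{i+1}) > 0$, forcing $w_{S'}(v_{i+1}) = 1$ and $v_{i+1} \in \rho$; in particular $v_i \in \rho$ too. The sibling $u$ of $v_{i+1}$ then lies outside $\rho$, so $w_{S'}(u) = w_S(u) = w_S(v_i) > t$. Both children of $v_i$ therefore have positive weight in $S'$, so the algorithm applies the exponential rule, and the probability that the $S'$-execution steps from $v_i$ to $v_{i+1}$ is
\[
\frac{e^{\eps}}{e^{\eps} + e^{\eps\, w_S(v_i)}} \;\leq\; e^{\eps(1 - w_S(v_i))} \;\leq\; e^{-\eps(t-1)}.
\]
Since $\Pr_{S'}[\pi']$ is bounded by the probability of any single transition along $\pi'$, we obtain $\Pr_{S'}[\pi'] \leq e^{-\eps(t-1)}$. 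Making the hidden constant in $t = \Theta(\tfrac{1}{\eps}\log\tfrac{1}{\delta})$ large enough drives this below $\delta$, contradicting strength and completing the argument.
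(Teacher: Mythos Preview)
Your argument is correct and follows essentially the same route as the paper's proof: both identify the first node along $\pi'$ at which the $S$-walk can no longer continue, observe that if this is not a halting condition then the next node must have $S$-weight $0$ and $S'$-weight $1$ while its sibling has weight exceeding $t$, and bound the resulting transition probability by $e^{-\Omega(\eps t)}\leq\delta$. The only cosmetic difference is that you phrase it as a contrapositive and trace the walk explicitly, whereas the paper defines $v_{\rm last}$ as the last node of $\pi'$ lying on some path of $P(S)$; these pinpoint the same vertex.
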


\begin{proof}
Assume that no path in $P(S)$ is a prefix of $\pi'$.
Let $v_{\scriptscriptstyle\rm last}$ denote the last node in the path $\pi'$ that appears in a path of $P(S)$, and let $v_{\scriptscriptstyle\rm break}$ denote the next node in $\pi'$. Note that $v_{\scriptscriptstyle\rm break}$ must exist because otherwise $\pi'\in P(S)$.  
Since no path in $P(S)$ is a prefix of $\pi'$, we get that any path in $P(S)$ that reaches $v_{\scriptscriptstyle\rm last}$ does not stop in $v_{\scriptscriptstyle\rm last}$, and hence, $w_S(v_{\scriptscriptstyle\rm last})>t$ and therefore $w_{S'}(v_{\scriptscriptstyle\rm last})>t$. 
Moreover, since $w_S(v_{\scriptscriptstyle\rm last})>t$ and since $v_{\scriptscriptstyle\rm break}$ is not on a path of $P(S)$, we have that $w_S(v_{\scriptscriptstyle\rm break})=0$ and hence $w_{S'}(v_{\scriptscriptstyle\rm break})=1$. 
So $w_{S'}(v_{\scriptscriptstyle\rm break})=1$ and $w_{S'}(v_{\scriptscriptstyle\rm last})>t$ and therefore the probability of proceeding from $v_{\scriptscriptstyle\rm last}$ to $v_{\scriptscriptstyle\rm break}$ is at most $\delta$.
\end{proof}

\begin{claim}\label{claim:pathsProbabilities}
Let $\pi\in P(S)$ be a {\em strong} path, and let $\Pi'\subseteq P(S')$ be a subset containing all paths in $P(S')$ such that $\pi$ is their prefix. Then $\Pr_S[\pi]\leq e^{3\eps\cdot\log n}\cdot\Pr_{S'}[\Pi']$.
\end{claim}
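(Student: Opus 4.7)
The plan is to express $\Pr_S[\pi]/\Pr_{S'}[\Pi']$ as a product of per-step transition ratios along $\pi$ and to bound each factor individually. Write $\pi=(v_0,\ldots,v_k)$. Since $\Pi'$ consists of all paths in $P(S')$ having $\pi$ as a prefix and since the stopping criterion $w(v)\le t$ is monotone in the weights, each intermediate node $v_i$ (with $i<k$) satisfies $w_{S'}(v_i)\ge w_S(v_i)>t$, so the $S'$-process cannot stop before $v_k$ whenever it follows $\pi$. Thus
$$\frac{\Pr_S[\pi]}{\Pr_{S'}[\Pi']}\;=\;\prod_{i=0}^{k-1}\frac{p_S(v_i\to v_{i+1})}{p_{S'}(v_i\to v_{i+1})},$$
where $p_S,p_{S'}$ denote the transition probabilities prescribed in Step~\ref{step:randomPath} in the two executions.

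Let $\gamma$ be the (unique) root-to-leaf path in $T$ that contains the extra element $x'$. Since $w_{S'}(v)=w_S(v)$ for every $v\notin\gamma$, each factor with $v_i\notin\gamma$ equals $1$. For $v_i\in\gamma$, write $v_\gamma$ for the child of $v_i$ on $\gamma$ and $v_{\mathrm{other}}$ for its sibling, so that $w_{S'}(v_\gamma)=w_S(v_\gamma)+1$ while $w_{S'}(v_{\mathrm{other}})=w_S(v_{\mathrm{other}})$. Setting $x=e^{\eps w_S(v_\gamma)}$ and $y=e^{\eps w_S(v_{\mathrm{other}})}$ and substituting into the transition rule, the factor is: (a) $(x+e^{-\eps}y)/(x+y)\le 1$ if both children have positive weight in $S$ and $v_{i+1}=v_\gamma$; (b) $(e^\eps x+y)/(x+y)\le e^\eps$ if both are positive in $S$ and $v_{i+1}=v_{\mathrm{other}}$; (c) exactly $1$ if $w_S(v_{\mathrm{other}})=0$ (deterministic to $v_\gamma$ in both executions, since $w_{S'}(v_{\mathrm{other}})=0$); and (d) equal to $1+e^{-\eps(w_S(v_{\mathrm{other}})-1)}$ if $w_S(v_\gamma)=0$ (deterministic to $v_{\mathrm{other}}$ in $S$ but probabilistic in $S'$). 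In case~(d), $w_S(v_{\mathrm{other}})=w_S(v_i)>t$ together with $t=\Theta(\frac{1}{\eps}\log\frac{1}{\delta})$ bounds the factor by $1+O(\delta)$.

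The structural observation that closes the argument is that any step with $v_{i+1}=v_{\mathrm{other}}$ is the step at which $\pi$ leaves $\gamma$; since $T$ is a tree, $\pi$ never returns to $\gamma$ afterwards. Hence at most one factor in the product is of type (b) or (d); every remaining factor is at most $1$. Multiplying gives $\Pr_S[\pi]/\Pr_{S'}[\Pi']\le e^\eps\cdot(1+O(\delta))$, which is comfortably stronger than the stated $e^{3\eps\log n}$; the looser form in the statement leaves slack that is convenient in the outer composition over recursive calls. The main subtlety worth careful checking is case~(d), where the transition rule itself switches regime between $S$ and $S'$: it is the trimming parameter $t$ that keeps $w_S(v_{\mathrm{other}})$ large enough for the ratio to remain close to $1$, so the verification relies essentially on the fact that $v_i$ is an intermediate (non-terminal) node of $\pi$.
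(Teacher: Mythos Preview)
Your argument is correct and yields a sharper inequality than the paper's, via a genuinely different route. The paper's proof crudely lower-bounds $\Pr_{S'}[\Pi']$ by allowing \emph{every} weight in every denominator to grow by $1$; it then invokes the ``strong'' hypothesis to rule out steps where $\pi$ proceeds to a much lighter child, and uses a counting argument to show that at most $2\log n$ steps have nearly-balanced children (each such step costing a factor $e^{\eps}$), while the remaining (up to $n$) steps cost only a factor $(1-\delta)$ each. By contrast, you exploit the tree structure directly: the only weights that change between $S$ and $S'$ lie along the single root-to-leaf path $\gamma$ through $x'$, and while $\pi$ coincides with $\gamma$ the per-step ratio is at most $1$; the unique step at which $\pi$ departs from $\gamma$ is the only place the ratio can exceed $1$, and there it is at most $e^{\eps}$ (your case~(b)) or $1+O(\delta)$ (your case~(d)). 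This gives $\Pr_S[\pi]\le e^{\eps}(1+O(\delta))\,\Pr_{S'}[\Pi']$, which is stronger than the stated $e^{3\eps\log n}$ and does not even require $\pi$ to be strong. Substituted into the induction that follows the claim, your bound would shave the $\log n$ factor off the overall privacy parameter.
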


\begin{proof}
Let $v_0$ denote the first node in $\pi$ (the root of the tree $T$) and let $v_{\pi}$ denote the last node in $\pi$.
Also let $v_1,v_2,\dots,v_k$ denote all the nodes in the path $\pi$ (different from $v_{\pi}$) such that both of their children in $T$ have positive weight in the execution on $S$ (these are the nodes from which the construction of the path in Step~\ref{step:randomPath} is done randomly). For every such node $v_i$ let $v_{i,{\scriptscriptstyle\rm next}}$ denote its following node in $\pi$, and let $v_{i,{\scriptscriptstyle\rm other}}$ denote $v_i$'s other child in $T$. Note that $v_{k,{\scriptscriptstyle\rm next}}=v_{\pi}$.
So $\pi$ can be written as
$$\pi:=v_0%
\leadsto v_1\rightarrow v_{1,{\scriptscriptstyle\rm next}} \leadsto v_2\rightarrow v_{2,{\scriptscriptstyle\rm next}} \leadsto \dots \leadsto v_k \rightarrow v_{\pi}.$$
With these notations we have that
$$
\Pr\nolimits_S[\pi]=\frac{e^{\eps\cdot w_S(v_{1,{\scriptscriptstyle\rm next}})}}{e^{\eps\cdot w_S(v_{1,{\scriptscriptstyle\rm next}})} + e^{\eps\cdot w_S(v_{1,{\scriptscriptstyle\rm other}})}}\cdot \frac{e^{\eps\cdot w_S(v_{2,{\scriptscriptstyle\rm next}})}}{e^{\eps\cdot w_S(v_{2,{\scriptscriptstyle\rm next}})} + e^{\eps\cdot w_S(v_{2,{\scriptscriptstyle\rm other}})}} \dots \frac{e^{\eps\cdot w_S(v_{k,{\scriptscriptstyle\rm next}})}}{e^{\eps\cdot w_S(v_{k,{\scriptscriptstyle\rm next}})} + e^{\eps\cdot w_S(v_{k,{\scriptscriptstyle\rm other}})}}.
$$
During the execution on $S'$, some of these weights could be larger by (at most) 1. In addition, there might be another (at most one) node along the path $\pi$ such that the weight of both its children is positive, but in that case the weight of one of its children is exactly 1 (since during the execution on $S$ the weight of this node is 0). Hence, the probability of proceeding from this node along $\pi$ (during the execution on $S'$) is at least $(1-\delta)$. Therefore,
\begin{equation}
\Pr\nolimits_{S'}[\Pi]\geq(1-\delta)\cdot\frac{e^{\eps\cdot w_S(v_{1,{\scriptscriptstyle\rm next}})}}{e^{\eps\cdot[w_S(v_{1,{\scriptscriptstyle\rm next}})+1]} + e^{\eps\cdot[w_S(v_{1,{\scriptscriptstyle\rm other}})+1]}} \dots \frac{e^{\eps\cdot w_S(v_{k,{\scriptscriptstyle\rm next}})}}{e^{\eps\cdot[w_S(v_{k,{\scriptscriptstyle\rm next}})+1]} + e^{\eps\cdot[w_S(v_{k,{\scriptscriptstyle\rm other}})+1]}}.
\label{eq:1}
\end{equation}
Because the path $\pi$ is {\em strong}, for every $i$ it holds that $w_S(v_{i,{\scriptscriptstyle\rm next}})>w_S(v_{i,{\scriptscriptstyle\rm other}})-\frac{1}{\eps}\log(\frac{1}{\delta})$, as otherwise the probability of proceeding from $v_i$ to $v_{i,{\scriptscriptstyle\rm next}}$ would be at most $\delta$ and the path would not be strong.
In addition, for every $i$ such that $w_S(v_{i,{\scriptscriptstyle\rm next}})>w_S(v_{i,{\scriptscriptstyle\rm other}})+\frac{1}{\eps}\log(\frac{1}{\delta})$ we have that the term corresponding to $i$ in Inequality~(\ref{eq:1}) is at least $(1-\delta)$.
Finally, note that there could be at most $2\log n$ indices $i$ such that $\left|w_S(v_{i,{\scriptscriptstyle\rm next}})-w_S(v_{i,{\scriptscriptstyle\rm other}})\right|\leq \frac{1}{\eps}\log(\frac{1}{\delta})$. This is because the path $\pi$ ends when reaching a node of weight at most $t=\Theta(\frac{1}{\eps}\log\frac{1}{\delta})$, and every step in which $\left|w_S(v_{i,{\scriptscriptstyle\rm next}})-w_S(v_{i,{\scriptscriptstyle\rm other}})\right|\leq \frac{1}{\eps}\log(\frac{1}{\delta})$ decreases the weight of the next node by at least $1/3$ if $t\geq\frac{2}{\eps}\log(\frac{1}{\delta})$. Putting these observations together, and using the fact that that $k\leq n$ (which follows from Observation~\ref{obs:atmostNpaths}), we get that
\begin{align*}
\Pr\nolimits_{S'}[\Pi]&\geq(1-\delta)\cdot(1-\delta)^n \cdot e^{-2\eps\log n}\cdot \frac{e^{\eps\cdot w_S(v_{1,{\scriptscriptstyle\rm next}})}}{e^{\eps\cdot w_S(v_{1,{\scriptscriptstyle\rm next}}) } + e^{\eps\cdot w_S(v_{1,{\scriptscriptstyle\rm other}})}} \dots \frac{e^{\eps\cdot w_S(v_{k,{\scriptscriptstyle\rm next}})}}{e^{\eps\cdot w_S(v_{k,{\scriptscriptstyle\rm next}}) } + e^{\eps\cdot w_S(v_{k,{\scriptscriptstyle\rm other}}) }}\\
&=(1-\delta)\cdot(1-\delta)^n \cdot e^{-2\eps\log n}\cdot \Pr\nolimits_{S}[\pi]\\
&\geq e^{-4\delta n}\cdot e^{-2\eps\log n}\cdot \Pr\nolimits_{S}[\pi]\\
&\geq e^{-3\eps\log n}\cdot \Pr\nolimits_{S}[\pi],
\end{align*}
where the last inequality holds when $\delta\leq\frac{\eps}{4n}$.
\end{proof}

\begin{lemma}\label{lem:TreeLogPrivacyInduction}
An execution of algorithm \texttt{TreeLog} with $N$ recursive calls is $(5\eps N\log n, 3\delta n N e^{3\eps N\log n})$-differentially private.
\end{lemma}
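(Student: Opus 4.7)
The plan is to proceed by induction on the number of recursive calls $N$. In the base case $N=1$ the algorithm terminates at Step~\ref{step:baseCase}, which is a single invocation of the Exponential Mechanism on a sensitivity-$1$ quality and is therefore $(\eps,0)$-differentially private by Lemma~\ref{prop:exp_mech}; this is comfortably within the claimed bound. For the inductive step I would fix neighboring inputs $S$ and $S'=S\cup\{x'\}$ and an event $T$, recall from the opening of Section~\ref{sec:privacy_analysis} that Step~\ref{step:takeMiddle} can be ignored (it maps neighbors to neighbors), and decompose the output distribution according to the path sampled in Step~\ref{step:randomPath}:
\[
\Pr\nolimits_S[\AAA(S)\in T]=\sum_{\pi\in P(S)}\Pr\nolimits_S[\pi]\cdot R^S_\pi[T],
\]
where $R^S_\pi[T]$ denotes the probability that Steps~\ref{step:buildD}--\ref{step:expMech}, given the sampled path $\pi$ and input $S$, produce an outcome in $T$; an analogous decomposition is used on the $S'$ side.

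I would next split the $S$-side sum into \emph{weak} paths (of $\Pr\nolimits_S$-probability at most $\delta$) and \emph{strong} paths. By Observation~\ref{obs:atmostNpaths} there are at most $n$ paths in $P(S)$, so the weak contribution to $\Pr\nolimits_S[\AAA(S)\in T]$ is at most $n\delta$, which I can absorb into the additive $\delta$-term of the final guarantee. For each strong $\pi\in P(S)$ I would pair it with the set $\Pi'(\pi)\subseteq P(S')$ of paths in $P(S')$ having $\pi$ as a prefix and apply Claim~\ref{claim:pathsProbabilities} to get $\Pr\nolimits_S[\pi]\leq e^{3\eps\log n}\,\Pr\nolimits_{S'}[\Pi'(\pi)]$. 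The earlier observations in this section imply that the sets $\Pi'(\pi)$ are pairwise disjoint across $\pi\in P(S)$ and that every $\pi'\in P(S')$ missed by the pairing is weak for $S'$, so after re-summation I will recover $\Pr\nolimits_{S'}[\AAA(S')\in T]$ up to at most another additive $n\delta$ loss.

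The remaining task is to compare $R^S_\pi[T]$ with $R^{S'}_{\pi'}[T]$ for a strong $\pi$ and any $\pi'\in\Pi'(\pi)$. The key structural observation is that the databases $D(S,\pi)$ and $D(S',\pi')$ constructed in Step~\ref{step:buildD} are neighboring: since $S$ and $S'$ differ only in $x'$, the sequences of ``falling off'' counts along the two paths differ in at most one position by one unit, and any extension of $\pi'$ past the end of $\pi$ contributes at most one extra copy of a level. Conditional on the path, Steps~\ref{step:recursiveCall}--\ref{step:expMech} are an adaptive composition of (i)~a recursive call on $D$, which is $(\eps_{N-1},\delta_{N-1})$-differentially private by induction; (ii)~the Choosing Mechanism on a bounded-growth quality read off from the tree weights of $S$, which is $(\eps,\delta)$-differentially private by Lemma~\ref{lem:CM}; and (iii)~the Exponential Mechanism on a sensitivity-$1$ quality in $S$, which is $(\eps,0)$-differentially private by Lemma~\ref{prop:exp_mech}. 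Since the tree weights change by at most $1$ when switching between $S$ and $S'$, basic composition yields that the entire post-path computation is $(\eps_{N-1}+2\eps,\,\delta_{N-1}+\delta)$-differentially private.

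To finish, I would average the post-path inequality over $\pi'\in\Pi'(\pi)$ with weights $\Pr\nolimits_{S'}[\pi']/\Pr\nolimits_{S'}[\Pi'(\pi)]$, use the disjointness of the $\Pi'(\pi)$'s to reassemble the resulting double sum into $\Pr\nolimits_{S'}[\AAA(S')\in T]$, and verify that the one-step recurrence $\eps_N\leq\eps_{N-1}+3\eps\log n+2\eps$ and $\delta_N\leq e^{3\eps\log n}(\delta_{N-1}+\delta)+n\delta$ is dominated by the closed forms $5\eps N\log n$ and $3\delta n N e^{3\eps N\log n}$; a short induction using $n\geq 2$ handles both. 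The main obstacle I expect is exactly the accounting around paths in $P(S)$ whose analogues in $P(S')$ are either missing or substantially different; Claim~\ref{claim:pathsProbabilities} is tailored to handle the strong paths, and the additive $n\delta$ slack produced by the weak paths (on both sides) is precisely what the $\delta$-term in the privacy bound must cover.
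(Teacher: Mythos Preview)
Your proposal is correct and follows essentially the same approach as the paper's proof: induction on $N$, decomposition by the sampled path, the weak/strong split with the $n\delta$ bound on weak paths, the pairing $\pi\mapsto\Pi'(\pi)$ via Claim~\ref{claim:pathsProbabilities}, the observation that $D(S,\pi)$ and $D(S',\pi')$ are neighbors, and simple composition of the recursive call with the Choosing and Exponential mechanisms to obtain the same recurrence. The only cosmetic differences are that the paper conditions directly on $\Pi'$ rather than averaging over $\pi'\in\Pi'(\pi)$, and it does not explicitly invoke the disjointness of the $\Pi'(\pi)$ or the weakness of unmatched $\pi'\in P(S')$ (the latter is not needed for the $S\to S'$ direction, only for the symmetric direction which the paper leaves as ``similar arguments'').
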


\begin{proof}
Assume towards induction that the lemma holds whenever the algorithm performs at most $N-1$ recursive calls, and let $X$ be a domain that causes the algorithm to perform $N$ recursive calls. Let $S\in X^n$ and let $S'=S\cup\{x'\}$. Fix a set $F\subseteq X$ of possible outcomes. We will show that $\Pr[\texttt{TreeLog}(S)\in F]\leq e^{5\eps N \log n}\cdot\Pr[\texttt{TreeLog}(S')\in F]+3\delta n N e^{3\eps N\log n}$. The other direction follows from similar arguments.

Let $\pi\in P(S)$, and let $\Pi'\subseteq P(S')$ be a subset containing all paths in $P(S')$ such that $\pi$ is their prefix. We first show that the following inequality holds:
\begin{equation}
\Pr[\texttt{TreeLog}(S)\in F | \pi] \leq e^{5\eps (N-1)\log(n)+2\eps}\cdot \Pr[\texttt{TreeLog}(S')\in F | \Pi']+ 3\delta n (N-1) e^{3\eps (N-1)\log n}+2\delta.
\label{eq:2}
\end{equation}

To see this, let $D$ and $D'$ denote the multisets constructed in Step~\ref{step:buildD} of the execution on $S$ and on $S'$ respectively. Now note that when fixing $\pi$ in the execution on $S$, and any $\pi'\in\Pi'$ in the execution on $S'$, the resulting multisets $D,D'$ are neighboring. Specifically, if $\pi=\pi'$ then this is clear
(since the number of elements that fall off $\pi$ in $S$ and $S'$ differ by at most one in a single position).
Now suppose that $\pi\neq \pi'$ (but $\pi$ is still a prefix of $\pi'$). Let $v_{\pi}$ denote the last node in $\pi$. Since $\pi$ ends in $v_{\pi}$ but $\pi'$ does not, we have that $w_S(v_{\pi})=t$ and $w_{S'}(v_{\pi})=t+1$.
This means that the part of $D$ generated before we reach $v_{\pi}$ when running on $S$ is identical to the part of $D'$ generated before we reach $v_{\pi}$ when running on $S'$,
because the weight of all the nodes that fall off $\pi$ and $\pi'$ before $v_{\pi}$ is the same in $S$ and in $S'$. Since $w_{S'}(v_{\pi})=t+1$ and since we stop adding elements to $D$ once the remaining weight is $t$, we get that exactly one more element will be added to $D'$. %

Therefore, once we fix $\pi$ and $\pi'$, Step~\ref{step:recursiveCall} satisfies $(5\eps (N-1)\log n, 3\delta n (N-1) e^{3\eps (N-1)\log n})$-differential privacy by the induction assumption. In Steps~\ref{step:choosingMech} and~\ref{step:expMech} we apply the Choosing Mechanism and the Exponential Mechanism, each of which satisfies $(\eps,\delta)$-differential privacy. Inequality~\ref{eq:2} follows from (simple) composition.

The proof now follows by the following inequality.
{\small
\begin{align*}
&\Pr[\texttt{TreeLog}(S)\in F] = \sum_{\pi\in P(S)}\Pr\nolimits_S[\pi]\cdot\Pr[\texttt{TreeLog}(S)\in F|\pi]\\
&\quad\leq \sum_{{\rm weak }\; \pi}\Pr\nolimits_S[\pi]\cdot\Pr[\texttt{TreeLog}(S)\in F|\pi]
+ \sum_{{\rm strong }\; \pi}\Pr\nolimits_S[\pi]\cdot\Pr[\texttt{TreeLog}(S)\in F|\pi]\\
&\quad\leq n\delta + \sum_{{\rm strong }\; \pi}\Pr\nolimits_S[\pi]\cdot\Pr[\texttt{TreeLog}(S)\in F|\pi]\\
&\quad\leq n\delta + \sum_{{\rm strong }\; \pi}e^{3\eps\log n}\cdot\Pr\nolimits_{S'}[\Pi']\cdot\left(e^{5\eps (N-1)\log(n)+2\eps}\cdot\Pr[\texttt{TreeLog}(S')\in F|\Pi']+3\delta n (N-1) e^{3\eps (N-1)\log n}+2\delta\right)\\
&\quad\leq n\delta +3\delta n (N-1) e^{3\eps N\log n} +2\delta e^{3\eps N\log n} + \sum_{{\rm strong }\; \pi}e^{5\eps N\log(n)}\cdot\Pr\nolimits_{S'}[\Pi']\cdot\Pr[\texttt{TreeLog}(S')\in F|\Pi']\\
&\quad\leq 3\delta n N e^{3\eps N\log n} + e^{5\eps N\log(n)}\cdot\Pr[\texttt{TreeLog}(S')\in F],
\end{align*}
}
where the third inequality follows from Claim~\ref{claim:pathsProbabilities} and from Inequality~\ref{eq:2}.
\end{proof}

Recall that each recursive call shrinks the domain size logarithmically, and hence, on a database $S\in X^*$ algorithm \texttt{TreeLog} preforms at most $\log^*|X|$ recursive calls.
The following lemma, therefore, follows directly from Lemma~\ref{lem:TreeLogPrivacyInduction}.

\begin{lemma}\label{lem:TreeLogPrivacy}
Let algorithm \texttt{TreeLog} be executed on databases containing $n$ elements from a domain $X$. The algorithm is $(5\eps \log^*|X| \log n, 3\delta n \log^*|X| e^{3\eps \log^*|X| \log n})$-differentially private.
\end{lemma}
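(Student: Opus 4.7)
The plan is to derive this lemma as a direct corollary of Lemma~\ref{lem:TreeLogPrivacyInduction} by bounding the recursion depth of \texttt{TreeLog} by $\log^*|X|$. The key observation is that each recursive call shrinks the effective domain logarithmically: in a call on a domain of size $m$, the tree $T$ constructed in Step~\ref{step:buildTree} has depth $\log m$, and in Step~\ref{step:buildD} the database $D$ passed to the next recursive call consists of elements from $\{0,1,\ldots,\log m\}$ (the level indices of $T$). Thus the domain for the next call has size at most $\log m + 1 = O(\log m)$.

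First I would verify that the base case in Step~\ref{step:baseCase} is triggered once the domain size reaches $O(1)$, at which point no further recursive call is made (the algorithm simply outputs via the Exponential Mechanism and returns). Iterating the map $m \mapsto O(\log m)$ starting from $m = |X|$, we reach a constant after at most $\log^*|X|$ iterations by the definition of $\log^*$. Hence, setting $N$ to be the number of recursive calls in an execution of \texttt{TreeLog} on $X^n$, we have $N \leq \log^*|X|$.

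Finally, since privacy degrades monotonically in $N$ (both the $\eps$ term $5\eps N \log n$ and the $\delta$ term $3\delta n N e^{3\eps N \log n}$ are increasing in $N$), we may substitute $N = \log^*|X|$ into the bound of Lemma~\ref{lem:TreeLogPrivacyInduction} to conclude that \texttt{TreeLog} is $(5\eps \log^*|X| \log n,\; 3\delta n \log^*|X| e^{3\eps \log^*|X| \log n})$-differentially private. No obstacle is anticipated: the entire content lies in the recursion-depth bound, which is immediate from the domain-shrinking behavior of Steps~\ref{step:buildTree}--\ref{step:buildD}, and all the heavy lifting (composition across recursive calls, handling of weak versus strong paths, neighboring embedded databases) has already been carried out in Lemma~\ref{lem:TreeLogPrivacyInduction}.
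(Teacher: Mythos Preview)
Your proposal is correct and matches the paper's own argument essentially verbatim: the paper simply notes that each recursive call shrinks the domain logarithmically, so \texttt{TreeLog} performs at most $\log^*|X|$ recursive calls, and then states that the lemma follows directly from Lemma~\ref{lem:TreeLogPrivacyInduction}. Your extra remarks about the elements of $D$ lying in $\{0,\dots,\log m\}$ and about monotonicity in $N$ only make this explicit.
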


\subsection{Utility Analysis}
\label{sec:utility_analysis}

Given a database $S\subseteq X$ over a totally ordered domain $X$ and a point $y\in X$, we say that $y$ is an interior point of $S$ with score $\Delta$ if there are at least $\Delta$ elements in $S$ that are greater or equal to $y$ and there are at least $\Delta$ elements in $S$ that are smaller or equal to $y$. We say that $y$ is an interior point, without mentioning the score, to mean that it is an interior point with score at least 1.

\begin{lemma}\label{lem:TreeLogUtility}
Let algorithm \texttt{TreeLog} be executed on a database $S\in X^n$ of size $n\geq O\left(\frac{\log^*|X|}{\eps}\log\frac{1}{\delta}\right)$. With probability at least $1-O\left(\delta\log^*|X|\right)$, the algorithm returns an interior point of $S$ with score at least $\Omega\left(\frac{1}{\eps}\log\frac{1}{\delta}\right)$.
\end{lemma}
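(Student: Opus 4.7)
The plan is to prove the lemma by induction on $\log^*|X|$, which upper-bounds the recursion depth of \texttt{TreeLog}. The base case $|X|=O(1)$ is handled by the single Exponential Mechanism invocation in Step~\ref{step:baseCase}: with optimal quality $\lfloor n/2\rfloor$ and utility loss $O(\frac{1}{\eps}\log\frac{|X|}{\beta})$ (Lemma~\ref{prop:exp_mech}), choosing $\beta=\delta$ immediately yields the stated bound. The key observation that drives the inductive step is that the trimming in Step~\ref{step:takeMiddle} is essentially free: any interior point $y$ of $\hat S$ with score $\geq 1$ is an interior point of $S$ with score $\geq t+1$, because the $t$ discarded largest elements each exceed $\max\hat S\geq y$ and the $t$ discarded smallest are each at most $y$. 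Since the Exp Mech in Step~\ref{step:expMech} runs over only four candidates, its $O(\frac{1}{\eps}\log\frac{1}{\beta})$ loss is absorbed by $t=\Theta(\frac{1}{\eps}\log\frac{1}{\delta})$ when $\beta=\delta$. Hence it suffices to show that, with probability $1-O(\delta\log^*|X|)$, at least one of the four candidates $\{v_{\scriptscriptstyle\rm left}, v_{\scriptscriptstyle\rm right}, v_{\scriptscriptstyle\rm inner\text{-}left}, v_{\scriptscriptstyle\rm inner\text{-}right}\}$ is an interior point of $\hat S$.

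To invoke the induction, note that the database $D$ built in Step~\ref{step:buildD} has $|D|=n-3t$ and lives in $\{0,1,\dots,\log|X|\}$, which incurs one fewer recursive call. The induction hypothesis therefore supplies, with failure probability $O(\delta\log^*|X|)$, an $\ell^*$ that is an interior point of $D$ with score at least $\Delta := \Omega(\frac{1}{\eps}\log\frac{1}{\delta})$. The next crucial step is to translate this into a weight bound in $T$. Letting $L$ denote the last level on the path $\pi$, telescoping gives
\[
w(v_\pi^{\ell^*}) \;=\; w(v_\pi^L) + \sum_{\ell=\ell^*}^{L-1}(\text{fall-off at level }\ell) \;\geq\; |\{d\in D : d \geq \ell^*\}| \;\geq\; \Delta,
\]
where the middle inequality accounts for both the fall-off contributions and any padding, after a short calculation that bounds the padding amount (when present) by $w(v_\pi^L)$. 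Since the quality function $q(v)=w(v)$ restricted to nodes at level $\ell^*$ is $1$-bounded-growth, Lemma~\ref{lem:CM} guarantees that the Choosing Mechanism returns $v^*$ at level $\ell^*$ with $w(v^*) \geq \Delta - O(\frac{1}{\eps}\log\frac{n}{\beta\eps\delta}) \geq 1$ (taking the hidden constant in $\Delta$ sufficiently large), except with probability $O(\delta)$.

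The main obstacle, and the reason the algorithm outputs one of \emph{four} leaves rather than two, is the case analysis on $v^*$. \textbf{Case A:} $w(v^*)<n-2t$. Then some element of $\hat S$ lies outside the subtree of $v^*$; if such an element lies to the left of that subtree, then $v_{\scriptscriptstyle\rm left}$ has at least one element of $\hat S$ on each side of it (the outside element on the left, and the $w(v^*)\geq 1$ elements inside the subtree on the right), making it an interior point of $\hat S$, and symmetrically $v_{\scriptscriptstyle\rm right}$ works when an element lies to the right. \textbf{Case B:} $w(v^*)=n-2t$. Since the weights at level $\ell^*$ sum to $n-2t$, $v^*$ must coincide with $v_\pi^{\ell^*}$. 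Hence no input point fell off before level $\ell^*$, so $|\{d\in D : d<\ell^*\}|=0$ and therefore $|\{d\in D : d=\ell^*\}|\geq\Delta$. If $v^*$ is a leaf then necessarily $\hat S$ consists of $n-2t$ copies of this single element, and $v_{\scriptscriptstyle\rm left}=v^*$ is an interior point with score $n-2t$; otherwise $\pi$ continues past $v^*$, forcing the on-path child to have weight $\geq 1$ and the off-path child to have weight $\geq\Delta$, so both $v_{\scriptscriptstyle\rm inner\text{-}left}$ and $v_{\scriptscriptstyle\rm inner\text{-}right}$ are interior points of $\hat S$. In every branch at least one of the four candidates is an interior point of $\hat S$, so by the initial reduction the Exponential Mechanism of Step~\ref{step:expMech} returns an interior point of $S$ with score $\Omega(\frac{1}{\eps}\log\frac{1}{\delta})$; a union bound over the recursive call, the Choosing Mechanism, and the Exp Mech failures yields the claimed $O(\delta\log^*|X|)$ failure probability.
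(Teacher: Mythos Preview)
Your proof is correct and follows essentially the same approach as the paper's own argument: induction on the recursion depth, using the trimming in Step~\ref{step:takeMiddle} to upgrade any interior point of $\hat S$ to one of $S$ with score $\Omega(t)$, translating the recursive guarantee on $\ell^*$ into a weight lower bound on $v_\pi^{\ell^*}$, invoking the Choosing Mechanism, and then splitting into the two cases $w(v^*)<|\hat S|$ versus $w(v^*)=|\hat S|$. Your treatment is in fact slightly more careful than the paper's in two places---you make the telescoping/padding bookkeeping for $w(v_\pi^{\ell^*})\geq|\{d\in D:d\geq\ell^*\}|$ explicit, and you separately handle the degenerate sub-case where $v^*$ is a leaf in Case~B---but these are refinements of the same argument rather than a different route.
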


\begin{proof}
The utility analysis is by induction on the number of recursive calls. Specifically, let $N$ denote the number of recursive calls, and let $S_i$ denote the input to the $i$th recursive call (we identify the top-level call with the index $0$, so $S_0=S$, and the deepest call with the index $N$). We denote $n_i=|S_i|$. We will show that if the $(i+1)$th call returns an interior point of $S_{i+1}$ then, w.h.p., so does the $i$th call.

For the base case, observe that every iteration of \texttt{TreeLog} shrinks the size of the domain logarithmically, and hence,  $N=O(\log^*|X|)$. In contrast, the database size only decreases additively in each recursive call, by a factor of $3t$. Therefore, when $n\geq O(t\cdot \log^*|X|)=O(\frac{\log^*|X|}{\eps}\log\frac{1}{\delta})$ the last recursive call (which halts in Step~1) is performed with a database of size $\Omega(\frac{1}{\eps}\log\frac{1}{\delta})$. In this case, by
Lemma \ref{prop:exp_mech} (Exponential mechanism with $\beta=\delta$), the last recursive call returns an interior point of $S_N$ with score at least $\Omega\left(\frac{1}{\eps}\log\frac{1}{\delta}\right)$ with probability at least $(1-\delta)$.

Now  consider the $i$th call. Let $\hat{S}_i$ be the database of size $n_i-2t$ constructed in Step~\ref{step:takeMiddle}, let $\pi$ be the path selected in Step~\ref{step:randomPath}, and let $D$ be the database constructed in Step~\ref{step:buildD}. Assume that by induction, with probability at least $(1-2\delta(N-(i+1)))$, the recursive call in Step~\ref{step:recursiveCall} returned a point $\ell^*$ that is an interior point of $D$ with score at least $\Omega\left(\frac{1}{\eps}\log\frac{1}{\delta}\right)$. This means that at least $\Omega\left(\frac{1}{\eps}\log\frac{1}{\delta}\right)$ points from $\hat{S}_i$ fall off the path $\pi$ on or before level $\ell^*$ of the tree $T$, and at least $\Omega\left(\frac{1}{\eps}\log\frac{1}{\delta}\right)$ points from $\hat{S}_i$ fall off on or after level $\ell^*$.

Let $v_{\pi}^{\ell^*}$ denote the node in $\pi$ at level $\ell^*$ of the tree. Since at least $\Omega\left(\frac{1}{\eps}\log\frac{1}{\delta}\right)$ points from $\hat{S}_i$ fall off from $\pi$ on or after level $\ell^*$, we get that the weight of $v_{\pi}^{\ell^*}$ is at least $\Omega\left(\frac{1}{\eps}\log\frac{1}{\delta}\right)$. Hence, by
Lemma \ref{lem:CM} (Choosing mechanism with $\beta=\delta$), the node $v^*$ identified in Step~\ref{step:choosingMech} also has weight at least $\Omega\left(\frac{1}{\eps}\log\frac{1}{\delta}\right)$ with probability at least $(1-\delta)$. Suppose that the weight of $v^*$ is strictly less than $|\hat{S}_i|$. Then either $v_{\scriptscriptstyle\rm left}$ or $v_{\scriptscriptstyle\rm right}$ is an interior point of $\hat{S}_i$, and hence, it is an interior point of $S_i$ with score at least $\Omega\left(\frac{1}{\eps}\log\frac{1}{\delta}\right)$. In this case, with probability at least $(1-\delta)$, the Exponential Mechanism in Step~\ref{step:expMech} identifies an interior point of $S_i$ with score at least $\Omega\left(\frac{1}{\eps}\log\frac{1}{\delta}\right)$.

Now suppose that $v^*$ has weight exactly $|\hat{S}_i|$, which means that $v^*=v_{\pi}^{\ell^*}$ (since in this case the path $\pi$ must go through $v^*$). Recall that at least $\Omega\left(\frac{1}{\eps}\log\frac{1}{\delta}\right)$ points from $\hat{S}_i$ fall off from $\pi$ on or before level $\ell^*$. Since $v^*=v_{\pi}^{\ell^*}$ has weight $|\hat{S}_i|$, this means that all these points fall of $\pi$ exactly on level $\ell^*$. So both children of $v^*$ have positive weight. This means that either $v_{\scriptscriptstyle\rm inner\text{-}left}$ of $v_{\scriptscriptstyle\rm inner\text{-}right}$ is an interior point of $\hat{S}_i$, and hence, it is an interior point of $S_i$ with score at least $\Omega\left(\frac{1}{\eps}\log\frac{1}{\delta}\right)$. As we argued in the base case of the induction, with probability at least $(1-\delta)$, the Exponential Mechanism in Step~\ref{step:expMech} identifies an interior point of $S_i$ with score at least $\Omega\left(\frac{1}{\eps}\log\frac{1}{\delta}\right)$.

Overall, with probability at least $(1-2\delta(N-i))$, the $i$th call returns an interior point of $S_i$ with score at least $\Omega\left(\frac{1}{\eps}\log\frac{1}{\delta}\right)$.
\end{proof}

Theorem~\ref{thm:introPowerTwo} now follows by combining Lemma~\ref{lem:TreeLogPrivacy} and Lemma~\ref{lem:TreeLogUtility}. 

\section{Decomposing Algorithm \texttt{TreeLog}}\label{sec:decompose}

In this section we reduce the sample complexity of algorithm \texttt{TreeLog} from $\tilde{O}\left(\left(\log^*|X|\right)^2\right)$ to $\tilde{O}\left(\left(\log^*|X|\right)^{1.5}\right)$. Specifically, we show the following theorem.

\begin{theorem}\label{thm:mainAdvanced}
Let $X$ be a totally ordered domain. There exists an $(\eps,\delta)$-differentially private algorithm for solving the interior point problem on databases of size $n=\Tilde{O}\left(\frac{1}{\eps}\cdot\left(\log^*|X|\right)^{1.5}\cdot\log^{1.5}(\frac{1}{\delta})\right)$.
\end{theorem}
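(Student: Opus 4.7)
The sample complexity $\tilde O(\frac{1}{\eps}\log(1/\delta)(\log^*|X|)^2)$ established in Theorem~\ref{thm:introPowerTwo} decomposes as the product of two factors of $N := \log^*|X|$. One is the recursion depth: each recursive call of \texttt{TreeLog} discards $\Theta(t)$ elements through the trimming of Step~\ref{step:takeMiddle}, so any sample size $n$ must dominate $tN$. The second is the privacy loss: Lemma~\ref{lem:TreeLogPrivacyInduction} combines the per-level privacy guarantees of Claim~\ref{claim:pathsProbabilities}, the Choosing Mechanism, and the Exponential Mechanism by \emph{simple} composition, forcing the per-level privacy parameter $\eps'$ to be $\Theta(\eps/N)$ and hence $t = \Theta(\log(1/\delta)/\eps')$ to scale linearly in $N$. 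The plan for Theorem~\ref{thm:mainAdvanced} is to trade simple composition across the $N$ recursive calls for \emph{advanced} composition, reducing the required per-level $\eps'$ from $\Theta(\eps/N)$ to $\Theta(\eps/\sqrt{N\log(1/\delta)})$, and therefore shrinking $tN$ from $\tilde O(N^2\log(1/\delta)/\eps)$ to $\tilde O(N^{1.5}\log^{1.5}(1/\delta)/\eps)$.

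To carry this out, I would first recast a single iteration of \texttt{TreeLog} as the composition of three clean sub-mechanisms that access the input database $S$: (i) the sampling of the path $\pi$ and the deterministic construction of the embedded database $D$ (Steps~\ref{step:randomPath}--\ref{step:buildD}); (ii) the Choosing Mechanism of Step~\ref{step:choosingMech}; and (iii) the Exponential Mechanism of Step~\ref{step:expMech}. Parts (ii) and (iii) are already $(\eps',\delta')$-differentially private by Lemmas~\ref{lem:CM} and~\ref{prop:exp_mech}. For (i), I would refine the proof of Claim~\ref{claim:pathsProbabilities} into an explicit coupling statement: for neighboring $S,S'$, there is a joint law on $(\pi,\pi')$ such that, outside a $\delta'$-mass bad event, the induced embedded databases $D$ and $D'$ are neighboring and the log-probability ratio at every output is bounded by $O(\eps'\log n)$. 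Packaged this way, (i) is $(O(\eps'\log n), O(\delta' n))$-DP with respect to $S$, and the recursive call on $D$ reduces to post-processing by the inductive hypothesis.

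With this per-level decomposition in hand, the privacy analysis follows by choosing $\eps' = \Theta(\eps/(\log n\cdot\sqrt{N\log(1/\delta)}))$ and $\delta' = \Theta(\delta/(nN))$ and invoking the advanced composition theorem of~\cite{DRV10} across the $O(N)$ nested sub-mechanisms, yielding overall $(\eps,O(\delta))$-differential privacy. The trimming parameter required by all three sub-mechanisms is then $t = \tilde\Theta(\log(1/\delta')/\eps') = \tilde\Theta(\sqrt{N}\log^{1.5}(1/\delta)/\eps)$, and the utility argument of Lemma~\ref{lem:TreeLogUtility} goes through essentially verbatim with this $t$, needing $n = \Omega(tN) = \tilde\Omega((\log^*|X|)^{1.5}\log^{1.5}(1/\delta)/\eps)$, exactly as in the statement of Theorem~\ref{thm:mainAdvanced}.

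The main obstacle I expect is extracting the explicit coupling used to package sub-mechanism (i) as a bona fide $(\eps',\delta')$-DP operation. The paths $\pi$ on input $S$ and $\pi'$ on $S'$ live in different supports, and Claim~\ref{claim:pathsProbabilities} only compares $\Pr_S[\pi]$ with $\Pr_{S'}[\Pi']$ for the set $\Pi'$ of $S'$-paths extending $\pi$; converting this into a coupling that preserves the ``$D,D'$ neighboring'' property on all but $\delta'$ of the mass -- so that the downstream recursive call sees (almost-)neighboring inputs and the inductive privacy guarantee composes cleanly -- is the technical heart of the argument. A secondary issue is handling the weak-path and bad-coupling events uniformly across the $N$ levels so that the additive $\delta$ losses still accumulate to only $O(\delta)$ overall after advanced composition.
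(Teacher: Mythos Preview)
Your high-level intuition---replace simple by advanced composition across the $N=\log^*|X|$ levels---is exactly right, but the concrete plan has a real gap. You propose to ``package'' the path-sampling step (i) as an $(O(\eps'\log n),O(\delta' n))$-DP mechanism whose output is the embedded database $D$, and then treat the recursive call on $D$ as post-processing. But the map $S\mapsto D$ is \emph{not} differentially private: $D$ is essentially a compression of $\hat S$ and reveals almost everything about it. The coupling you describe (on neighboring $S,S'$ there is a joint law on $(\pi,\pi')$ under which $D,D'$ are neighbors with bounded log-ratio) is precisely the input to the inductive calculation in Lemma~\ref{lem:TreeLogPrivacyInduction}; it lets you chain the privacy of the recursive call on $D$ back to a guarantee on $S$, but that chaining \emph{is} simple composition and cannot be upgraded to advanced composition. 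Advanced composition requires each of the $O(N)$ sub-mechanisms to be $(\eps',\delta')$-DP as a function of the \emph{same} input $S$, with only privately computed outputs shared between them---and the intermediate databases $D_1,D_2,\ldots$ are exactly the kind of non-private shared state that this forbids. The paper states this obstruction explicitly in the paragraphs preceding the construction.

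The paper's fix is an algorithmic change, not just a sharper analysis. It replaces the randomized path by the \emph{deterministic} heavy path (always go to the heavier child), so that every intermediate database $S_d$ is a deterministic function of $S$; crucially, for neighboring $S,S'$ the databases $S_d,S'_d$ remain neighbors as long as no single node sheds $\Omega(k)$ weight (Lemma~\ref{lem:ConstructPathsNeighboring}). A single sparse-vector call (algorithm \texttt{StoppingPoint}) privately identifies the first level $d^*$ at which this fails; one randomized-path step (\texttt{OneRandomPath}) handles level $d^*$; and then each of the $O(N)$ \texttt{LevelUp} steps recomputes its $S_d$ from scratch via \texttt{ConstructPaths} and is therefore genuinely $(2\eps,2\delta)$-DP as a function of $S$. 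Now advanced composition applies cleanly to the \texttt{LevelUp} calls, giving the $\sqrt{N}$ saving. Your secondary concerns (uniform handling of bad events) are real but minor; the missing idea is the deterministic reconstruction that makes the per-level mechanisms DP on the original input.
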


Recall that in the construction in Section~\ref{sec:TreeLog}, the database size is reduced (additively) by $\approx\frac{1}{\eps}\log\frac{1}{\delta}$ in each iteration, and hence we needed to start with at least $\approx\frac{\log^*|X|}{\eps}\log\frac{1}{\delta}$ input points in order to make it to the end of the recursion without losing all the data points. In addition, with these parameters, algorithm \texttt{TreeLog} only guaranteed $\approx(\eps\cdot\log^*|X|,\delta\cdot\log^*|X|)$-differential privacy. To get $(\eps,\delta)$-privacy overall, we divided our privacy parameters by $\approx\frac{1}{\log^*|X|}$, which gives sample complexity of $\tilde{O}\left(\left(\log^*|X|\right)^2\right)$. Informally, in this section we apply composition theorems for differential privacy to argue that it suffices to work with a privacy parameter of $\approx\frac{\eps}{\sqrt{\log^*|X|}}$, which would result in a sample complexity of $\tilde{O}\left(\left(\log^*|X|\right)^{1.5}\right)$. However, as we explain next, this requires some additional technical work, and does not follow from a direct application of existing composition theorems to the construction of Section~\ref{sec:TreeLog}.

Recall that composition theorems for differential privacy state that the application of $k$, $(\eps,\delta)$-differentially private mechanisms,  satisfies $\approx(\eps\sqrt{k},\delta k)$-differential privacy. %
For example, consider an algorithm $\BBB$ that interacts with its input database $S$ only through differentially private mechanisms, as follows. In every step $i\in[k]$, algorithm $\BBB$ selects an $(\eps,\delta)$-differentially private mechanism $\AAA_i$, and runs $\AAA_i$ on $S$ to obtain an outcome $a_i$, where the choice of $\AAA_i$ might depend on the previous outcomes $a_1,\dots,a_{i-1}$. Assuming that this is the only interaction $\BBB$ has with its input database, then these theorems state that $\BBB$ is $\approx(\eps\sqrt{k},\delta k)$-differentially private.

Unraveling the recursion in
Algorithm \texttt{TreeLog} from 
Section~\ref{sec:TreeLog}, we observe that it consists  of
$2\log^*|X|$ steps. Each of the first 
$\log^*|X|$ steps computes the input to the following recursive call, and each of the last 
$\log^*|X|$ steps uses the output of the recursive call to compute an interior point of
large score.
The difficulty in applying the standard composition arguments  to \texttt{TreeLog} 
arises since the first $\log^*|X|$ steps are obviously not differentially private: They just spit out the private input compressed into a smaller domain. (Furthermore, the path which we use to do the compression is also highly sensitive.) 

Alternatively, one can think of Algorithm \texttt{TreeLog} as consisting of $\log^*|X|$ steps, where the first step computes all of the paths and databases throughout the execution down to the last recursive call, and then privately identifies an interior point of the database in that last recursive call. %
Afterwards, each of the next steps takes a privately computed interior point for the database at depth $d$ of the recursion, and privately translates it into an interior point for the database at depth $d-1$. Now, when viewing algorithm \texttt{TreeLog} this way, the difficulty in applying composition theorems is that all of these $\log^*|X|$ steps {\em share a state} which was {\em not} computed privately. Specifically, all of these steps know (and use) the paths (or the databases) that were computed throughout the execution. This picture is not supported by existing composition theorems, which allow only a {\em privately} computed state to be shared among the composed mechanisms.

Loosely speaking, we overcome this difficulty by modifying
\texttt{TreeLog} such that 
it becomes a composition of $O(\log^*|X|)$
differentially private mechanisms that do not share a non-private state. 
The crucial observation that enables this modification is that once we reach a level $d^*\in[\log^*|X|]$ of the recursion in which many points fall off the heavy path at the same node, then we can in fact stop the recursion and {\em privately} report the level $d^*$. 
Recall that in algorithm \texttt{TreeLog} we sample a random path $\pi$ by proceeding randomly from a node $v$ to one of its children with probability that grows exponentially with the weight of the child. Now, since $d^*$ is the first level in the recursion in which many points fall off the heavy path at the same node, in all levels $d<d^*$, when constructing the path $\pi$ we always have that the weight of one child is significantly bigger then the weight of its sibling. This means that we will in fact proceed to the heavier child with overwhelming probability, to the extent that we can ignore the randomness in that step and just select it deterministically.

This allows us to modify the way in which the paths are selected throughout the execution to be {\em deterministic}. As a result, we are able to
 decompose  \texttt{TreeLog} into $O(\log^*|X|)$ different algorithms (one for every level of the recursion), where each of these algorithms recomputes from scratch all of the input databases up to its current level. More specifically, we decompose algorithm \texttt{TreeLog} into the following 5 algorithms:

\begin{enumerate}
	\item[{\bf 1.}] {\bf Algorithm \texttt{ConstructPaths}:} This algorithm (deterministically) generates all the paths and the databases for the entire execution. The other algorithms use \texttt{ConstructPaths} as a subroutine in order to recompute the paths and the databases.
	\item[{\bf 2.}] {\bf Algorithm \texttt{StoppingPoint}:} This algorithm instantiates the sparse vector technique in order to privately identify the first level in which, when following the heavy path deterministically, we reach a node such that both its children have large weight. In other words, this algorithm privately computes a stopping point for the construction.
	\item[{\bf 3.}] {\bf Algorithm \texttt{OneRandomPath}:} This algorithm is similar to (one call of) algorithm \texttt{TreeLog} that samples a path randomly. It is applied only once at the last level in the execution, where we might not be able to select the path deterministically.
	\item[{\bf 4.}] {\bf Algorithm \texttt{LevelUp}:} This algorithm gets a parameter $d$ and a point $y$ which is an interior point of the $(d+1)$th database that algorithm \texttt{ConstructPaths} generates, and translates $y$ into an interior point of the $d$th database that \texttt{ConstructPaths} generates. This algorithm is executed up to $\log^*|X|$ times (with different parameters).
		\item[{\bf 5.}] {\bf Algorithm \texttt{HeavyPaths}:} This is a wrapper algorithm that runs the previous 4 algorithms.
\end{enumerate}

Before presenting these five algorithms, we recall the sparse vector technique of Dwork et al.~\cite{DNRRV09}, which we use in order to compute a stopping point for the construction. Consider a sequence of low sensitivity functions $f_1,f_2,\ldots,f_m$, which are given (one by one) to a data curator (holding a database $S$). Algorithm \texttt{AboveThreshold} by Dwork et al.~\cite{DNRRV09} privately identifies the first query $f_i$ whose value $f_i(S)$ is greater than some threshold. Formally,

\begin{theorem}[Algorithm \texttt{AboveThreshold}~\cite{DNRRV09}]\label{thm:AboveThreshold}
There exists an $(\eps,0)$-differentially private algorithm $\cal A$ such that for $m$ rounds, after receiving a sensitivity-1 query $f_i:U^*\rightarrow\R$, algorithm $\cal A$ either outputs $\top$ and halts, or outputs $\bot$ and waits for the next round.
If $\cal A$ was executed with a database $S\in U^*$ and a threshold parameter $c$, then the following holds with probability $(1-\beta)$:
(i) If a query $f_i$ was answered by $\top$ then $f_i(S)\geq c-\frac{8}{\eps}\log(2m/\beta)$;
(ii) If a query $f_i$ was answered by $\bot$ then $f_i(S)\leq c+\frac{8}{\eps}\log(2m/\beta)$.
\end{theorem}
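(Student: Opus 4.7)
The plan is to present the standard \texttt{AboveThreshold} construction and then establish privacy and utility separately. The algorithm draws, once at the start, a threshold noise $\rho \sim \Lap(2/\eps)$; upon each arrival of a query $f_i$, it draws a fresh $\nu_i \sim \Lap(4/\eps)$, outputs $\top$ and halts if $f_i(S) + \nu_i \geq c + \rho$, and otherwise outputs $\bot$ and waits. Crucially, queries answered by $\bot$ propagate no database-dependent information except the binary fact that their noisy value fell below the noisy threshold.

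For privacy, fix neighboring databases $S, S'$ and a possible transcript $\sigma$. The transcript is either all-$\bot$, or of the form $\sigma = (\bot^{k-1}, \top)$. Condition on the noises $\nu_1, \ldots, \nu_{k-1}$ (whose distribution is independent of the database) and set $g_S = \max_{j<k}(f_j(S) + \nu_j) - c$. Then the conditional probability that $M(S)$ produces $\sigma$ equals
$$\int \int p_\rho(\rho)\, p_{\nu_k}(\nu_k)\, \mathbbm{1}\!\left[g_S < \rho \leq f_k(S) + \nu_k - c\right] d\rho\, d\nu_k.$$
I would then perform the change of variables $\hat\rho = \rho - (g_{S'} - g_S)$ and $\hat\nu_k = \nu_k + (f_k(S') - g_{S'}) - (f_k(S) - g_S)$, which transforms the indicator for the $S'$ integrand into the same indicator for $S$. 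Since $|g_S - g_{S'}| \leq 1$, the ratio of $\Lap(2/\eps)$ densities under the $\rho$-shift is at most $e^{\eps/2}$; since $|(f_k(S') - g_{S'}) - (f_k(S) - g_S)| \leq 2$, the $\Lap(4/\eps)$ density ratio under the $\nu_k$-shift is also at most $e^{\eps/2}$; multiplying gives $e^\eps$. The all-$\bot$ case is strictly easier, as it drops the $\nu_k$ factor and only incurs the $\rho$-shift.

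For utility, note that the decision at step $i$ is the sign of $(f_i(S) + \nu_i) - (c + \rho)$. Set $\tau = \tfrac{4}{\eps}\log(2m/\beta)$. By the Laplace tail bound $\Pr[|\Lap(b)| > t] = e^{-t/b}$, we get $\Pr[|\rho| > \tau] \leq \beta/2$ and $\Pr[|\nu_i| > \tau] \leq \beta/(2m)$ for each $i$; a union bound over the $m$ query noises and the threshold noise bounds the total failure probability by $\beta$. In the good event, any query answered $\bot$ satisfies $f_i(S) < c + \rho - \nu_i \leq c + 2\tau = c + \tfrac{8}{\eps}\log(2m/\beta)$, and any query answered $\top$ satisfies $f_i(S) \geq c + \rho - \nu_i \geq c - 2\tau = c - \tfrac{8}{\eps}\log(2m/\beta)$, yielding both utility conditions.

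The main obstacle is the privacy analysis, specifically the bookkeeping in the change of variables: one must carefully see that only the \emph{single} noise $\nu_k$ tied to the halting step (plus the threshold $\rho$) can depend on the database via the transcript, while all earlier $\nu_j$ can be conditioned out. This is exactly why the threshold noise is drawn only once and is scaled by factor $2/\eps$ while the per-query noise is scaled by $4/\eps$: the former absorbs a shift of at most $1$ on every transcript, and the latter absorbs a shift of at most $2$ on the unique halting query, so the two contributions split the privacy budget as $\eps/2 + \eps/2$.
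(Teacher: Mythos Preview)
The paper does not give its own proof of this theorem: it is stated as a black-box result imported from~\cite{DNRRV09} (the sparse vector technique) and used without argument. So there is no ``paper's proof'' to compare against.

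Your proposal is the standard construction and analysis of \texttt{AboveThreshold} (essentially the treatment in the Dwork--Roth monograph), and it is correct. The change-of-variables argument for privacy is right: conditioning on the pre-halt noises $\nu_1,\ldots,\nu_{k-1}$, shifting $\rho$ by $g_{S'}-g_S$ (magnitude $\leq 1$, absorbed by $\Lap(2/\eps)$ for an $e^{\eps/2}$ factor) and $\nu_k$ by $(f_k(S')-g_{S'})-(f_k(S)-g_S)$ (magnitude $\leq 2$, absorbed by $\Lap(4/\eps)$ for another $e^{\eps/2}$ factor) maps the $S'$-event to the $S$-event and gives $(\eps,0)$-privacy. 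The utility computation with $\tau=\tfrac{4}{\eps}\log(2m/\beta)$ and a union bound over $\rho$ and the $m$ query noises yields exactly the stated $\tfrac{8}{\eps}\log(2m/\beta)$ error with failure probability at most $\beta$. One cosmetic point: your bound $\Pr[|\rho|>\tau]\leq\beta/2$ is loose (the actual bound is $(\beta/2m)^2$), but that only helps.
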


We are now ready to present our improved construction for privately identifying an interior point. 
We begin by presenting algorithms \texttt{ConstructPaths} and \texttt{StoppingPoint}. For their analysis, we need the following notation.
Let $\lambda$ be a global constant, and
let 
$k=\frac{\lambda}{\eps}\cdot\log\left(\frac{1}{\delta}\cdot\log^*|X|\right)$.
 Consider the functions $f_{d}$ defined in algorithm \texttt{StoppingPoint}. For a database $S\in X^n$ we define $d^*(S)$ to be the smallest index such that $f_{d^*(S)}(S)\geq k$.

\begin{algorithm*}[!htp]

\caption{\bf \texttt{ConstructPaths}}\label{alg:ConstructPaths}

{\bf Input:} A database $S\in X^n$ where $X$ is a totally ordered domain. We assume that $|X|$ is a power of 2 (otherwise extend $X$ to the closest power of 2).

{\bf Global parameter:} $t$.

\begin{enumerate}[leftmargin=15pt,rightmargin=10pt,itemsep=1pt,topsep=1.5pt]

\item\label{ConstructPaths:end} If $|S|\leq10t$ or if $|X|=O(1)$ then halt. Otherwise continue to the next step.

\item\label{ConstructPaths:takeMiddle} Sort $S$ and let $\hat{S}\in X^{n-2t}$ be a database containing all elements of $S$ except for the $t$ largest and $t$ smallest elements.

\item\label{ConstructPaths:buildTree} Let $T$ be a complete binary tree with $|X|$ leaves that correspond to elements of $X$. A leaf $u$ has weight $w(u)=|\{x\in \hat{S}: x=u\}|$. A node $v$ has weight $w(v)$ that equals the sum of the weights of its children.

\item\label{ConstructPaths:randomPath} Let $\pi$ be the path in $T$ constructed as follows (starting from the root):
\begin{enumerate}[topsep=0pt]
	\item Let $v$ be the current node in the path.
	\item If $v$ is a leaf, or if $w(v)\leq t$, then $v$ is the last node in $\pi$.
	\item\label{ConstructPaths:randomPathProceed} Else, let $v_0,v_1$ be the two children of $v$ in $T$. Proceed to the child $v_b$ with larger weight $w(v_b)$, where $b\in\{0,1\}$ and where ties are broken arbitrarily.
\end{enumerate}

\item\label{ConstructPaths:buildD} Initialize $D=\emptyset$. Add elements to $D$ by following the path $\pi$ starting from the root:
\begin{enumerate}[topsep=0pt]
	\item\label{ConstructPaths:buildDa} If $|D|\geq n-3t$ then goto Step~\ref{ConstructPaths:output}.
	\item Let $v$ be the current node in the path, and let $\ell$ denote its level in $T$ (the root is in level 0 and the leaves are in level $\log|X|$).
	\item If $v$ is the last node in $\pi$ then add $(n-3t-|D|)$ copies of $\ell$ to $D$ and goto Step~\ref{ConstructPaths:output}.
	\item Else, let $v_{\scriptscriptstyle\rm next}$ be the next node in $\pi$, and let $v_{\scriptscriptstyle\rm other}$ be the other child of $v$ in $T$. Add $\min\left\{w(v_{\scriptscriptstyle\rm other}),n-3t-|D|\right\}$ copies of $\ell$ to $D$, and goto Step~\ref{ConstructPaths:buildDa} with $v_{\scriptscriptstyle\rm next}$ as the current node.
\end{enumerate}

\item\label{ConstructPaths:output} Output the domain $X$, the database $S$, the tree $T$, and the path $\pi$.

\item\label{ConstructPaths:recursiveCall} Execute \texttt{ConstructPaths} recursively on $D$.

\end{enumerate}
\end{algorithm*}

\begin{algorithm*}[!htp]

\caption{\bf \texttt{StoppingPoint}}\label{alg:StoppingPoint}

{\bf Input:} Parameters $\eps,\delta$ and a database $S\in X^n$ where $X$ is a totally ordered domain.

{\bf Additional input:} Parameter $t=2k=\frac{2\lambda}{\eps}\cdot\log\left(\frac{1}{\delta}\cdot\log^*|X|\right)$, where $\lambda$ is a global constant.

\begin{enumerate}[leftmargin=15pt,rightmargin=10pt,itemsep=1pt,topsep=1.5pt]

\item\label{StoppingPoint:AboveThreshInit} Instantiate algorithm \texttt{AboveThreshold} (see Theorem~\ref{thm:AboveThreshold}) with the database $S$, privacy parameter $\eps$, and threshold $c=k/2$.

\item\label{StoppingPoint:loop} For $d=1,2,\dots,\log^*|X|$ do

\begin{enumerate}[topsep=0pt]
	\item Define the following query $f_{d}:X^*\rightarrow\N$. To compute $f_{d}$ on a database $S$, apply algorithm \texttt{ConstructPaths} on $S$ with parameter $t=4c$, and let $S_{d}\in(X_{d})^*$ be the $d$th database that it outputs. Then $f_{d}(S)=\max_{y\in X_{d}}\left|\{x\in S_{d} : x=y\}\right|$.

	\item Query algorithm \texttt{AboveThreshold} on $f_{d}$. If the answer is $\top$ then halt and output $d$. Otherwise continue.
\end{enumerate}

\end{enumerate}
\end{algorithm*}

In the next lemma we analyze how \texttt{ConstructPaths} behaves on neighboring databases.

\begin{lemma}\label{lem:ConstructPathsNeighboring}
Let $S\in X^n$ and $S'\in X^n$ be two neighboring databases, and
let $t\geq2k=\frac{2\lambda}{\eps}\cdot\log\left(\frac{1}{\delta}\cdot\log^*|X|\right)$. Let $\{(X_{d},S_{d},T_{d},\pi_{d})\}_{d=1}^{\log^*|X|}$ and $\{(X_{d},S'_{d},T_{d},\pi'_{d})\}_{d=1}^{\log^*|X|}$ denote the outcomes of the executions of \texttt{ConstructPaths} with parameter $t$ on $S$ and on $S'$, respectively. Then for every $d<d^*(S)$ we have that $S_{d}$ and $S'_{d}$ are neighboring databases.
\end{lemma}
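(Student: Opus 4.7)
The plan is to prove the claim by induction on $d$. The base case $d=1$ is immediate since $S_1=S$ and $S_1'=S'$ are neighboring by hypothesis.

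For the inductive step, I would assume $S_d$ and $S_d'$ are neighboring and $d+1<d^*(S)$, so that $f_{d+1}(S)<k$ by the definition of $d^*$. First, using the standard trimming argument from~\cite{BNSV15}, the trimmed databases $\hat{S}_d$ and $\hat{S}_d'$ (from Step~\ref{ConstructPaths:takeMiddle}) remain neighboring, so the node weights in $T_d$ induced by the two executions differ by at most $1$, with the changes confined to the ancestors of the differing leaves.

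The central claim is that the deterministic paths $\pi_d$ and $\pi_d'$ make identical choices at every non-terminal node on $\pi_d$ that is processed by the build-$D$ loop before its budget is exhausted at some level $\ell^*$. For each such node $v$, the ``other-child'' weight $a_v=w_{\hat{S}_d}(v_{\mathrm{other}})$ coincides with the pre-truncation contribution at $v$'s level, which in turn equals the multiplicity of that level in $S_{d+1}$; the bound $f_{d+1}(S)<k$ therefore gives $a_v<k$. Combined with $w_{\hat{S}_d}(v)>t=2k$, this forces $w_{\hat{S}_d}(v_{\mathrm{next}})\geq (2k+1)-(k-1)=k+2$, yielding a gap of at least $3$ between the two children. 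A $\pm 1$ perturbation of the child weights in the $\hat{S}_d'$ execution therefore cannot flip the heavier-child choice in Step~\ref{ConstructPaths:randomPathProceed}, and it also preserves $w(v)>t$, so $\pi_d'$ makes the same selection at $v$.

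The last step is to compare the multisets $D$ and $D'$ level by level along these synchronized paths. Since $\hat{S}_d$ and $\hat{S}_d'$ differ by one element (with equal sizes) the sequences $(a_i)$ and $(a_i')$ along the shared path differ by $\pm 1$ at most at two levels, while the budgets agree. A short case analysis --- on whether the $a_i$-shifts occur before, at, or after $\ell^*$, and whether $\pi_d$'s terminal has weight exactly $t$ (so $\pi_d'$ might extend by one more node) --- shows that the cumulative contributions still match except for a single added and a single removed copy, so $D'$ is indeed neighboring to $D$. The main obstacle I anticipate is the bookkeeping in the subcase where $\pi_d'$ extends past $\pi_d$'s terminal: the path-robustness argument no longer applies beyond the point of divergence, but any such divergence occurs only after $|D|$ has essentially reached the budget, so the additional contribution is capped by the residual budget and affects $D'$ by at most one element.
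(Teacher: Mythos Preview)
Your proposal follows the same inductive approach as the paper: use $f_{d+1}(S)<k$ to bound the ``other-child'' weights along $\pi_d$, deduce that the deterministic heavier-child selection is robust to the $\pm 1$ weight perturbation coming from the inductive hypothesis, conclude that $\pi_d$ and $\pi_d'$ agree (up to one being a prefix of the other), and then invoke the \texttt{TreeLog} analysis to get that the resulting databases $D,D'$ are neighboring. Your treatment of the budget-exhaustion level $\ell^*$ and of the terminal-extension subcase is in fact a bit more careful than the paper's, which simply asserts ``less than $k$ elements fall off at every node'' and defers the rest to Section~\ref{sec:TreeLog}; but the argument is otherwise identical.
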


\begin{proof}
The proof is by induction on $d$. For the base case, observe that $S_1=S$ and $S'_1=S'$ are neighboring databases. Now fix $d<d^*(S)$, and suppose that $S_{d-1}$ and $S'_{d-1}$ are neighboring databases. Since $d$ is strictly smaller than $d^*(S)$, we have that all the elements of $S_{d}$ have multiplicities less than $k\triangleq\frac{\lambda}{\eps}\cdot\log\left(\frac{1}{\delta}\cdot\log^*|X|\right)$. By the way $S_{d}$ is constructed, this means that less than $k$ elements from $S_{d-1}$ fall off the path $\pi_{d-1}$ in every single node of $T_{d-1}$. Recall that when constructing the path $\pi_{d-1}$, every node in the path has weight at least $t$ (except maybe for the last node in the path whose weight might be $t/2$). This means that throughout the construction of $\pi_{d-1}$ in Step~\ref{ConstructPaths:randomPath} of \texttt{ConstructPaths}, the weight of the child we proceed to (in Step~\ref{ConstructPaths:randomPathProceed}) is always larger than the weight of its sibling by  at least $t-k\gg1$. Since $S_{d-1}$ and $S'_{d-1}$ are neighboring databases by the induction assumption, this gap is also bigger than 1 when constructing $\pi'_{d-1}$ during the execution of $S'$. As a result, the construction of $\pi'_{d-1}$ proceeds identically to the construction of $\pi_{d-1}$, except possibly that one path might be longer than the other. So $\pi_{d-1}$ and $\pi'_{d-1}$ are either exactly the same path, or one of them is a prefix of the other. As in the analysis of algorithm \texttt{TreeLog} in Section~\ref{sec:TreeLog}, in such a case we have that the resulting $S_{d}$ and $S'_{d}$ are neighboring databases.
\end{proof}

The next lemma specifies the utility  of algorithm \texttt{StoppingPoint}.

\begin{lemma}\label{lem:StoppingPointUtility}
Let $S\in X^n$ be a database
of size $n=\Omega\left(\frac{\log^*|X|}{\eps}\cdot\log\left(\frac{1}{\delta}\cdot\log^*|X|\right)\right)$, and let $\hat{d}$ denote the outcome of \texttt{StoppingPoint} on $S$. Then, with probability at least $1-\delta$ we have that $\hat{d}\leq d^*(S)$, and that $f_{\hat{d}}(S)\geq\Omega(k)=\Omega\left(\frac{1}{\eps}\cdot\log\left(\frac{1}{\delta}\cdot\log^*|X|\right)\right)$.
\end{lemma}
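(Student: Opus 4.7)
The plan is to obtain both conclusions as a direct consequence of the accuracy guarantee of algorithm \texttt{AboveThreshold} (Theorem~\ref{thm:AboveThreshold}), combined with the definition of $d^*(S)$ and a careful choice of the global constant $\lambda$.

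First I would set up notation. Apply Theorem~\ref{thm:AboveThreshold} with $m = \log^*|X|$ queries, privacy parameter $\eps$, threshold $c = k/2$, and failure parameter $\beta = \delta$. Let $\alpha := \frac{8}{\eps}\log(2\log^*|X|/\delta)$. Then, with probability at least $1-\delta$, the following ``accuracy event'' holds simultaneously for every query $f_d$ issued to \texttt{AboveThreshold}: any query answered $\top$ satisfies $f_d(S)\geq c-\alpha$, and any query answered $\bot$ satisfies $f_d(S)\leq c+\alpha$. I condition on this event for the remainder of the argument.

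Next I would pick $\lambda$ large enough so that $k/2 > 2\alpha$; since $k = \frac{\lambda}{\eps}\log(\log^*|X|/\delta)$ and $\alpha = O\!\left(\frac{1}{\eps}\log(\log^*|X|/\delta)\right)$, a sufficiently large absolute constant $\lambda$ makes this hold. In particular, $k > c + \alpha$ and $c - \alpha \geq k/4 = \Omega(k)$. Before invoking these bounds I must check that each query $f_d$ is actually well-defined, i.e.\ that \texttt{ConstructPaths} with parameter $t = 4c = 2k$ really produces a $d$-th database $S_d$ for every $d \leq \log^*|X|$. Because each recursive level decreases $|S|$ additively by at most $3t$, this requires $n \geq (10 + 3\log^*|X|)\cdot t = \Omega\!\left(\frac{\log^*|X|}{\eps}\log(\log^*|X|/\delta)\right)$, which is exactly the hypothesis on $n$.

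The first conclusion, $\hat d \leq d^*(S)$, is immediate: by the definition of $d^*(S)$ we have $f_{d^*(S)}(S)\geq k > c+\alpha$, so on the accuracy event the query $f_{d^*(S)}$ cannot be answered $\bot$. Hence \texttt{StoppingPoint} must halt at or before the $d^*(S)$-th iteration. For the second conclusion, since \texttt{StoppingPoint} outputs $\hat d$ only after receiving $\top$ on query $f_{\hat d}$, the accuracy event gives $f_{\hat d}(S)\geq c-\alpha \geq k/4 = \Omega(k)$, which is the desired bound.

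The only step I expect to require care is the choice of $\lambda$ (and hence the implicit constant in $n$): one must be sure that the additive error $\alpha$ from \texttt{AboveThreshold}, together with the $\log m = \log\log^*|X|$ factor it introduces, stays comfortably below $k/2$, so that the threshold $c = k/2$ genuinely separates the ``low-multiplicity'' levels $d < d^*(S)$ from the ``high-multiplicity'' level $d^*(S)$. Everything else is bookkeeping, and note that this lemma addresses utility only — the sensitivity-$1$ hypothesis that \texttt{AboveThreshold} formally requires will be handled in the subsequent privacy analysis using Lemma~\ref{lem:ConstructPathsNeighboring}, which is why $f_d$ is well-behaved on neighboring inputs for $d<d^*(S)$.
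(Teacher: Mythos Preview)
Your proposal is correct and follows essentially the same route as the paper: invoke the accuracy guarantee of \texttt{AboveThreshold} with $\beta=\delta$ and $m=\log^*|X|$, choose $\lambda$ so that the additive error stays below $k/2$, and use the sample-size hypothesis to ensure every recursive level is reached. The only point the paper makes explicit that you leave implicit is that $d^*(S)$ exists at all---once the domain has shrunk to $O(1)$ while the database still contains $n-O(t\log^*|X|)$ points, pigeonhole forces some element to have multiplicity at least $k$---but you have already set up everything needed for this one-line observation.
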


Lemma~\ref{lem:StoppingPointUtility} follows directly from the utility properties of algorithm \texttt{AboveThreshold} (see Theorem~\ref{thm:AboveThreshold}; use $\beta=\delta$, notice that the additive error of \texttt{AboveThreshold} is at most $k/2$ for an appropriate choice of $\lambda$), after observing that when $n=\Omega\left(\frac{\log^*|X|}{\eps}\cdot\log\left(\frac{1}{\delta}\cdot\log^*|X|\right)\right)$ there must exist an index $d$ such that $f_{d}(S)\geq k$  and therefore $d^*(S)$ is well defined. This follows since after $\log^*|X|$ iterations of algorithm \texttt{ConstructPaths} we get that the size of the domain is constant, while the size of the database is still at least $n-O(t\log^*|X|)$. We now proceed with the privacy analysis of Algorithm \texttt{StoppingPoint}.

\begin{lemma}\label{lem:StoppingPointPrivacy}
Algorithm \texttt{StoppingPoint} is $(\eps,\delta)$-differentially private.
\end{lemma}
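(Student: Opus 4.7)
The approach is to reduce the privacy of \texttt{StoppingPoint} to the $(\eps,0)$-differential privacy of \texttt{AboveThreshold}, by arguing that for neighboring inputs $S,S'$, the queries $f_d$ submitted to \texttt{AboveThreshold} are sensitivity $1$ throughout the relevant portion of the execution. The central observation comes from Lemma~\ref{lem:ConstructPathsNeighboring}, combined with its symmetric counterpart (obtained by swapping $S$ and $S'$): for every $d<\max\{d^{*}(S),d^{*}(S')\}$, the intermediate databases $S_d,S'_d$ are neighboring, so the max-multiplicity function defining $f_d$ satisfies $|f_d(S)-f_d(S')|\leq 1$.

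Building on this, I would define a hypothetical truncated algorithm $\widetilde{\texttt{SP}}$ that mirrors \texttt{StoppingPoint} exactly but aborts (emitting a default symbol $\bot$) if \texttt{AboveThreshold} has not yet responded $\top$ by iteration $D=\max\{d^{*}(S),d^{*}(S')\}-1$. Every query submitted by $\widetilde{\texttt{SP}}$ to \texttt{AboveThreshold} is sensitivity $1$ between $S$ and $S'$, so Theorem~\ref{thm:AboveThreshold} yields that $\widetilde{\texttt{SP}}$ is $(\eps,0)$-differentially private. The desired $(\eps,\delta)$-differential privacy of \texttt{StoppingPoint} then follows once we also establish $\Pr[\texttt{StoppingPoint}(S)\neq\widetilde{\texttt{SP}}(S)]\leq\delta$ (and similarly for $S'$), which allows absorbing the truncation error into the additive $\delta$ term.

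The bound on the truncation error relies on the utility of \texttt{AboveThreshold} from Theorem~\ref{thm:AboveThreshold}: by our choice of the constant $\lambda$, the threshold $c=k/2$ is below $k$ by more than the noise of \texttt{AboveThreshold}, so whenever a query $f_d$ in the range $d<D$ has value at least $k-1$ on input $S$, \texttt{AboveThreshold} emits $\top$ by iteration $d$ with probability $1-\delta/2$. When $d^{*}(S)<d^{*}(S')$, the query at $d=d^{*}(S)<D$ satisfies $f_d(S)\geq k$ directly; when $d^{*}(S)>d^{*}(S')$, the sensitivity-$1$ property gives $f_{d^{*}(S')}(S)\geq k-1$, which is still far enough above threshold for \texttt{AboveThreshold} to halt by iteration $d^{*}(S')<D$.

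The main obstacle I expect is the degenerate case $d^{*}(S)=d^{*}(S')=D$, in which the truncation point coincides with the iteration where \texttt{AboveThreshold} naturally halts and so the truncation error is not directly $\delta$-small. I plan to handle this by extending the truncation to include the single extra iteration $D$ itself, observing that both $f_D(S),f_D(S')\geq k$ are then safely above threshold; hence \texttt{AboveThreshold} responds $\top$ at iteration $D$ with probability $1-\delta/2$ on both inputs, independent of the (possibly large) difference between $f_D(S)$ and $f_D(S')$, so this extra iteration contributes no meaningful privacy loss beyond the additive $\delta$ that is already absorbed.
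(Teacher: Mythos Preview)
Your approach is genuinely different from the paper's and, modulo some sloppiness in the degenerate case, it works. The paper does \emph{not} use the symmetry trick of invoking Lemma~\ref{lem:ConstructPathsNeighboring} in both directions to push the sensitivity-$1$ window up to $\max\{d^{*}(S),d^{*}(S')\}-1$. Instead, the paper truncates at $d^{*}(S)$ alone and devotes the bulk of the proof to showing directly that $|f_{d^{*}(S)}(S)-f_{d^{*}(S)}(S')|\le 1$ even though $S_{d^{*}(S)}$ and $S'_{d^{*}(S)}$ may \emph{not} be neighbors. This is done by a structural argument: if the deterministic paths $\pi_{d^{*}(S)-1}$ and $\pi'_{d^{*}(S)-1}$ diverge at a node $u$, then the weights of $u$'s children differ by at most $2$, which forces the maximal fall-off along either path to occur at or before $u$, where the two executions still agree. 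With that in hand, all queries up to and including index $d^{*}(S)$ are sensitivity $1$, and the utility lemma bounds the probability of ever going beyond. Your symmetry observation cleanly sidesteps this path-divergence analysis at the cost of a separate case split; the paper's route is a single unified argument but requires the extra structural lemma.

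Two points to clean up in your version. First, there is an off-by-one in your degenerate case: with $D=\max\{d^{*}(S),d^{*}(S')\}-1$ and $d^{*}(S)=d^{*}(S')$, iteration $D$ is already included in the truncation and has $f_D<k$ on both inputs; what you mean to append is iteration $D+1=d^{*}(S)$, where both values are $\ge k$. Second, the sentence ``this extra iteration contributes no meaningful privacy loss'' needs a concrete mechanism: the clean way is to replace the query at index $d^{*}(S)$ by the constant $+\infty$ (sensitivity $0$), observe that the resulting algorithm now feeds only sensitivity-$\le 1$ queries to \texttt{AboveThreshold} and is therefore $(\eps,0)$-DP, and then note that on each of $S,S'$ it coincides with \texttt{StoppingPoint} except on the event that \texttt{AboveThreshold} fails to raise $\top$ by index $d^{*}(S)$, which has probability at most $\delta$. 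This yields $(\eps,O(\delta))$-indistinguishability, matching the paper up to constants.
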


\begin{proof}
Observe that the outcome of algorithm \texttt{StoppingPoint} is a post-processing of the outcomes of algorithm \texttt{AboveThreshold}. Hence, it suffices to argue that the sequence of outcomes obtained from algorithm \texttt{AboveThreshold} satisfies $(\eps,\delta)$-differential privacy. Intuitively, but somewhat inaccurately, this will be done by showing that with probability $(1-\delta)$ all the queries issued to algorithm \texttt{AboveThreshold} are of sensitivity 1, in which case algorithm \texttt{AboveThreshold} guarantees $(\eps,0)$-differential privacy.

Let $S\in X^n$ and $S'\in X^n$ be two neighboring databases, and let $\{(X_{d},S_{d},T_{d},\pi_{d})\}_{d=1}^{\log^*|X|}$ and $\{(X_{d},S'_{d},T_{d},\pi'_{d})\}_{d=1}^{\log^*|X|}$ denote the outcomes of the executions of \texttt{ConstructPaths} on $S$ and on $S'$, respectively. By Lemma~\ref{lem:ConstructPathsNeighboring}, for every $d<d^*(S)$ it holds that $S_{d}$ and $S'_{d}$ are neighboring databases. Therefore, for every $d<d^*(S)$, we have that $|f_{d}(S)-f_{d}(S')|\leq1$ by the definition of the query $f_{d}$.

We now argue that also for $d\triangleq d^*(S)$ we have that $|f_{d}(S)-f_{d}(S')|\leq1$, even though $S_{d}$ and $S'_{d}$ might not be neighboring databases. Recall that $f_{d}(S)$ equals to the maximal multiplicity of an element of $S_{d}$. Equivalently, $f_{d}(S)$ is determined by the weight (w.r.t.\ $S_{d-1}$) of the heaviest node that falls off the previous path $\pi_{d-1}$.
Suppose that $\pi_{d-1}$ and $\pi'_{d-1}$ are {\em not} the same path and that none of them is a prefix of the other (as otherwise $S_{d}$ and $S'_{d}$ would be neighboring databases and then $|f_{d}(S)-f_{d}(S')|\leq 1$). Let $u$ be the last node common to  $\pi_{d-1}$ and $\pi'_{d-1}$, and let $v_0,v_1$ denote its children. 
Since $S_{d-1}$ and $S'_{d-1}$ are neighboring, we have that $|w(v_0)-w'(v_0)|\leq 1$ and that $|w(v_1)-w'(v_1)|\leq 1$, where here $w$ and $w'$ denote the weight of the elements of $T_{d-1}$ during the executions on $S$ and on $S'$, respectively. In addition, since one path proceeds to $v_0$ and the other proceeds to $v_1$, we have that
$|w(v_0)-w(v_1)|\leq2$ and that $|w'(v_0)-w'(v_1)|\leq2$. 
Note that $z\triangleq\min\{w(v_0),w(v_1)\}$ elements fall off the path $\pi_{d-1}$ at the node $v$, and that less than $z$ elements can fall off this path after node $v$, since we would be left with only $\max\{w(v_0),w(v_1)\}\leq z+2$ potential elements that can fall off the path after $v$ (which means that the maximal fall could be of size at most $(z+2)/2$). Hence, $f_{d}(S)$ and $f_{d}(S')$ are determined by the weight (w.r.t.\ $S_{d-1}$ and $S'_{d-1}$) of the heaviest nodes that fall off the paths $\pi_{d-1}$ and $\pi'_{d-1}$, {\em before or at} the node $v$. Since the paths are the same until $v$, and since $S_{d-1}$ and $S'_{d-1}$ are neighboring databases, we get that $|f_{d}(S)-f_{d}(S')|\leq 1$.

So, for the first $d^*(S)$ queries that are issued to algorithm \texttt{AboveThreshold} we have that $|f_{d}(S)-f_{d}(S')|\leq1$. Moreover, by Lemma~\ref{lem:StoppingPointUtility},  with probability $1-\delta$, other queries are never issued to algorithm \texttt{AboveThreshold}. The lemma therefore follows from the privacy properties of \texttt{AboveThreshold}. Specifically, let $\tau$ and $\tau'$ denote the sequence of outcomes of \texttt{AboveThreshold} during the executions of algorithm \texttt{StoppingPoint} on $S$ and on $S'$, respectively. Let $T$ be some set of  possible such sequences,  let $T_{\rm long}\subset T$ contain all sequences longer than $d^*(S)$, and let $T_{\rm short}=T\setminus T_{\rm long}$. We have that,
\begin{align*}
\Pr[\tau\in T]&=\Pr[\tau\in T_{\rm short}] + \Pr[\tau\in T_{\rm long}]\\
&\leq\e^{\eps}\cdot\Pr[\tau'\in T_{\rm short}]+\delta\\
&\leq\e^{\eps}\cdot\Pr[\tau'\in T]+\delta.
\end{align*}
\end{proof}

\begin{algorithm*}[!htp]

\caption{\bf \texttt{OneRandomPath}}\label{alg:OneRandomPath}

{\bf Input:} Parameters $\eps,\delta$ and a database $S\in X^n$ where $X$ is a totally ordered domain. We assume that $|X|$ is a power of 2 (otherwise extend $X$ to the closest power of 2).

{\bf Additional inputs:} Parameter $t=2k=\frac{2\lambda}{\eps}\cdot\log\left(\frac{1}{\delta}\cdot\log^*|X|\right)$, where $\lambda$ is a global constant, current level $d\in[\log^*|X|]$.

\begin{enumerate}[leftmargin=15pt,rightmargin=10pt,itemsep=1pt,topsep=1.5pt]

\item Apply algorithm \texttt{ConstructPaths} on $S$ and let $(X_{d},S_{d},T_{d},\pi_{d})$ denote its $d$th outputs.

\item\label{step:takeMiddle_OneRandomPath} Sort $S_{d}$ and let $\hat{S}_{d}\in X^{n-2t}$ be a database containing all elements of $S_{d}$ except for the $t$ largest and $t$ smallest elements.

\item\label{step:buildTree_OneRandomPath} Let $T$ be a complete binary tree with $|X|$ leaves that correspond to elements of $X$. A leaf $u$ has weight $w(u)=|\{x\in \hat{S}_{d}: x=u\}|$. A node $v$ has weight $w(v)$ that equals the sum of the weights of its children.

\item\label{step:randomPath_OneRandomPath} Sample a path $\pi$ in $T$, starting from the root and constructed by the following process:
\begin{enumerate}[topsep=0pt]
	\item Let $v$ be the current node in $\pi$.
	\item If $v$ is a leaf, or if $w(v)\leq t$, then $v$ is the last node in $\pi$.
	\item Else, let $v_0,v_1$ be the two children of $v$ in $T$. If one of them has weight 0 then proceed to the other child. Otherwise, proceed to $v_b$ (for $b\in\{0,1\}$) with probability proportional to $\exp(\eps\cdot w(v_b))$.
\end{enumerate}

\item\label{step:buildD_OneRandomPath} Initialize $D=\emptyset$. Add elements to $D$ by following the path $\pi$ starting from the root:
\begin{enumerate}[topsep=0pt]
	\item\label{step:buildDa} If $|D|\geq n-3t$ then goto Step~\ref{step:recursiveCall}.
	\item Let $v$ be the current node in the path, and let $\ell$ denote its level in $T$ (the root is in level 0 and the leaves are in level $\log|X|$).
	\item If $v$ is the last node in $\pi$ then add $(n-3t-|D|)$ copies of $\ell$ to $D$ and goto Step~\ref{step:recursiveCall}.
	\item Else, let $v_{\scriptscriptstyle\rm next}$ be the next node in $\pi$, and let $v_{\scriptscriptstyle\rm other}$ be the other child of $v$ in $T$. Add $\min\left\{w(v_{\scriptscriptstyle\rm other}),n-3t-|D|\right\}$ copies of $\ell$ to $D$, and goto Step~\ref{step:buildDa} with $v_{\scriptscriptstyle\rm next}$ as the current node.
\end{enumerate}

\item\label{step:choosingMech_OneRandomPath} Use the Choosing Mechanism to choose an element $i\in[\log|X|]$ with large multiplicity in $D$.

\item Use the Choosing Mechanism to choose a node $v^*$ at level $i$ of $T_{d}$ with large weight $w(v^*)$.

\item Let $v_{\scriptscriptstyle\rm left}$ and $v_{\scriptscriptstyle\rm right}$ be the left-most and right-most leaves, respectively, of the sub-tree rooted at $v^*$. Also let $v_{\scriptscriptstyle\rm inner\text{-}left}$ be the right-most leaf of the sub-tree rooted at the left child of $v^*$, and let $v_{\scriptscriptstyle\rm inner\text{-}right}$ be the left-most leaf of the sub-tree rooted at the right child of $v^*$.

\item Use the Exponential Mechanism to return $y\in\left\{v_{\scriptscriptstyle\rm left}, v_{\scriptscriptstyle\rm right}, v_{\scriptscriptstyle\rm inner\text{-}left}, v_{\scriptscriptstyle\rm inner\text{-}right} \right\}$ with large quality
$$q(S_{d},y)=\min\left\{\;|\{x\in S_{d}: x\leq y\}|,\; |\{x\in S_{d}: x\geq y\}|\;\right\}.$$

\end{enumerate}
\end{algorithm*}

We now present the analysis of algorithm \texttt{OneRandomPath}.

\begin{lemma}\label{lem:OneRandomPath_Privacy}
Fix two neighboring databases $S,S'\in X^n$, and consider running algorithm \texttt{OneRandomPath} on $S$ and on $S'$ with parameter $d<d^*(S)$. Then for every set of outcomes $T$ we have
$$
\Pr[\texttt{OneRandomPath}(S)\in T]\leq e^{6\eps\cdot\log n}\cdot\Pr[\texttt{OneRandomPath}(S')\in T]+4\delta n\cdot e^{4\eps\cdot\log n}.
$$
\end{lemma}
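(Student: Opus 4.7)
The plan is to reduce directly to the privacy analysis of a single top-level call of \texttt{TreeLog} carried out in Section~\ref{sec:privacy_analysis}, exploiting the fact that \texttt{OneRandomPath} performs no recursion. The hypothesis $d<d^*(S)$ is essential here: Lemma~\ref{lem:ConstructPathsNeighboring} then guarantees that the databases $S_d,S'_d$ output by \texttt{ConstructPaths} in Step~1 of the two executions are neighboring, and hence so are their trimmed versions $\hat S_d,\hat S'_d$ constructed in Step~\ref{step:takeMiddle_OneRandomPath}. From this point onward the analysis of Section~\ref{sec:privacy_analysis} applies essentially verbatim.

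First I would handle the randomized path $\pi$ sampled in Step~\ref{step:randomPath_OneRandomPath} by partitioning its support $P(\hat S_d)$ into \emph{weak} paths (those with $\Pr_S[\pi]\leq\delta$) and \emph{strong} paths. The weak paths contribute at most $n\delta$ total probability, by the analogue of Observation~\ref{obs:atmostNpaths}. For each strong $\pi$, letting $\Pi'\subseteq P(\hat S'_d)$ collect all paths in the $S'$-execution for which $\pi$ is a prefix, I would invoke the computation of Claim~\ref{claim:pathsProbabilities} to obtain $\Pr_S[\pi]\leq e^{3\eps\log n}\cdot\Pr_{S'}[\Pi']$.

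Next, conditioning on a fixed pair $(\pi,\pi')$ with $\pi'\in\Pi'$, I would observe, exactly as in the proof of Lemma~\ref{lem:TreeLogPrivacyInduction}, that the databases $D$ and $D'$ built in Step~\ref{step:buildD_OneRandomPath} of the two executions are neighboring multisets: if $\pi=\pi'$ the fall-off counts differ by one in a single coordinate, whereas if $\pi$ is a strict prefix of $\pi'$ then $D'$ contains exactly one extra element. Conditioned on this pair, the remainder of \texttt{OneRandomPath} is a black-box composition of three standard private primitives -- two invocations of the Choosing Mechanism (Steps~\ref{step:choosingMech_OneRandomPath} and~7) and a single Exponential Mechanism call (Step~9) -- applied to neighboring inputs, and by basic composition the suffix is $(3\eps,2\delta)$-differentially private conditioned on $(\pi,\pi')$.

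Combining the two layers through the same telescoping identity used in Lemma~\ref{lem:TreeLogPrivacyInduction} yields an overall multiplicative factor of $e^{3\eps\log n}\cdot e^{3\eps}\leq e^{6\eps\log n}$ (using $\log n\geq 1$) and an additive term bounded by $n\delta+2\delta\cdot e^{3\eps\log n}+3\delta n\cdot e^{3\eps\log n}\leq 4\delta n\cdot e^{4\eps\log n}$, which matches the bound stated in the lemma. The main obstacle is re-establishing the weak/strong-path bookkeeping of Claim~\ref{claim:pathsProbabilities} in this new setting; once the reduction to neighboring $\hat S_d,\hat S'_d$ via Lemma~\ref{lem:ConstructPathsNeighboring} is in place, everything else is mechanical.
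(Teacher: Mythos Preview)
Your proposal is correct and follows essentially the same approach as the paper. The paper's proof is in fact omitted, with only a one-line remark that it ``is essentially identical to the privacy analysis of (a single iteration of) algorithm \texttt{TreeLog} (after applying Lemma~\ref{lem:ConstructPathsNeighboring}\ldots)'' and that ``the constants are different\ldots since \texttt{OneRandomPath} applies the Choosing Mechanism twice where \texttt{TreeLog} applies it only once''---precisely the reduction and accounting you spell out.
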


The proof of Lemma~\ref{lem:OneRandomPath_Privacy} is essentially identical to the privacy analysis of (a single iteration of) algorithm \texttt{TreeLog} (after applying Lemma~\ref{lem:ConstructPathsNeighboring} to argue that in the executions on $S$ and on $S'$ we have that the resulting $S_{d}$ and $S'_{d}$ are neighboring databases). We omit the proof for brevity (see Lemmas~\ref{lem:TreeLogPrivacyInduction} and~\ref{lem:TreeLogPrivacy}).
(The constants are different than in Lemma~\ref{lem:TreeLogPrivacy} since \texttt{OneRandomPath} applies the Choosing Mechanism twice where \texttt{TreeLog} applies it only once.)

\begin{lemma}\label{lem:OneRandomPath_Utility}
Let $t=2k=\frac{2\lambda}{\eps}\cdot\log\left(\frac{1}{\delta}\cdot\log^*|X|\right)$, where $\lambda$ is a global constant, and let $S\in X^n$ be a database.
Let $\{(X_{d},S_{d},T_{d},\pi_{d})\}_{d=1}^{\log^*|X|}$ denote the outcomes of the executions of \texttt{ConstructPaths} with parameter $t$ on $S$. Let algorithm \texttt{OneRandomPath} be executed on $S$ with parameters $t,d$ such that $f_{d+1}(S)\geq\Omega\left(\frac{1}{\eps}\log\frac{1}{\delta}\right)$, 
where the function $f_{d+1}(\cdot)$ is defined in algorithm \texttt{StoppingPoint}. Then, with probability at least $1-\delta$, the outcome $y$ is an interior point of $S_{d}$ with depth at least $\Omega(t)$. 
\end{lemma}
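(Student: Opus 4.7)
The plan is to mirror the utility proof of \texttt{TreeLog} (\lemref{TreeLogUtility}) and adapt it to the single-shot setting of \texttt{OneRandomPath}. The key new ingredient is that the level handed to the Choosing Mechanism of Step~\ref{step:choosingMech_OneRandomPath} is not supplied by a recursive call but is simply the most frequent element of the database $D$ that \texttt{OneRandomPath} builds; hence the core new task is to show that $D$ already has a high-multiplicity element. Once this is done, the remainder (a second Choosing Mechanism to find a heavy $v^*$ at level $i$ of $T_d$, followed by case analysis on whether $w(v^*)<|\hat{S}_d|$, and the final Exponential Mechanism over $\{v_{\scriptscriptstyle\rm left},v_{\scriptscriptstyle\rm right},v_{\scriptscriptstyle\rm inner\text{-}left},v_{\scriptscriptstyle\rm inner\text{-}right}\}$) is essentially identical to the argument in \lemref{TreeLogUtility} and yields an interior point of $\hat{S}_d$. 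Since $\hat{S}_d$ is $S_d$ with the $t$ smallest and $t$ largest elements removed, any interior point of $\hat{S}_d$ is an interior point of $S_d$ with depth at least $t+1=\Omega(t)$, which is the required conclusion.

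The heart of the proof is a coupling between the random path $\pi$ sampled inside \texttt{OneRandomPath} and the deterministic heavy path $\pi_d$ produced by \texttt{ConstructPaths}. By hypothesis, some node along $\pi_d$ has a fall-off of at least $M\triangleq\Omega(\frac{1}{\eps}\log\frac{1}{\delta})$ at some level $\ell^*$. Setting $\eta\triangleq\frac{1}{\eps}\log\frac{\log|X|}{\delta}$, I claim that at every step of $\pi$ one of the following holds: \emph{either} the two children's weights differ by at least $\eta$, in which case the random step agrees with the deterministic (heavier) choice with probability at least $1-\delta/\log|X|$; \emph{or} the weights differ by at most $\eta$, in which case each child has weight at least $(w_v-\eta)/2=\Omega(k)$, so the fall-off at this step is $\Omega(k)$ regardless of the random choice. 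Union-bounding the first event over the at most $\log|X|$ levels of $\pi$ yields a $(1-\delta)$-probability dichotomy: either $\pi$ follows $\pi_d$ all the way down to level $\ell^*$ and inherits the fall-off $\ge M$ there, or $\pi$ first encounters a step of the second type, which itself registers $\Omega(k)$ copies of that step's level in $D$. In either case the maximum multiplicity in $D$ is $\Omega(\frac{1}{\eps}\log\frac{1}{\delta})$.

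Given this, \lemref{CM} applied to Step~\ref{step:choosingMech_OneRandomPath} returns a level $i$ with multiplicity $\Omega(t)$ in $D$ (the additive error of the Choosing Mechanism is absorbed into $\Omega(t)$ for sufficiently large $\lambda$). Since the multiplicity of $i$ in $D$ equals the weight of the sibling of $\pi$'s $(i+1)$-th node, the node on $\pi$ at level $i$ itself has weight $\Omega(t)$, and a second call to the Choosing Mechanism produces $v^*$ at level $i$ of $T_d$ with $w(v^*)=\Omega(t)$. The case analysis then follows \lemref{TreeLogUtility}: when $w(v^*)<|\hat{S}_d|$, one of $v_{\scriptscriptstyle\rm left},v_{\scriptscriptstyle\rm right}$ is an interior point of $\hat{S}_d$; when $w(v^*)=|\hat{S}_d|$, $v^*$ must coincide with the level-$i$ node of $\pi$, whose ``other'' child has weight $\Omega(t)$ and whose ``next'' child has positive weight (since $\pi$ proceeded into it), so both $v_{\scriptscriptstyle\rm inner\text{-}left}$ and $v_{\scriptscriptstyle\rm inner\text{-}right}$ are interior points of $\hat{S}_d$. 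The Exponential Mechanism in the final step preserves the interior-point property up to an additive $O(\frac{1}{\eps}\log\frac{1}{\delta})\le t/2$ loss in depth, and union-bounding the $O(1)$ failure events yields the claimed $1-\delta$ success probability.

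The main obstacle I expect is the coupling argument in the middle paragraph: making the ``agree-or-large-fall-off'' dichotomy quantitative and choosing $\eta$ and $\lambda$ so that the union bound over the $\log|X|$ levels of $\pi$ yields the target $1-\delta$ probability while also guaranteeing, in the ``agree'' branch, that $\pi$ does not terminate before reaching level $\ell^*$. Both are controlled by picking $\eta=\frac{1}{\eps}\log(\log|X|/\delta)$ and insisting that the buffer $t=2k\gg\eta$, which is precisely why the global constant $\lambda$ in the definition $k=\frac{\lambda}{\eps}\log(\log^*|X|/\delta)$ must be taken sufficiently large (in the usual cryptographic regime $\delta=o(1/\log|X|)$ the ratio $\log(\log|X|/\delta)/\log(\log^*|X|/\delta)$ is a universal constant, so $\lambda$ can indeed be global).
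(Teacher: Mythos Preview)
Your overall structure matches the paper's proof sketch: establish that $D$ contains a high-multiplicity element, then follow the \texttt{TreeLog} utility argument through the two Choosing Mechanism calls and the final Exponential Mechanism. However, your coupling argument for the first part is considerably more involved than what the paper does, and it forces an extra parameter assumption that the lemma does not carry.

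The paper needs no probability at all for this step. It is a pure case analysis on the \emph{realized} path $\pi$: either $\pi=\pi_{\rm deter}$ (the heavy path from \texttt{ConstructPaths}), in which case $D$ coincides with $S_{d+1}$ and the hypothesis $f_{d+1}(S)\geq\Omega(\tfrac{1}{\eps}\log\tfrac{1}{\delta})$ directly supplies a high-multiplicity element; or $\pi\neq\pi_{\rm deter}$, in which case take $v$ to be the first node where they diverge. Since $\pi_{\rm deter}$ always proceeds to the heavier child, the child that $\pi$ does \emph{not} take at $v$ is the heavier one, of weight at least $w(v)/2>t/2$. The fall-off from $\pi$ at $v$ is therefore already $\Omega(t)$, so $D$ has a high-multiplicity element regardless of what $\pi$ does afterwards. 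No auxiliary threshold $\eta$, no union bound over $\log|X|$ levels, and no assumption relating $\delta$ to $\log|X|$.

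Your route forces you to compare $\eta=\tfrac{1}{\eps}\log(\log|X|/\delta)$ against $k=\tfrac{\lambda}{\eps}\log(\log^*|X|/\delta)$ and to require $k\gtrsim\eta$. Since $\log|X|\gg\log^*|X|$, this inequality fails unless $\log(1/\delta)$ dominates both logarithms, which is precisely your ``cryptographic regime'' caveat $\delta=o(1/\log|X|)$. That assumption is not in the lemma statement, and the paper's argument does not need it, so as written your proof has a gap in the general case. The fix is simply to drop the probabilistic coupling and use the deterministic case split above; the remainder of your argument (the second Choosing Mechanism, the $w(v^*)<|\hat S_d|$ versus $w(v^*)=|\hat S_d|$ case analysis, and the Exponential Mechanism) is fine and matches the paper.
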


\begin{proof}[Proof Sketch]
The proof is almost identical to the utility analysis of algorithm \texttt{TreeLog}, with the following exception. As we next explain, the assumption that $f_{d+1}(S)\geq\Omega\left(\frac{1}{\eps}\log\frac{1}{\delta}\right)$ %
means that in the database $D$ (constructed in Step~4) there is at least one element with large multiplicity. Hence, instead of applying the recursion on $D$ (as we did in algorithm \texttt{TreeLog}), we could use the Choosing Mechanism (Lemma \ref{lem:CM} with confidence $\beta =\delta$) in order to identify such an element, which would be an interior point of $D$ with large score.

The assumption that $f_{d+1}(S)\geq\Omega\left(\frac{1}{\eps}\log\frac{1}{\delta}\right)$ %
means that if we were to generate the path $\pi$ {\em deterministically} (as is done in algorithm \texttt{ConstructPaths}) then the resulting database $D$ would be guaranteed to contain at least one element with large multiplicity. Let $\pi_{\rm deter}$ denote the path that would be obtained by following the heaviest nodes deterministically. We have established that if $\pi=\pi_{\rm deter}$, then $D$ contains an element with large multiplicity. Otherwise, if $\pi\neq\pi_{\rm deter}$, then $\pi$ ``breaks off'' from the heaviest path at some node $v$. Since the path $\pi_{\rm deter}$ proceeds to a different child of $v$ than $\pi$, we have that ``a lot'' of elements fall off of the path $\pi$ at the node $v$. Therefore, in that case we again get that the resulting database $D$ contains at least one element with large multiplicity.
\end{proof}

\begin{algorithm*}[!htp]

\caption{\bf \texttt{LevelUp}}\label{alg:LevelUp}

{\bf Input:} Parameters $\eps,\delta$ and a database $S\in X^n$ where $X$ is a totally ordered domain. We assume that $|X|$ is a power of 2.

{\bf Additional inputs:} Parameter $t=2k=\frac{2\lambda}{\eps}\cdot\log\left(\frac{1}{\delta}\cdot\log^*|X|\right)$, where $\lambda$ is a global constant, the current recursive level $d\in[\log^*|X|]$, and a level $i$ in the tree $T_d$ (which is an interior point in $S_{d+1}$).

\begin{enumerate}[leftmargin=15pt,rightmargin=10pt,itemsep=1pt,topsep=1.5pt]

\item Apply algorithm \texttt{ConstructPaths} on $S$ with the parameter $t$ and let $(X_{d},S_{d},T_{d},\pi_{d})$ denote its $d$th output.

\item Sort $S_{d}$ and let $\hat{S}_{d}$ be a database containing all elements of $S_{d}$ except for the $t$ largest and $t$ smallest elements.

\item A leaf $u$ of the tree $T_{d}$ has weight $w(u)=|\{x\in \hat{S_{d}}: x=u\}|$. A node $v$ has weight $w(v)$ that equals the sum of the weights of its children. Use the Choosing Mechanism to choose a node $v^*$ at level $i$ of $T_{d}$ with large weight $w(v^*)$.

\item Let $v_{\scriptscriptstyle\rm left}$ and $v_{\scriptscriptstyle\rm right}$ be the left-most and right-most leaves, respectively, of the sub-tree rooted at $v^*$. Also let $v_{\scriptscriptstyle\rm inner\text{-}left}$ be the right-most leaf of the sub-tree rooted at the left child of $v^*$, and let $v_{\scriptscriptstyle\rm inner\text{-}right}$ be the left-most leaf of the sub-tree rooted at the right child of $v^*$.

\item Use the Exponential Mechanism to return $y\in\left\{v_{\scriptscriptstyle\rm left}, v_{\scriptscriptstyle\rm right}, v_{\scriptscriptstyle\rm inner\text{-}left}, v_{\scriptscriptstyle\rm inner\text{-}right} \right\}$ with large quality
$$q(S_{d},y)=\min\left\{\;|\{x\in S_{d}: x\leq y\}|,\; |\{x\in S_{d}: x\geq y\}|\;\right\}.$$

\end{enumerate}
\end{algorithm*}

We now proceed with the analysis of algorithm \texttt{LevelUp}. Recall that this algorithm gets a recursive level $d$ and a level $i$
in $T_d$ which is an interior point of the $(d+1)$th database that algorithm \texttt{ConstructPaths} generates, and translates $i$ into an interior point of the $d$th database that \texttt{ConstructPaths} generates. The following lemma specifies the utility properties of the algorithm.

\begin{lemma}\label{lem:LevelUpUtility}
Let $t=2k=\frac{2\lambda}{\eps}\cdot\log\left(\frac{1}{\delta}\cdot\log^*|X|\right)$, and let $S\in X^n$ be a database.
Let $\{(X_{d},S_{d},T_{d},\pi_{d})\}_{d=1}^{\log^*|X|}$ denote the outcomes of the execution of \texttt{ConstructPaths} with parameter $t$ on $S$.
Let algorithm \texttt{LevelUp} be executed on $S$ with a parameters $t,d,i$ such that $i$ is an interior point of the database $S_{d+1}$ with depth at least $\Omega\left(\frac{1}{\eps}\log\frac{1}{\delta}\right)$. 
Then, with probability at least $1-\delta$, the algorithm returns an interior point of $S_{d}$ with depth at least $\Omega(t)$. %
\end{lemma}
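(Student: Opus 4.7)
The plan is to mimic the utility argument for algorithm \texttt{TreeLog} (Lemma~\ref{lem:TreeLogUtility}), specialized to a single level-up step with the hypothesis on $i$ playing the role of the recursive call's output. By construction of \texttt{ConstructPaths}, the database $S_{d+1}$ records, for each element of $\hat{S}_d$, the level of $T_d$ at which that element falls off the path $\pi_d$. Hence the assumption that $i$ is an interior point of $S_{d+1}$ with depth $\Omega(\frac{1}{\eps}\log\frac{1}{\delta})$ translates directly into: at least $\Omega(\frac{1}{\eps}\log\frac{1}{\delta})$ elements of $\hat{S}_d$ fall off $\pi_d$ at some level $\leq i$, and at least that many at some level $\geq i$. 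Letting $v_\pi^i$ denote the node of $\pi_d$ at level $i$ (well-defined since elements fall off at levels $\geq i$), the latter group lies in the subtree rooted at $v_\pi^i$, so $w(v_\pi^i) \geq \Omega(\frac{1}{\eps}\log\frac{1}{\delta})$. Applying the Choosing Mechanism's utility guarantee (Lemma~\ref{lem:CM}) with $\beta = \delta$ then yields $w(v^*) \geq \Omega(t)$ with probability at least $1-\delta$.

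Next, I would case-split on $w(v^*)$, exactly as in the proof of Lemma~\ref{lem:TreeLogUtility}. If $w(v^*) < |\hat{S}_d|$, some element of $\hat{S}_d$ lies outside the subtree of $v^*$ and is therefore either strictly less than $v_{\scriptscriptstyle\rm left}$ or strictly greater than $v_{\scriptscriptstyle\rm right}$; combined with the $\Omega(t)$ elements inside the subtree, one of $v_{\scriptscriptstyle\rm left}$ or $v_{\scriptscriptstyle\rm right}$ is an interior point of $\hat{S}_d$. If instead $w(v^*) = |\hat{S}_d|$, then every other level-$i$ node of $T_d$ has weight $0$; since \texttt{ConstructPaths} routes $\pi_d$ to the heavier child at each step, $\pi_d$ must pass through $v^*$, and hence $v^* = v_\pi^i$. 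No element of $\hat{S}_d$ can fall off $\pi_d$ strictly before level $i$ (such an element would lie outside $v^*$'s subtree), so the $\Omega(\frac{1}{\eps}\log\frac{1}{\delta})$ fall-offs at levels $\leq i$ all occur exactly at level $i$, which forces the off-path child $v_{\rm other}$ of $v^*$ to have positive weight; the on-path child $v_{\rm next}$ also has positive weight, as otherwise $v_{\rm other}$ would be the strictly heavier child and $\pi_d$ would have gone to $v_{\rm other}$ instead. Therefore one of $v_{\scriptscriptstyle\rm inner\text{-}left}$ or $v_{\scriptscriptstyle\rm inner\text{-}right}$ is an interior point of $\hat{S}_d$.

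Since $S_d$ is obtained from $\hat{S}_d$ by adjoining the $t$ smallest and $t$ largest elements of $S_d$ (step~\ref{ConstructPaths:takeMiddle} of \texttt{ConstructPaths}), any interior point of $\hat{S}_d$ is an interior point of $S_d$ of depth at least $t$. Thus at least one of the four candidate leaves $y$ satisfies $q(S_d, y) \geq t$, and the Exponential Mechanism (Lemma~\ref{prop:exp_mech}) with $\beta = \delta$ returns a candidate of quality at least $t - \frac{2}{\eps}\ln(4/\delta) = \Omega(t)$ with probability at least $1-\delta$ (valid for $\lambda$ sufficiently large, since then $t$ dominates the additive loss). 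A union bound over the Choosing- and Exponential-Mechanism failure events yields overall failure probability $O(\delta)$, absorbable into the constant by rescaling $\delta$.

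The main obstacle is the degenerate sub-case of Case B in which $v^* = v_\pi^i$ is itself a leaf of $T_d$, since then $v_{\scriptscriptstyle\rm inner\text{-}left}$ and $v_{\scriptscriptstyle\rm inner\text{-}right}$ are undefined. This is handled separately by observing that $v_{\scriptscriptstyle\rm left} = v_{\scriptscriptstyle\rm right} = v^*$ and that $v^*$, being a leaf with $w(v^*) = |\hat{S}_d|$, is an element of $X_d$ occurring with multiplicity $|\hat{S}_d|$ in $\hat{S}_d$ and hence a deep interior point of $S_d$; the Exponential Mechanism applied over the (possibly reduced) candidate set then still returns a sufficiently good point.
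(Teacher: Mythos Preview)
Your proposal is correct and follows essentially the same approach as the paper's proof sketch, which likewise reduces to the \texttt{TreeLog} utility argument (Lemma~\ref{lem:TreeLogUtility}) with the Choosing Mechanism at $\beta=\delta/2$ and the Exponential Mechanism at $\beta=\delta/2$ in place of your $\beta=\delta$ plus union bound. One small slip: the Choosing Mechanism only guarantees $w(v^*)\geq\Omega\bigl(\frac{1}{\eps}\log\frac{1}{\delta}\bigr)$, not $\Omega(t)$ (since $t$ carries an extra $\log^*|X|$ inside the logarithm), but this is harmless because the final $\Omega(t)$ depth comes from the $\hat{S}_d\to S_d$ padding, exactly as you argue.
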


\begin{proof}[Proof Sketch]
The lemma follows from similar arguments to those given in Section~\ref{sec:TreeLog}. In more details, suppose that $i$ is an interior point of the database $S_{d+1}$, with depth at least $\Omega\left(\frac{1}{\eps}\log\frac{1}{\delta}\right)$. This means that at least $\Omega\left(\frac{1}{\eps}\log\frac{1}{\delta}\right)$ points from $\hat{S}_{d}$ fall off the path $\pi_{d}$ on or before level $i$ of the tree $T_{d}$, and at least $\Omega\left(\frac{1}{\eps}\log\frac{1}{\delta}\right)$ points from $\hat{S}_{d}$ fall off the path $\pi_{d}$ on or after level $i$.

Let $v_{\pi_{d}}^{i}$ denote the node of $\pi_{d}$ at level $i$ of the tree. Since at least $\Omega\left(\frac{1}{\eps}\log\frac{1}{\delta}\right)$ points from $\hat{S}_{d}$ fall off from $\pi_{d}$ on or after level $i$, we get that the weight of $v_{\pi_{d}}^{i}$ is at least $\Omega\left(\frac{1}{\eps}\log\frac{1}{\delta}\right)$. Hence, by the properties of the Choosing Mechanism (Lemma \ref{lem:CM} with $\beta = \delta/2$), the node $v^*$ identified in Step~3 has weight at least $\Omega\left(\frac{1}{\eps}\log\frac{1}{\delta}\right)$ with probability at least $(1-\delta/2)$.
In that case, similarly to the analysis in Section~\ref{sec:TreeLog}, at least one of $\left\{v_{\scriptscriptstyle\rm left}, v_{\scriptscriptstyle\rm right}, v_{\scriptscriptstyle\rm inner\text{-}left}, v_{\scriptscriptstyle\rm inner\text{-}right} \right\}$ is an interior point of $\hat{S}_{d}$, and hence, an interior point of $S_{d}$ with depth at least $\Omega(t)$.
It follows that the Exponential Mechanism (Lemma \ref{prop:exp_mech} with $\beta=\delta/2$) identifies such an interior point of $S_{d}$ with depth at least $\Omega(t)$   with probability at least $(1-\delta/2)$.

\end{proof}

The following lemma specifies the privacy properties of algorithm \texttt{LevelUp}.

\begin{lemma}\label{lem:LevelUpPrivacy}
Fix two neighboring databases $S,S'\in X^n$, and consider running algorithm \texttt{LevelUp} on $S$ and on $S'$ with parameter $d<d^*(S)$. Then for every set of outcomes $T$ we have
$$
\Pr[\texttt{LevelUp}(S)\in T]\leq e^{2\eps}\cdot\Pr[\texttt{LevelUp}(S')\in T]+2\delta.
$$
\end{lemma}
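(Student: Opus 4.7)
The plan is to reduce the analysis to composition of the Choosing Mechanism and the Exponential Mechanism applied to a neighboring pair of databases. Step~1 of \texttt{LevelUp} invokes the deterministic routine \texttt{ConstructPaths} on $S$ (respectively on $S'$), producing the $d$th level database $S_d$ (resp.\ $S'_d$) together with the fixed-shape complete binary tree $T_d$. Because $d < d^*(S)$, Lemma~\ref{lem:ConstructPathsNeighboring} immediately gives that $S_d$ and $S'_d$ are neighboring databases. Step~2 then trims the top and bottom $t$ elements of $S_d$; by the same observation used in Section~\ref{sec:privacy_analysis} (trimming preserves the neighboring relation), the resulting multisets $\hat{S}_d$ and $\hat{S}'_d$ remain neighboring, so the weights on $T_d$ in the two executions differ by at most $1$ at each node.

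The remainder of the algorithm (Steps~3--5) touches the data only through two differentially private mechanisms. The Choosing Mechanism in Step~3 picks a node $v^*$ at the fixed level $i$ using the quality function $q(\hat{S}_d, v) = w(v)$, which is $1$-bounded-growth because incrementing the multiplicity of any single point in $\hat{S}_d$ changes the weight of exactly one node at level $i$; hence this step is $(\eps, \delta)$-differentially private by Lemma~\ref{lem:CM}. The Exponential Mechanism in Step~5 picks $y \in \{v_{\scriptscriptstyle\rm left}, v_{\scriptscriptstyle\rm right}, v_{\scriptscriptstyle\rm inner\text{-}left}, v_{\scriptscriptstyle\rm inner\text{-}right}\}$ according to the stated sensitivity-$1$ quality function, and is $(\eps, 0)$-differentially private by Lemma~\ref{prop:exp_mech}. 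Applying simple sequential composition to these two mechanisms (viewed as a randomized map on the neighboring pair $\hat{S}_d, \hat{S}'_d$), and then post-processing to the returned leaf $y$, yields $(2\eps, \delta)$-differential privacy, which is stronger than the claimed $(2\eps, 2\delta)$-guarantee.

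The only real subtlety — and the step where the hypothesis $d < d^*(S)$ enters — is the appeal to Lemma~\ref{lem:ConstructPathsNeighboring}: without it, the deterministic paths computed by \texttt{ConstructPaths} on $S$ and $S'$ could diverge at some early branching node, in which case $S_d$ and $S'_d$ need not be neighboring and no purely local composition argument would close the gap. Once that lemma is in hand, everything else is textbook composition and presents no real obstacle.
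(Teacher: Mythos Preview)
Your proposal is correct and follows essentially the same approach as the paper's proof: invoke Lemma~\ref{lem:ConstructPathsNeighboring} (using $d<d^*(S)$) to get $S_d,S'_d$ neighboring, note that trimming preserves this, and then compose the Choosing Mechanism with the Exponential Mechanism. One tiny imprecision: Step~5 of \texttt{LevelUp} applies the Exponential Mechanism to $S_d$ rather than $\hat S_d$, but since both pairs are neighboring this changes nothing; your observation that the composition actually yields $(2\eps,\delta)$ rather than $(2\eps,2\delta)$ is also correct.
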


\begin{proof}
Let $S_{d}$ and $S'_{d}$ denote the databases defined in Step~1 of the executions of \texttt{LevelUp} on $S$ and on $S'$, respectively. Since $d<d^*(S)$, by Lemma~\ref{lem:ConstructPathsNeighboring} we have that $S_{d}$ and $S'_{d}$ are neighboring databases. Therefore, the databases $\hat{S}_{d}$ and $\hat{S}'_{d}$ defined in Step~2 of the executions are also neighboring. We then access one of these neighboring databases using the Choosing Mechanism (in Step~3) and using the Exponential Mechanism (in Step~5), and hence, Lemma~\ref{lem:LevelUpPrivacy} follows from the privacy properties of these two private mechanisms (and from composition). %
\end{proof}

\begin{algorithm*}[!htp]

\caption{\bf \texttt{HeavyPaths}}\label{alg:HeavyPaths}

{\bf Input:} Parameters $\eps,\delta$ and a database $S\in X^n$ where $X$ is a totally ordered domain. We assume that $|X|$ is a power of 2.

\begin{enumerate}[leftmargin=15pt,rightmargin=10pt,itemsep=1pt,topsep=1.5pt]

\item Let $t=2k=\frac{2\lambda}{\eps}\cdot\log\left(\frac{1}{\delta}\cdot\log^*|X|\right)$, where $\lambda$ is a global constant. Apply algorithm \texttt{StoppingPoint} on $S$ with parameter $t$ and let $d^*$ denote its output.

\item Run algorithm \texttt{OneRandomPath} on $S$ with parameters
$t$ and $d=d^*-1$. Let $y_{d^*-1}$ denote its outcome.

\item For $d=d^*-2$ down to $1$ do:
\begin{enumerate}[topsep=0pt]
	\item Run algorithm \texttt{LevelUp} on $S$ with parameters $d$ and $t$, and with the point $y_{d+1}$. Denote the outcome as $y_{d}$.
\end{enumerate}

\item Return $y_1$.

\end{enumerate}
\end{algorithm*}

We are now ready to present and analyze algorithm \texttt{HeavyPaths} that runs algorithms \texttt{StoppingPoint}, \texttt{OneRandomPath}, and \texttt{LevelUp} in order to (privately) identify an interior point of its input database. The following lemma specifies the utility guarantees of the algorithm.

\begin{lemma}\label{lem:HeavyPathsUtility}
Let algorithm \texttt{HeavyPaths} be executed on a database $S\in X^n$ of size $$n=\Omega\left(\frac{\log^*|X|}{\eps}\cdot\log\left(\frac{1}{\delta}\cdot\log^*|X|\right)\right).$$ Then the algorithm returns an interior point of $S$ with probability at least $1-O\left(\delta\cdot\log^*|X|\right)$.
\end{lemma}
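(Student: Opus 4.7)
The plan is to chain together the three utility lemmas already established (Lemmas~\ref{lem:StoppingPointUtility}, \ref{lem:OneRandomPath_Utility}, and~\ref{lem:LevelUpUtility}) along the backbone of Algorithm \texttt{HeavyPaths}, and to close with a union bound over the (at most) $\log^*|X|$ calls. The only conceptual work is verifying that the outcome of each stage satisfies the precondition needed for the next stage, and that a ``depth $\Omega(t)$ interior point of $S_1$'' is the same as an interior point of the original input $S$.

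First I would invoke Lemma~\ref{lem:StoppingPointUtility} on the input $S$: since $n = \Omega\left(\frac{\log^*|X|}{\eps} \cdot \log\left(\frac{1}{\delta}\cdot\log^*|X|\right)\right)$, the quantity $d^*(S)$ is well defined, and with probability at least $1-\delta$ the output $d^*$ of \texttt{StoppingPoint} satisfies $d^* \leq d^*(S)$ and $f_{d^*}(S) \geq \Omega\!\left(\frac{1}{\eps}\log\frac{1}{\delta \cdot \log^* |X|}\right)$. In particular $d^* - 1 < d^*(S)$, so Lemma~\ref{lem:ConstructPathsNeighboring} applies at every level $d \leq d^*-1$, and the precondition $f_{(d^*-1)+1}(S) \geq \Omega\!\left(\frac{1}{\eps}\log\frac{1}{\delta}\right)$ of Lemma~\ref{lem:OneRandomPath_Utility} is met. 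Applying that lemma, with probability at least $1 - \delta$ the point $y_{d^*-1}$ is an interior point of $S_{d^*-1}$ with depth at least $\Omega(t)$.

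Next I would walk up the recursion by iterating Lemma~\ref{lem:LevelUpUtility}. Suppose inductively that $y_{d+1}$ is an interior point of $S_{d+1}$ with depth $\Omega(t) = \Omega\!\left(\frac{1}{\eps}\log\frac{1}{\delta}\right)$ for some $d < d^*-1$. Lemma~\ref{lem:LevelUpUtility} then guarantees that, with probability at least $1-\delta$, the call $\texttt{LevelUp}(S; t, d, y_{d+1})$ returns a point $y_d$ which is an interior point of $S_d$ with depth $\Omega(t)$. After at most $d^* - 2 \leq \log^*|X|$ such invocations, this yields $y_1$, an interior point of $S_1$ with depth $\Omega(t) \geq 1$. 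Because \texttt{ConstructPaths} outputs the untouched input at its first level, $S_1 = S$, so $y_1$ is in particular an interior point of $S$ itself.

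Finally I would collect the failure probabilities by a union bound: one failure event of probability at most $\delta$ from \texttt{StoppingPoint}, one from \texttt{OneRandomPath}, and at most $\log^*|X|$ from the iterated calls to \texttt{LevelUp}, giving total failure probability $O(\delta \cdot \log^*|X|)$ as claimed. The only mild obstacle in this plan is bookkeeping: one must check that the $\Omega(t)$-depth guarantee produced by each lemma is strong enough to feed the next lemma, which boils down to choosing the hidden constants in $t = \Theta\!\left(\frac{1}{\eps}\log\frac{1}{\delta}\right)$ and in ``$\Omega(t)$'' consistently across Lemmas~\ref{lem:OneRandomPath_Utility} and~\ref{lem:LevelUpUtility}; this is routine since each lemma's required depth is $\Omega\!\left(\frac{1}{\eps}\log\frac{1}{\delta}\right)$ and each lemma delivers depth at least $\Omega(t)$ on that same scale. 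No further technical difficulties arise.
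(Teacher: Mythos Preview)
Your proposal is correct and follows essentially the same approach as the paper: invoke Lemma~\ref{lem:StoppingPointUtility} for \texttt{StoppingPoint}, then Lemma~\ref{lem:OneRandomPath_Utility} for \texttt{OneRandomPath}, then induct upward via Lemma~\ref{lem:LevelUpUtility}, and finish with a union bound together with the observation that $S_1=S$. The only superfluous step is your appeal to Lemma~\ref{lem:ConstructPathsNeighboring}, which concerns neighboring databases and is relevant to privacy rather than utility; you may simply drop it.
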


\begin{proof}
Let $\{(X_{d},S_{d},T_{d},\pi_{d})\}_{d=1}^{\log^*|X|}$ denote the outcomes of the execution of \texttt{ConstructPaths} with parameter $t$ on $S$.
First, by Lemma~\ref{lem:StoppingPointUtility}, with probability at least $(1-\delta)$ we have that the value $d^*$ computed using algorithm \texttt{StoppingPoint} is such that $f_{d^*}(S)\geq\Omega\left(k\right)$. 
Then, by Lemma~\ref{lem:OneRandomPath_Utility}, with probability at least $(1-\delta)$, the point $y_{d^*-1}$ computed by algorithm \texttt{OneRandomPath} is an interior point of $S_{d^*-1}$ with depth at least $\Omega(t)$. %
Therefore, by induction using Lemma~\ref{lem:LevelUpUtility}, with probability at least $(1-\delta\cdot\log^*|X|)$, for every $1\leq d\leq d^*-2$ we have that $y_d$ (computed by algorithm \texttt{LevelUp}) is an interior point of the database $S_{d}$ with depth at least $\Omega(t)$. %
This concludes the proof as $S_1=S$ and therefore $y_1$ is an interior point of $S$.
\end{proof}

We now proceed with the privacy analysis of algorithm \texttt{HeavyPaths}. For two random variables $Z_0,Z_1$ we write $Z_0\approx_{(\eps,\delta)}Z_1$ to mean that for any event $T$ and for any $b\in\{0,1\}$ it holds that $\Pr[Z_b\in T]\leq e^{\eps}\cdot\Pr[Z_{1-b}\in T]+\delta$.

\begin{lemma}\label{lem:HeavyPathsPrivacy}
Algorithm \texttt{HeavyPaths} is $\left(\bar{\eps},\bar{\delta}\right)$-differentially private for
\begin{align*}
\bar{\eps}&=O\left(\eps\cdot\sqrt{\log^*|X|\cdot\log\frac{1}{\delta\cdot\log^*|X|}}+\eps^2\cdot\log^*|X|+\eps\cdot\log n\right),\\
\bar{\delta}&=O\left(\delta\cdot\left(n+\log^*|X|\right)\cdot e^{\bar{\eps}}\right).
\end{align*}
\end{lemma}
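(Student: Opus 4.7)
The plan is to bound the privacy loss of \texttt{HeavyPaths} by composing the guarantees of its three subroutines, after first isolating the fact that the privacy statements for \texttt{OneRandomPath} (Lemma~\ref{lem:OneRandomPath_Privacy}) and \texttt{LevelUp} (Lemma~\ref{lem:LevelUpPrivacy}) only hold under the side condition that the recursion depth parameter satisfies $d<d^*(S)$. I would first invoke Lemma~\ref{lem:StoppingPointUtility} to argue that, with probability at least $1-\delta$, the value $d^*$ returned by \texttt{StoppingPoint}$(S)$ satisfies $d^*\leq d^*(S)$, and analogously for $S'$. Conditioned on this good event, every subsequent call to \texttt{OneRandomPath} (invoked with $d=d^*-1$) and to \texttt{LevelUp} (invoked with $d\leq d^*-2$) is made at a recursion depth strictly less than $d^*(S)$, so both lemmas apply verbatim.

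I would then compose the per-subroutine contributions: \texttt{StoppingPoint} contributes $(\eps,\delta)$ by Lemma~\ref{lem:StoppingPointPrivacy}; the single invocation of \texttt{OneRandomPath} contributes $(6\eps\log n,\;4\delta n\, e^{4\eps\log n})$ by Lemma~\ref{lem:OneRandomPath_Privacy}; and each of the at most $\log^*|X|$ invocations of \texttt{LevelUp} contributes $(2\eps,\,2\delta)$ by Lemma~\ref{lem:LevelUpPrivacy}. For the \texttt{LevelUp} stream I would apply the advanced adaptive composition theorem of \cite{DRV10} with $k=\log^*|X|$, obtaining privacy roughly $\bigl(O(\eps\sqrt{k\log(1/\delta')})+O(\eps^2 k),\;O(\delta k)+\delta'\bigr)$; choosing $\delta'\approx \delta\cdot\log^*|X|$ optimizes the trade-off and produces exactly the $\sqrt{\log^*|X|\cdot\log(1/(\delta\log^*|X|))}$ factor appearing in the statement. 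Simple composition of the three resulting guarantees then delivers $\bar\eps$, and $\bar\delta$ collects the failure contributions $\delta$ (from \texttt{StoppingPoint}), $4\delta n\, e^{4\eps\log n}$ (from \texttt{OneRandomPath}), $O(\delta\log^*|X|)$ (from the \texttt{LevelUp} composition), and the extra $O(\delta)$ from conditioning on the good event, giving the claimed $O\bigl(\delta(n+\log^*|X|)e^{\bar\eps}\bigr)$ bound once the $e^{O(\bar\eps)}$ factor is absorbed.

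The main obstacle is the conditional nature of Lemmas~\ref{lem:OneRandomPath_Privacy} and~\ref{lem:LevelUpPrivacy}: their privacy bounds presume $d<d^*(S)$, yet $d^*$ is itself computed from the sensitive input and may differ between $S$ and $S'$. The cleanest way I see to resolve this is to introduce an idealized variant of \texttt{HeavyPaths} that aborts whenever the privately computed $d^*$ exceeds $d^*(S)$, establish $(\bar\eps,\bar\delta-O(\delta))$-differential privacy of this ideal variant by direct composition as above (since inside the ideal execution all depth parameters are provably below $d^*(S)$), and then couple the real and ideal executions so that they agree except on an event of probability at most $\delta$ under either of $S$ or $S'$, by Lemma~\ref{lem:StoppingPointUtility}. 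This coupling contributes at most an additive $O(\delta)$ to $\bar\delta$ and leaves $\bar\eps$ unchanged, completing the argument.
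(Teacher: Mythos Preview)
Your proposal is correct and follows essentially the same line as the paper: both isolate the good event $d^*\le d^*(S)$ (via Lemma~\ref{lem:StoppingPointUtility}), and on that event apply advanced composition to the at most $\log^*|X|$ calls to \texttt{LevelUp} together with simple composition for \texttt{StoppingPoint} and the single call to \texttt{OneRandomPath}. The paper carries this out as an explicit sum over the possible values of $d^*$ (defining $\BBB_d$ to be Steps~2--3 with $d^*=d$ fixed and bounding $\Pr[d^*=d]$ via Lemma~\ref{lem:StoppingPointPrivacy}) rather than via your idealized abort-variant plus coupling, but the two presentations are equivalent.
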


\begin{proof}
Let $\BBB_{d}$ denote an algorithm consisting of steps 2 and 3 of algorithm \texttt{HeavyPaths}, with $d$ as the parameter from Step~1.

Fix two neighboring databases $S,S'\in X^n$. For every $d\leq d^*(S)$, by Lemmas~\ref{lem:LevelUpPrivacy} and~\ref{lem:OneRandomPath_Privacy} we have that $\texttt{LevelUp}(S)\approx_{(2\eps,2\delta)}\texttt{LevelUp}(S')$, and that
$\texttt{OneRandomPath}(S)\approx_{(\eps',\delta')}\texttt{OneRandomPath}(S')$ for
$\eps'=6\eps\cdot\log n$ and $\delta'=4\delta n\cdot e^{4\eps\cdot\log n}$. 
Therefore, applying composition theorems for differential privacy (see~\cite{DRV10}; we assume that $\eps\leq1$) to the execution of \texttt{OneRandomPath} and the (at most) $\log^*|X|$ executions of \texttt{LevelUp}, we get that $\BBB_{d}(S)\approx_{(\hat{\eps},\hat{\delta})}\BBB_{d}(S')$, for 
\begin{align*}
\hat{\delta}&=O\left(\delta\cdot\log^*|X|+\delta n\cdot e^{4\eps\log n}\right), \quad\text{ and }\\
\hat{\eps}&=O\left(\eps\cdot\sqrt{\log^*|X|\cdot\log\frac{1}{\delta\cdot\log^*|X|}}+\eps^2\cdot\log^*|X|+\eps\cdot\log n\right).
\end{align*}
Consider the executions of \texttt{HeavyPaths} on $S$ and on $S'$, and let $d^*$ and ${d^*}'$ denote the values obtained in Step~1 of the execution on $S$ and on $S'$, respectively.
By Lemmas~\ref{lem:StoppingPointUtility} and~\ref{lem:StoppingPointPrivacy}, for every set of outcomes $T$ we have that
\begin{align*}
\Pr[\texttt{HeavyPaths}(S)\in T] &\leq
\Pr[d^*>d^*(S)] + \sum_{d\leq d^*(S)}\Pr[d^*=d]\cdot\Pr[\texttt{HeavyPaths}(S)\in T | d^*=d]\\
&=\Pr[d^*>d^*(S)] + \sum_{d\leq d^*(S)}\Pr[d^*=d]\cdot\Pr[\BBB_{d}(S)\in T]\\
&\leq\delta + \sum_{d\leq d^*(S)}\left(e^{\eps}\cdot\Pr[{d^*}'=d]+\delta\right)\cdot\left(e^{\hat{\eps}}\cdot\Pr[\BBB_{d}(S')\in T]+\hat{\delta}\right)\\
&\leq4\hat{\delta}\cdot e^{\hat{\eps}}\cdot\log^*|X| + e^{\hat{\eps}+\eps}\cdot\sum_{d\leq d^*(S)}\Pr[{d^*}'=d]\cdot\Pr[\BBB_{d}(S')\in T]\\
&\leq4\hat{\delta}\cdot e^{\hat{\eps}}\cdot\log^*|X| + e^{\hat{\eps}+\eps}\cdot\Pr[\texttt{HeavyPaths}(S')\in T].
\end{align*}
\end{proof}

Theorem~\ref{thm:mainAdvanced} now follows by combining Lemma~\ref{lem:HeavyPathsUtility} and Lemma~\ref{lem:HeavyPathsPrivacy}.

\section*{Acknowledgements}
We thank Kobbi Nissim and Guy Rothblum for stimulating discussions.

\bibliographystyle{abbrv}

\begin{thebibliography}{10}

\bibitem{AlonLMM19}
N.~Alon, R.~Livni, M.~Malliaris, and S.~Moran.
\newblock Private {PAC} learning implies finite littlestone dimension.
\newblock In {\em Proceedings of the 51st Annual {ACM} {SIGACT} Symposium on
  Theory of Computing, {STOC} 2019, Phoenix, AZ, USA, June 23-26, 2019.}, pages
  852--860, 2019.

\bibitem{BNS13b}
A.~Beimel, K.~Nissim, and U.~Stemmer.
\newblock Private learning and sanitization: Pure vs. approximate differential
  privacy.
\newblock In {\em APPROX-RANDOM}, volume 8096 of {\em Lecture Notes in Computer
  Science}, pages 363--378. Springer, 2013.

\bibitem{BlumerEHW87}
A.~Blumer, A.~Ehrenfeucht, D.~Haussler, and M.~K. Warmuth.
\newblock Occam's razor.
\newblock {\em Inf. Process. Lett.}, 24(6):377--380, 1987.

\bibitem{BunDRS18}
M.~Bun, C.~Dwork, G.~N. Rothblum, and T.~Steinke.
\newblock Composable and versatile privacy via truncated cdp.
\newblock In {\em Proceedings of the 50th Annual ACM SIGACT Symposium on Theory
  of Computing}, STOC 2018, pages 74--86, New York, NY, USA, 2018. ACM.

\bibitem{BNSV15}
M.~Bun, K.~Nissim, U.~Stemmer, and S.~P. Vadhan.
\newblock Differentially private release and learning of threshold functions.
\newblock In {\em {FOCS}}, pages 634--649, 2015.

\bibitem{DKMMN06}
C.~Dwork, K.~Kenthapadi, F.~McSherry, I.~Mironov, and M.~Naor.
\newblock Our data, ourselves: Privacy via distributed noise generation.
\newblock In S.~Vaudenay, editor, {\em EUROCRYPT}, volume 4004 of {\em Lecture
  Notes in Computer Science}, pages 486--503. Springer, 2006.

\bibitem{DMNS06}
C.~Dwork, F.~McSherry, K.~Nissim, and A.~D. Smith.
\newblock Calibrating noise to sensitivity in private data analysis.
\newblock In {\em Theory of Cryptography, Third Theory of Cryptography
  Conference}, pages 265--284, 2006.

\bibitem{DNRRV09}
C.~Dwork, M.~Naor, O.~Reingold, G.~N. Rothblum, and S.~P. Vadhan.
\newblock On the complexity of differentially private data release: efficient
  algorithms and hardness results.
\newblock In M.~Mitzenmacher, editor, {\em STOC}, pages 381--390. ACM, 2009.

\bibitem{DR14}
C.~Dwork and A.~Roth.
\newblock The algorithmic foundations of differential privacy.
\newblock {\em Foundations and Trends in Theoretical Computer Science},
  9(3-4):211--407, 2014.

\bibitem{DRV10}
C.~Dwork, G.~N. Rothblum, and S.~P. Vadhan.
\newblock Boosting and differential privacy.
\newblock In {\em FOCS}, pages 51--60. IEEE Computer Society, 2010.

\bibitem{FX14}
V.~Feldman and D.~Xiao.
\newblock Sample complexity bounds on differentially private learning via
  communication complexity.
\newblock {\em {SIAM} J. Comput.}, 44(6):1740--1764, 2015.

\bibitem{KLNRS08}
S.~P. Kasiviswanathan, H.~K. Lee, K.~Nissim, S.~Raskhodnikova, and A.~Smith.
\newblock What can we learn privately?
\newblock {\em SIAM J. Comput.}, 40(3):793--826, 2011.

\bibitem{McSherryTa07}
F.~McSherry and K.~Talwar.
\newblock Mechanism design via differential privacy.
\newblock In {\em Proceedings of the 48th Annual IEEE Symposium on Foundations
  of Computer Science}, FOCS '07, pages 94--103, Washington, DC, USA, 2007.
  IEEE Computer Society.

\bibitem{Vadhan2016}
S.~Vadhan.
\newblock {\em The Complexity of Differential Privacy}.
\newblock 2016.

\end{thebibliography}

\end{document}